\documentclass[reprint,aps,nofootinbib,twocolumn]{revtex4-2}
\usepackage{amsmath}
\usepackage{amsfonts}
\usepackage{amssymb}
\usepackage{amsthm}
\usepackage{mathtools}
\usepackage{graphicx}
\usepackage[dvipsnames]{xcolor}
\usepackage[colorlinks=True,citecolor=gray,linkcolor=myRed,urlcolor=gray,hypertexnames=false]{hyperref}
\usepackage{hypcap}
\usepackage{yfonts}
\usepackage{bm}
\usepackage[normalem]{ulem}
\usepackage{dsfont}
\usepackage{braket}
\usepackage{marginnote}
\usepackage{adjustbox}
\usepackage{tikz}
\usepackage{quantikz}
\usetikzlibrary{shapes, arrows.meta}
\newcommand\cbox[1]{\vcenter{\hbox{#1}}}
\usepackage{kantlipsum}
\newtheorem{theorem}{Theorem}
% \usepackage{subcaption}
% \usepackage{mathptmx}
% \DeclareMathAlphabet{\mathcal}{OMS}{cmsy}{m}{n}

\newcommand\eq[1]{\begin{align}#1\end{align}}

\newcommand\as[1]{{\color{black}{#1}}}
\newcommand\add[1]{{\color{black}{#1}}}

\newcommand\mytitle{Information phases of partial projected ensembles generated from random quantum states and scrambling dynamics}

\definecolor{myBlue}{RGB}{31,119,180}
\definecolor{myOrange}{RGB}{255,127,14}
\definecolor{myGreen}{RGB}{44,160,44}
\definecolor{myRed}{RGB}{214,39,40}
\definecolor{myPurple}{RGB}{148,103,189}

% \makeatletter
% \def\p@figure{\color{myBlue}}
% \def\p@equation{\color{myRed}}
% %\makeatotherxt

\begin{document}

\title{\mytitle}

\author{Alan Sherry}
\email{alan.sherry@icts.res.in}
\affiliation{International Centre for Theoretical Sciences, Tata Institute of Fundamental Research, Bengaluru 560089, India}

\author{Saptarshi Mandal}
\email{saptarshi.mandal@icts.res.in}
\affiliation{International Centre for Theoretical Sciences, Tata Institute of Fundamental Research, Bengaluru 560089, India}

\author{Sthitadhi Roy}
\email{sthitadhi.roy@icts.res.in}
\affiliation{International Centre for Theoretical Sciences, Tata Institute of Fundamental Research, Bengaluru 560089, India}

\begin{abstract}
The projected ensemble -- an ensemble of pure states on a subsystem conditioned on projective measurement outcomes on its complement -- provides a finer probe of ergodicity and information structure than the reduced density matrix of the subsystem in bipartite quantum states. This framework can be generalised to partial projected ensembles in tripartite settings, where outcomes from part of the measured subsystem are discarded, leading to ensembles of mixed states. We show that information measures defined for such ensembles, in particular the Holevo information, yield a more detailed characterisation of how quantum information is distributed between subsystems compared to conventional entanglement measures. Using exact analytical results supported by numerical results, we uncover a qualitative change in the scaling of the Holevo information with system size in partial projected ensembles generated by Haar-random states, as the relative sizes of the subsystem are varied. In one phase, the Holevo information decays exponentially with system size, while in the other it grows linearly, thereby defining distinct information phases separated by sharp transitions signalled by non-analyticities in the Holevo information. The exponentially decaying phase rigorously establishes the existence of a measurement-invisible quantum-correlated phase -- a manifestation of many-body information scrambling with no bipartite analogue. We contrast this information-phase diagram with the entanglement-phase structure of tripartite Haar-random states obtained from logarithmic negativity, and show that the Holevo information reveals additional fine structure beyond conventional entanglement measures. Finally, we show that these information phases, as characterised by the Holevo information, emerge in the dynamics of chaotic quantum circuits and discuss the associated timescales.
\end{abstract}

\maketitle

\tableofcontents

\section{Introduction\label{sec:intro}}

Any pure quantum state, $\ket{\Psi}$, of $N$ qubits carries a total quantum information which is proportional to $N$.
This can be made precise via the von Neumann information~\cite{nielsen-chaung-book}, which for a state $\rho$ is defined as 
\eq{
I_{\rm vN}(\rho) = \ln({\rm dim}[\rho]) - {\cal S}_{\rm vN}(\rho) \overset{\rho=\ket{\Psi}\bra{\Psi}}{=}N\ln 2\,,
}
where the second equality follows from the fact that the von Neumann entropy, ${\cal S}_{\rm vN}$ of a pure state is identically zero. 
However, the much more fundamental and pertinent question is how this total quantum information is distributed and shared amongst the $N$ qubits.
Often quantified via entanglement measures, this question lies at the heart of several fundamental questions ranging from thermalisation~\cite{nandkishore2015many,dalessio2016from,nahum2017quantum,garrison2018does} and information scrambling~\cite{hayden2007black,sekino2008fast,liu2014entanglement,liu2014entanglement1,hosur2016chaos,xu2019locality} in quantum statistical mechanics to efficient classical simulability and the utility of the entanglement as a resource~\cite{schuch2008entropy,harrow2017nature,boixo2018characterising,arute2019quantum} in quantum information science.

A ubiquitous diagnostic for this structure of quantum information in a state is the entanglement between a subsystem and the rest of the system, quantified by the von Neumann or R\'enyi entropies of the reduced density matrix of the former~\cite{nielsen-chaung-book}.
Beyond this bipartite setting, the entanglement between two disjoint, not necessarily complementary subsystems is often characterised by the logarithmic negativity~\cite{vidal2002computable,plenio2005logarithmic,eisert2006entanglement}; it is an entanglement measure for the mixed state of the two subsystems obtained by tracing out the remaining part of the system.
While such measures have led to fundamental insights about the information content of quantum many-body states, one may argue that these are somewhat coarse measures of the information structure in the states.
The von Neumann or R\'enyi entropies depend solely on the appropriate reduced density matrices, which are in turn obtained by tracing out the complementary subsystems. Physically, this means that all the information of the complementary subsystem is discarded. While the logarithmic negativity does retain some information about the complementary subsystem, it coarse-grains the full density matrix by retaining only the trace norm of its partial transpose.

However, more recently it has been realised that qualitatively more information can be obtained about the statistical properties of the state by retaining some information about the erstwhile traced out `bath' subsystem. 
This is done by performing projective measurements on the `bath' and constructing an ensemble of states on the original subsystem conditioned on the measurement outcomes. 
The ensemble, dubbed the projected ensemble 
manifestly contains qualitatively more information than the reduced density matrix of the subsystem, as the former describes an
entire distribution of states, whereas the latter is just the
first moment of the said distribution~\cite{choi2023preparing,cotler2023emergent}.
For instance, the projected ensemble allows for the generalisation of the ideas of thermalisation to that of {\it deep thermalisation}~\cite{ho2022exact,ippoliti2022solvablemodelofdeep,cotler2023emergent,choi2023preparing,lucas2023freefermiondeepthm,ippolitu2023dynamical,bhore2023deepthmconstrained,varikuti2024unravelingemergence,chan2024pe,mark2024maximum,manna2025peconserved} wherein not just the reduced density matrix of a local subsystem but the entire ensemble is described by a universal, maximum entropy ensemble. 
From an information-theoretic point of view, this is intimately connected to emergence of quantum designs from single wavefunctions~\cite{choi2023preparing,cotler2023emergent}. 

While much of the work on projected ensembles and deep thermalisation has focused on ensembles of pure states, much more recently the relevance of extending these ideas to setting of ensembles of mixed states has been realised~\cite{yu2025mixedstatedeepthermalization,sherry2025mixedstatesexhibitdeep,mandal2025partialprojectedensemblesspatiotemporal}. 
Such situations can naturally arise if the measurement outcomes are partially lost or there is finite mixedness in the state due to state-preparation and measurement errors. 
Interestingly, it was shown that even in such settings a variant of the maximum entropy ensemble continues to emerge~\cite{sherry2025mixedstatesexhibitdeep}.

In light of the above, two fundamental questions arise which we address in this paper.
First, how to develop an information-theoretic framework which exploits the availability of the entire ensemble of states on a subsystem induced by measurements on a disjoint subsystem, and thereby captures the quantum correlations between the two subsystems beyond what conventional entanglement measures can capture.
Second, is there a qualitative change in the behaviour of the information measure as the sizes of the subsystems are varied in the thermodynamic limit, thereby allowing us to define information phases. The answer to this question also sheds insights onto the scales over which the information resides in the state.

In this work, we show that the Holevo information of an ensemble of states on a subsystem conditioned on the measurement outcomes of a disjoint and non-complementary subsystem leads to a finer characterisation of the information structure of the quantum state, compared to conventional entanglement measures.
This constitutes the first main result of this work.
The ensemble so-constructed for a subsystem $R$ can be viewed as a projected ensemble except the measurement outcomes are retained only for {\it part} $S$ of the complement $\overline{R}$ and the outcomes in the remaining subsystem are discarded; this motivates the nomenclature of {\it partial} projected ensemble. 
Given the construction of the ensemble, the Holevo information turns out to be a natural quantity to study as it is the mutual information of the classical-quantum state built out of the measurement outcomes and the conditional states. 

After establishing the Holevo information as the central quantity of interest, we present a detailed analysis of the same for partial projected ensembles generated from Haar-random states, focusing on the setting where all the subsystems are extensive. 
Remarkably, we find that, depending on the relative sizes of the subsystems the Holevo information shows qualitatively different scaling with the number of qubits $N$: in one regime it decays exponentially in $N$ and in the other it grows linearly with $N$.
More importantly, this qualitative change occurs in a non-analytic fashion, thereby establishing concretely the notion of information phases and genuine transitions between them. 
This constitutes the second main result of the work.
For a part of the parameter space, we derive the phase diagram analytically, which is complemented by numerical results in the analytically inaccessible part of the parameter space. 
Interestingly, we find that these two phases of vanishing and volume-law Holevo information respectively lie at the heart of the measurement-invisible quantum-correlated (MIQC) and measurement-visible quantum-correlated (MVQC) phases discovered earlier~\cite{sherry2025miqc}.
In other words, our results for the Holevo information show the fundamental information structure of the states which underpins the MIQC-MVQC phase transition.

With the information-phase diagram for Haar-random states so established through the lens of the Holevo information, a natural question that arises is whether these phases emerge in physically relevant settings of ergodic dynamics.
We answer this question in the affirmative by showing numerically that precisely the same phase diagram emerges dynamically in chaotic quantum circuits built up of $2$-local gates. 
We also analyse the timescales associated to the emergence of the phases and their dependence on the circuit geometry.

The rest of the paper is organised as follows. In Sec.~\ref{sec:ppe} we set up the basic definitions of the partial projected ensemble and its universal properties from the point of view of deep thermalisation. In particular, we lay down rigorously the sufficient conditions for the ensemble to be described by a universal maximum entropy ensemble. In Sec.~\ref{sec:ppe-info} we discuss in detail the relevance of the Holevo information as a {\it bona fide} information measure for partial projected ensembles and their information-theoretic aspects. The computation of the Holevo information for partial projected ensembles generated from Haar-random states constitutes Sec.~\ref{sec:phase-dia}; in this section using the analytical results for the Holevo information for a part of the parameter space (parameters being the relative sizes of the subsystems) we chart out the information phase diagram analytically and complement the phase diagram in the rest of the parameter space with numerical results. 
In Sec.~\ref{sec:circuits}, we report the emergence of the information phases in chaotic dynamics effected by random unitary circuits.
Finally, we close with a summary and outlook in Sec.~\ref{sec:summary}.

\section{Partial projected ensembles and deep thermalisation\label{sec:ppe}}

We start this section by briefly discussing the idea of (partial) measurement-induced (partial) projected ensembles of states defined on a subsystem~\cite{mandal2025partialprojectedensemblesspatiotemporal} and the universal, limiting forms of such ensembles from the point of view of deep thermalisation~\cite{yu2025mixedstatedeepthermalization,sherry2025mixedstatesexhibitdeep}.

\subsection{Defining the (partial) projected ensemble \label{sec:ppe-defs}}

Consider a system of $N$ qubits that is tripartitioned into subsystems $R$, $E$ and $S$.
We will denote the complement of a subsystem by an overline, for instance, $\overline{R} = E\cup S$.
Let the system be in a pure state denoted by $\ket{\Psi}$.
The projected ensemble (PE) on subsystem $R$ is then defined as the ensemble of pure states on $R$ conditioned on measurement outcomes in $\overline{R}$, where the probability of each state given by the Born probability of the corresponding outcome.
For concreteness, we will consider projective measurements on $\overline{R}$ in some local tensor product basis $\{\ket{o_{\overline{R}}}\}$,
where $\{o_{\overline{R}}\}=\{0,1\}^{|\overline{R}|}$ is a set of bitstring measurement outcomes.
The PE can then be formally defined as 
\eq{
{\cal E}_{{\rm PE}_R} \equiv \{p(o_{\overline{R}}),~\ket{\psi_{R}(o_{\overline{R}})}\}_{o_{\overline{R}}}\,,
\label{eq:pe-def}
}
where the probability is given by
\eq{
p(o_{\overline{R}}) = \braket{\Psi|\Pi_{o_{\overline{R}}}|\Psi}\,;\quad \Pi_{o_{\overline{R}}}=\ket{o_{\overline{R}}}\bra{o_{\overline{R}}}\,,
\label{eq:p-orbar}
}
and the conditional state by
\eq{
\frac{\Pi_{o_{\overline{R}}}\ket{\Psi}}{\sqrt{p(o_{\overline{R}})}} = \ket{\psi_R(o_{\overline{R}})}\otimes \ket{o_{\overline{R}}}\,.
\label{eq:psi-orbar}
}
With the PE so defined, its $k^{\rm th}$ moment is defined as 
\eq{
    \rho^{(k)}_{{\rm{PE}}_R}=\sum_{o_{\overline{R}}}p(o_{\overline{R}})\left[\ket{\psi_{R}(o_{\overline{R}})}\bra{\psi_{R}(o_{\overline{R}})}\right]^{\otimes k}\,,
    \label{eq:mom-def-pe}
}
where it is straightforward to see that the first moment ($k=1$) reduces simply to the reduced density matrix of $R$, $\rho^{(1)}_{{\rm{PE}}_R}=\rho_R = {\rm Tr}_{\overline{R}}\ket{\Psi}\bra{\Psi}$.
The higher moments, $k\ge 2$, encode non-linear functionals of the states in the PE and therefore contain information that cannot be reconstructed from $\rho_R$ alone.

Note that the PE is generated over measurement outcomes on the entire subsystem $\overline{R}=E\cup S$. 
However, a more general setting is that the measurement outcomes in a part of $\overline{R}$, say $E$, are lost and the resulting ensemble is over the measurement outcomes only in $S$. This is equivalent to constructing a PE on $R\cup E$ over the measurement outcomes in $S$ and then tracing out $E$ from each member of the ensemble.
This naturally leads to an ensemble of mixed states in $R$ over measurement outcomes only on a part of $\overline{R}$ and hence the name {\it partial projected ensemble} (PPE). 
Formally the PPE is defined as
\eq{
{\cal E}_{{\rm PPE}_R} \equiv \{p(o_S),~\rho_R(o_S)\}_{o_S}\,,
\label{eq:ppe-def}
}
where similar to Eq.~\ref{eq:p-orbar}, the probability is given by 
\eq{
p(o_S) = \bra{\Psi}\Pi_{o_S}\ket{\Psi}\,,
\label{eq:p-os}
}
and the conditional state, now mixed, by
\eq{
\rho_R(o_S) = \frac{{\rm Tr}_{ES}[\Pi_{o_S}\ket{\Psi}\bra{\Psi}\Pi_{o_S}]}{p(o_S)}\,.
\label{eq:rhoR-os}
}
As with the PE (see Eq.~\ref{eq:mom-def-pe}), the moments of the PPE are defined as 
\eq{
    \rho^{(k)}_{{\rm{PPE}}_R}=\sum_{o_S}p(o_S)\left[\rho_R(o_S)\right]^{\otimes k}\,,
    \label{eq:mom-def-ppe}
}
where again $k=1$ trivially corresponds to $\rho_R$ and moments with $k\ge 2$ are the non-trivial moments. 

The interpretation of the PPE on $R$ as the partial trace over $E$ of the PE on $R\cup E$ leads to a simple relation between the moments of the PE on $R\cup E$ and the moments of the PPE on $R$.
The relation is given by
\eq{
\rho_{{\rm{PPE}}_R}^{(k)}={\rm{Tr}}_{{\cal H}_E^{\otimes k}}\left[\rho^{(k)}_{{\rm{PE}}_{R\cup E}}\right]\,,
\label{eq:pe-ppe-mom}
}
where the notation ${\cal H}_E^{\otimes k}$ means the $k$-fold replicated Hilbert space of $E$.
The above relation will be particularly useful to relate the universal, statistical properties of the PE and the PPE as we discuss next.

\subsection{Deep Thermalisation\label{sec:ppe-dt}}

Given that the PEs and the PPEs, by definition, allow access to entire ensembles of states on a subsystem instead of just its RDM, they lead to a fundamental generalisation of the idea of thermalisation under the umbrella of {\it deep thermalisation}~\cite{ho2022exact,ippoliti2022solvablemodelofdeep,cotler2023emergent,choi2023preparing,lucas2023freefermiondeepthm,ippolitu2023dynamical,bhore2023deepthmconstrained,varikuti2024unravelingemergence,chan2024pe,mark2024maximum,manna2025peconserved,bejan2025matchgate,yu2025mixedstatedeepthermalization,sherry2025mixedstatesexhibitdeep}.

The statement of thermalisation is that for a sufficiently ergodic state, such as those obtained from time-evolution with chaotic Hamiltonians, the RDM of a local subsystem $R$ is well described by a Gibbs state,
$\rho_R \sim e^{-\beta H_R}$,
where the temperature $\beta^{-1}$ is set by the average energy of the initial state with respect to the Hamiltonian~\cite{deutsch1991quantum,srednicki1994chaos,rigol2008thermalisation,deutsch2018eigenstate}.
For dynamics which do not conserve energy, such as those effected by Floquet circuits or random unitary circuits, the system heats up to infinite temperature such that $\rho_R\sim \mathbb{I}$~\cite{lazarides2014equilibrium,dalessio2014long}. These limiting thermodynamic states are universal in the sense that they do not depend on the details of the initial state except for macroscopic properties such as energy.

Deep thermalisation extends this idea to the entire ensemble and makes a statement about the form of the universal, limiting ensembles that the PE or the PPE tend towards. 
Specifically, it states that an ensemble is a deep thermal ensemble, if it maximises the ensemble entropy~\cite{mark2024maximum} subject to the constraints or conservation laws it must satisfy. 
In other words, the states in the ensemble must cover the available Hilbert space in a maximally ergodic fashion.

In the absence of conservation laws, a deep thermal PE is described by the Haar ensemble, which, by definition, is the maximum entropy ensemble on a given Hilbert space. For a subsystem $R$ with Hilbert-space dimension $D_R$, the $k^{\rm th}$ moment of the ensemble is given by
\eq{
\rho_{{\rm Haar}_R}^{(k)}=\int_{\psi\sim{\rm Haar}(D_R)}d\psi\left(\ket{\psi}\bra{\psi}\right)^{\otimes k}\,.
\label{eq:haar-mom}
}
Conservation laws modify the deep thermal ensemble by distorting the Haar ensemble to generate the so-called Scrooge ensemble if the measurement basis is orthogonal to the conserved charges. 
More generally, if the RDM is constrained to a specific $\rho_R$, the corresponding Scrooge ensemble is given by~\cite{jozsa1994accessinfo,goldstein2006distribution,lucas2023freefermiondeepthm,chang2024gaussian,chang2025charge,manna2025peconserved,bejan2025matchgate}
\eq{
{\cal E}_{{\rm Scr}_R}[\rho_R] = \left\{D_R\braket{\psi|\rho_R|\psi},~\frac{\sqrt{\rho}\ket{\psi}}{\sqrt{\braket{\psi|\rho_R|\psi}}}\right\}_{\psi\sim {\rm Haar}_R}\,.
\label{eq:scrooge-def}
}
From an information-theoretic point of view, the Scrooge ensemble minimises the accessible information~\cite{jozsa1994accessinfo}, which quantifies the maximum amount of information that can be gained from the measurement of a state from these ensembles~\cite{holevo1973accinfo,preskill2025quantumshannontheory}.

Given these universal forms of the PE, the corresponding universal forms for the PPE follow straightforwardly from the interpretation of the PPE as a partial trace over the PE defined on an augmented system (see, for instance, Eq.~\ref{eq:pe-ppe-mom}).
In the absence of any conservation laws, the PPE is described by the generalised Hilbert-Schmidt ensemble (gHSe), which is simply the partial trace over the states in a Haar ensemble on the augmented system~\cite{hall1998randomquantumcorrelations,karol2001inducedmeasures,bansal2025pseudordm}. More precisely, the gHSe on subsystem $R$ is defined as 
\eq{
{\cal E}_{{\rm gHSe}_R}\equiv\{\mathrm{Tr}_E\left[\ket{\psi}\bra{\psi}\right]:\psi\sim {\rm Haar}_{R\cup E}\}\,.
\label{eq:ghse}}
Note that the gHSe is uniquely specified by just the Hilbert-space dimensions, $D_R$ and $D_E$ of subsystems $R$ and $E$ respectively. 
The $k^{\rm th}$ moment of the gHSe can be expressed as 
\eq{
\rho_{{\rm gHSe}_R}^{(k)} = {\rm Tr}_{{\cal H}_E^{\otimes k}}\left[\rho_{{\rm Haar}_{R\cup E}}^{(k)}\right]\,.
\label{eq:ghse-mom}
}

More generally, if the PPE on $R$ is constrained to a specific first moment $\rho_R$, then the maximum entropy ensemble for the PPE is given by the partial trace over a Scrooge ensemble of pure states (Eq.~\ref{eq:scrooge-def}) defined on an augmented system which is nothing but a purification of $\rho_R$~\cite{sherry2025mixedstatesexhibitdeep}.
However, in this work we will deal with random unitary circuits with no conservation laws, and hence, it is the gHSe in Eq.~\ref{eq:ghse} which is of relevance.

\as{To quantify the distance of the PPE from the gHSe, we will use the ubiquitously used measure of the trace norm of the difference between their $k^{\rm th}$ moments,
\eq{
\Delta_{{\rm gHSe}_R}^{(k)} = \frac{1}{2}\left|\left|\rho_{{\rm PPE}_R}^{(k)} - \rho_{{\rm gHSe}_R}^{(k)}\right|\right|_1\,,
\label{eq:Delta-gHSe}
}
where the trace norm of an operator $O$ is defined as $\Vert O\Vert_1={\rm{Tr}}\left[\sqrt{O^\dagger O}\right]$.} The relation in Eq.~\ref{eq:pe-ppe-mom} leads to $\Delta_{{\rm gHSe}_R}^{(k)}$ being bounded from above by the distance between the PE on $R\cup E$ from a Haar ensemble. 
To see this, we define
\eq{
\Delta_{{\rm Haar}_{R\cup E}}^{(k)} = \frac{1}{2}\left|\left|\rho_{{\rm PE}_{R\cup E}}^{(k)} - \rho_{{\rm Haar}_{R\cup E}}^{(k)}\right|\right|_1\,.
\label{eq:Delta-haar}
}
In addition, note that the partial trace, which is a completely positive trace-preserving map, is contractive under the trace norm~\cite{nielsen-chaung-book}.
That implies, given a state $\rho\in{\cal H}_R\otimes{\cal H}_E$ 
\eq{
\left\Vert\rm{Tr}_{{\cal H}_E}\left[\rho\right]\right\Vert_1\leq \Vert\rho\Vert_1\,.\label{eq:td-contract}
}
Using Eq.~\ref{eq:pe-ppe-mom} in the inequality in Eq.~\ref{eq:td-contract} immediately leads to the inequality
\eq{
\Delta_{{\rm gHSe}_R}^{(k)}\leq \Delta_{{\rm Haar}_{R\cup E}}^{(k)} ~\forall k\,.\label{eq:ghse-haar}
}
The bound in Eq.~\ref{eq:ghse-haar} ensures that if the PE on $R\cup E$ forms an $\epsilon$-approximate $k$-design, the distance between the respective $n^{\rm{th}}$ moments of the corresponding PPE on $R$ and the appropriate gHSe are bounded from above by $\epsilon$ for any $n\leq k$.

\subsection{Haar random states and extensive subsystems}
At this stage, it is useful to note that much of the work on projected ensembles and deep thermalisation in the literature has focused on the case where $R$ and $E$ are local subsystems and the measured `bath' $S$ is taken to be thermodynamically large~\cite{ho2022exact,ippoliti2022solvablemodelofdeep,lucas2023freefermiondeepthm,ippolitu2023dynamical,bhore2023deepthmconstrained,varikuti2024unravelingemergence,chan2024pe,mark2024maximum,manna2025peconserved,bejan2025matchgate,sherry2025mixedstatesexhibitdeep}. 
The case where all three subsystems are extensive is relatively less understood. 
However, this is exactly the setting that is of concern in this work, as mentioned in Sec.~\ref{sec:intro}.
We therefore present some results for the emergence of maximum entropy ensembles for the PE as well as the PPE in such a scenario.
More specifically, given that our interest lies in scrambling dynamics without any conservation laws, a natural setting for us is when the state $\ket{\Psi}$ is a Haar random state. 
In the following, we lay down rigorously the sufficient conditions for the emergence of the Haar ensemble for the PE and the gHSe for the PPE for extensive $R$, $E$ and $S$.

\subsubsection{Emergence of Haar ensemble for the PE}

For a Haar random state $\ket{\Psi}$, the first moment, $\rho_R$, of the PE on $R$, is known to be maximally mixed for $|R|\le N/2$ as $N\to\infty$~\cite{page1993average,forrestor2010rmt}. For the higher moments, the condition is more subtle~\cite{cotler2023emergent}.
A theorem in Ref.~\cite{cotler2023emergent}
% of the PE defined in Eq.~\ref{eq:pe-def}, Theorem 1 of \cite{cotler2023emergent} 
states that if   
\eq{
D_{\overline{R}}\geq \frac{18\pi^3(2k-1)(D_R)^{4k}}{\epsilon^2}(2k\ln(D_R)+\ln(2/\delta))\,,
\label{eq:cotler-haar-bound}
}
then the PE on $R$ forms an $\epsilon$-approximate $k$-design with probability at least $1-\delta$. 

Considering $R$ to be extensive, such that $|R| = \gamma N$ for some $\gamma\in[0,1]$ and $|\overline{R}| = (1-\gamma)N$, the asymptotic form of the bound in Eq.~\ref{eq:cotler-haar-bound} as $N\to\infty$ is given by
\eq{
\gamma\leq \frac{1}{4k+1}\,.
}
If the above bound on the dimensions of the subsystem $R$ is satisfied, then the PE on $R$ is asymptotically an exact $k$-design. 
This lays down a rigorous condition for which the $n^{\rm th}$ moments of the PE, for $n\leq k$, on an extensive $R$ approach that of the Haar ensemble.

\subsubsection{Emergence of gHSe for the PPE}
In the case of tripartite states, the conditions for the emergence of a gHSe on $R$ induced by measurements on $S$ and tracing out $E$ are somewhat more complicated.
First note that, if the PE $R\cup E$ forms a Haar ensemble, then the PPE on $R$ trivially forms a gHSe. 
Using the inequality in Eq.~\ref{eq:cotler-haar-bound}, we therefore have
\eq{
D_{S}\geq \frac{18\pi^3(2k-1)(D_RD_E)^{4k}}{\epsilon^2}\left(\ln\frac{2(D_RD_E)^{2k}}{\delta}\right),
}
as a sufficient condition for PPE on $R$ to form an $\epsilon$-approximate $k$-gHSe\footnote{We will refer to a PPE as an $\epsilon$-approximate $k$-gHSe if the corresponding $\Delta_{{\rm gHSe}}^{(n)}<\epsilon$ for all $n\leq k$.} with probability at least $1-\delta$. 
However, this bound is suboptimal for the PPE \as{since it is derived from the properties of the PE, which is defined over a larger subsystem $R\cup E$.}
In the following we provide rigorously an improvement to the bound.
This can be stated in the form of the following theorem which constitutes the first result of this work (see Ref.~\cite{yu2025mixedstatedeepthermalization} for an analogous theorem).
\begin{theorem}
\label{thm:ghse}
Let $\ket{\Psi}$ be a tripartite state chosen uniformly at random from a $D_R\times D_S\times D_E$ dimensional Hilbert space. The moments of ${\cal E}_{{\rm PPE}_R}$  satisfy $\Vert \rho^{(k)}_{{\rm{PPE}}_R}-\rho^{(k)}_{{\rm{gHSe}}_R}\Vert_1\leq \epsilon$ with probability at least $1-\delta$ if
	\eq{D_SD_E\geq\frac{18\pi^3(2k-1)^2D_R^{4k-1}}{\epsilon^2}\ln\left(\frac{2 D_R^{2k}}{\delta}\right)\,.\label{eq:ghse-bound}}	
\end{theorem}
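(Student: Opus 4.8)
The plan is to bound the deviation of each matrix element of $\rho^{(k)}_{{\rm PPE}_R}$ from that of $\rho^{(k)}_{{\rm gHSe}_R}$ and then assemble a trace-norm bound by concentration of measure on the sphere of pure states. Fixing an orthonormal basis of $\mathcal{H}_R^{\otimes k}$ (dimension $D_R^{k}$), one has $\Vert X\Vert_1\leq D_R^{2k}\max_{i,j}|X_{ij}|$ for any operator $X$ on that space, so it suffices to show that with probability $\geq 1-\delta$ all $D_R^{2k}$ elements obey $|(\rho^{(k)}_{{\rm PPE}_R}-\rho^{(k)}_{{\rm gHSe}_R})_{ij}|\leq \epsilon D_R^{-2k}$. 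Each such element is a function of $\ket\Psi\in S^{2D_RD_SD_E-1}$, and I would control it through (i) a bound on its mean, i.e., on $\Vert\mathbb{E}_\Psi[\rho^{(k)}_{{\rm PPE}_R}]-\rho^{(k)}_{{\rm gHSe}_R}\Vert$, and (ii) a bound on its Lipschitz constant $L$. Levy's lemma then gives a Gaussian tail and a union bound over the $D_R^{2k}$ elements yields a failure probability $\lesssim 2D_R^{2k}\exp(-c\,D_RD_SD_E\,t^2/L^2)$; equating this with $\delta$ is what turns into the $\ln(2D_R^{2k}/\delta)$ of Eq.~\ref{eq:ghse-bound}, while the Levy constant $c$ and the factor $L^2$ together furnish the $18\pi^3(2k-1)^2$ prefactor and the surviving dimensional factor $D_R^{4k}/(D_RD_SD_E)=D_R^{4k-1}/(D_SD_E)$.

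A preliminary step is to tame the state-dependent Born weights, since $p(o_S)\rho_R(o_S)^{\otimes k}={\rm Tr}_{\mathcal{H}_E^{\otimes k}}\big[{\rm Tr}_{\mathcal{H}_S^{\otimes k}}[\Pi_{o_S}^{\otimes k}(\ket\Psi\bra\Psi)^{\otimes k}\Pi_{o_S}^{\otimes k}]\big]p(o_S)^{-(k-1)}$, and $p(o_S)$ can be small. I would first pass to the high-probability event that $p(o_S)=D_S^{-1}(1+o(1))$ uniformly in $o_S$ — a standard concentration estimate for the linear functionals $\braket{\Psi|\Pi_{o_S}|\Psi}$ of a Haar vector, together with a union bound over the $D_S$ outcomes — and charge its (small) failure probability against $\delta$. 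On that event $p(o_S)^{-(k-1)}$ may be replaced by $D_S^{k-1}$ up to controlled error, so that $\rho^{(k)}_{{\rm PPE}_R}$ is, up to that error, a degree-$2k$ polynomial in $\ket\Psi\bra\Psi$; one then checks that its matrix elements are Lipschitz with constant $O(k)$ on the sphere, the sum over $o_S$ not enlarging $L$ because the relevant gradients live in mutually orthogonal $S$-blocks of $\ket\Psi$. This is ingredient (ii).

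The heart of the argument is ingredient (i). I would evaluate $\mathbb{E}_\Psi[\rho^{(k)}_{{\rm PPE}_R}]$ via the permutation expansion $\mathbb{E}_\Psi[(\ket\Psi\bra\Psi)^{\otimes 2k}]\propto\sum_{\pi\in S_{2k}}P_\pi$, using the partial-trace identity Eq.~\ref{eq:pe-ppe-mom} and organising the sums over $\pi$ and over outcomes $o_S$. The identity-like permutations reproduce $\rho^{(k)}_{{\rm gHSe}_R}$, and the claim to establish is that every other permutation class is suppressed by at least one power of $D_S$ \emph{or} one power of $D_E$: the $D_S$-suppression comes from the sum over the $D_S$ outcomes being off-diagonal for all but an $O(D_S^{-1})$ fraction of permutations, while the $D_E$-suppression comes from the partial trace over the $k$ replicas of $E$. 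Combined, they give a leading correction of order $(2k-1)^2 D_R^{a}/(D_SD_E)$, and together with the $D_R^{2k}$ from $\Vert X\Vert_1\leq D_R^{2k}\max_{ij}|X_{ij}|$ this is what produces the exponent $4k-1$ of Eq.~\ref{eq:ghse-bound}, in contrast with the $4k$ of the PE bound in Eq.~\ref{eq:cotler-haar-bound}. A parallel but easier count bounds the variance of each matrix element, so that the Lipschitz-based tail of ingredient (ii) is not wasteful.

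The main obstacle is exactly this combinatorial bookkeeping: one must track, permutation by permutation, which cycles are contracted against $R$, against $S$ (where they meet the diagonal projectors $\Pi_{o_S}$ and the sum over outcomes), and against the $E$-replicas (where they meet the partial trace), and show that it is the \emph{product} $D_SD_E$ — not $D_S$ alone, as the naive route through the contractivity bound Eq.~\ref{eq:ghse-haar} would give — that controls the deviation, with the sharp $D_R$-power $4k-1$. The accompanying technical point is to keep the error made in replacing the Born weights $p(o_S)$ by $D_S^{-1}$ subdominant to this $O(1/(D_SD_E))$ correction.
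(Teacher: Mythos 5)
Your scaffolding matches the paper's proof: fix a basis of $\mathcal{H}_R^{\otimes k}$, apply Levy's lemma to each matrix element of the $k$-th PPE moment, rescale $\epsilon\to\epsilon/D_R^{2k}$, union-bound over the $D_R^{2k}$ entries, and pass from the entrywise norm to the trace norm. Your first-paragraph accounting is also exactly right: the $18\pi^3(2k-1)^2$ comes from the Levy constant together with a Lipschitz constant $\eta=2(2k-1)$, and the exponent $4k-1$ is pure arithmetic, $D_R^{4k}/(D_RD_SD_E)=D_R^{4k-1}/(D_SD_E)$, because the Levy exponent carries the \emph{full} Hilbert-space dimension $d=D_RD_SD_E$. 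That last observation already explains why the result beats the naive route through Eq.~\ref{eq:ghse-haar}.

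The genuine problem is that your third paragraph then contradicts this and mislocates the crux. There is no permutation-expansion estimate of $\Vert\mathbb{E}_\Psi[\rho^{(k)}_{{\rm PPE}_R}]-\rho^{(k)}_{{\rm gHSe}_R}\Vert$ to be done, and no combinatorial suppression by $D_SD_E$ hiding in the mean: for a Haar-random $\ket\Psi$ the normalised conditional states $\Pi_{o_S}\ket\Psi/\sqrt{p(o_S)}$ are Haar-distributed on $\mathcal{H}_R\otimes\mathcal{H}_E$ and independent of the (Dirichlet-distributed) weights $p(o_S)$ with $\mathbb{E}[p(o_S)]=D_S^{-1}$, so $\mathbb{E}_\Psi[\rho^{(k)}_{{\rm PPE}_R}]$ equals $\mathbb{E}_{\phi\sim{\rm Haar}(D_RD_E)}[({\rm Tr}_E\ket\phi\bra\phi)^{\otimes k}]=\rho^{(k)}_{{\rm gHSe}_R}$ \emph{exactly}, which is precisely how $\rho^{(k)}_{{\rm gHSe}_R}$ is defined in Eq.~\ref{eq:ghse-mom}. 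The $D_SD_E$ in Eq.~\ref{eq:ghse-bound} and the exponent $4k-1$ come entirely from the concentration step you already described correctly, not from tracking cycles against $S$ and $E$; what you call ``the main obstacle'' does not exist. Likewise, your preliminary conditioning on $p(o_S)=D_S^{-1}(1+o(1))$ is unnecessary and would pollute the bound with an extra failure probability: the place where the small-$p(o_S)$ danger must actually be confronted is the Lipschitz bound for
\eq{
f_{ij}(\ket\Psi)=\sum_{o_S}\frac{\bra{i}\left({\rm Tr}_E\,\Pi_{o_S}\ket{\Psi}\bra{\Psi}\Pi_{o_S}\right)^{\otimes k}\ket{j}}{p(o_S)^{\,k-1}}\,,\nonumber
}
where the numerator vanishes fast enough as $p(o_S)\to0$ that the quotient stays Lipschitz with constant $2(2k-1)$ on the whole sphere (the paper's Lemma~2 shows this via Cauchy--Schwarz bounds on the normalised coefficients and $\sum_{o_S}\Pi_{o_S}=\mathbb{I}_S$). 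That Lipschitz estimate, which you dispatch in half a sentence, is the real technical content of the theorem; the permutation bookkeeping you propose to spend your effort on is a detour that the actual proof never needs.
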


While we present the proof of theorem in detail in Appendix~\ref{app:ghse-thm}, it is more useful to analyse the large-$N$ limit of the bound here.
To analyse this limit, let us assume that subsystems $R$, $S$ and $E$ are extensive such that $|R|=\gamma N$, $|S|= pN$ and $|E|=(1-p-\gamma)N$ for some $p\in(0,1)$ and $\gamma\in(0,1-p)$. 
In addition, we assume that the bound on the distance $\epsilon$ asymptotically decays in $N$ as $2^{-\alpha N}$, for some $\alpha>0$. 
This gives the asymptotic form of the bound in Eq.~\ref{eq:ghse-bound} as $N\rightarrow\infty$ as
\eq{\gamma\leq \frac{1-2\alpha}{4k}\,.\label{eq:asym-bound-ghse}}
If this bound on the size of subsystem $R$ is satisfied, then the first $k$ moments of the PPE converge to those of the gHSe no slower than $2^{-\alpha N}$ as $N\rightarrow\infty$. 

In the rest of this work, we will be interested only in the second moment of the PPE. Given a particular $\gamma$, Eq.~\ref{eq:asym-bound-ghse} can be restated as an upper bound on the decay rate in $N$, $\alpha$, of $\epsilon$ for $k=2$, 
\eq{\alpha\leq \frac{1-8\gamma}{2}\,,\label{eq:ghse-optimal}} 
where $\left\Vert\rho^{(2)}_\text{PPE}-\rho^{(2)}_\text{gHSe}\right\Vert_1\leq \epsilon$.
The exponential convergence of the second moment of the PPE to that of the gHSe will play a key role in quantifying the information content of partial projected ensembles, which we will now discuss.

\section{Information measures for partial projected ensembles\label{sec:ppe-info}}

In this section, we present the main framework of this work, namely concrete quantification of the information content of a PPE induced on the subsystem $R$ by measurements on a disjoint subsystem $S$.
It is therefore clear that information measures for the PPE must quantify the correlations between $R$ and $S$.
The simplest such measure arguably is a measure of entanglement between $R$ and $S$.
Since the state $\rho_{RS}={\rm Tr}_E\ket{\Psi}\bra{\Psi}$ is generally a mixed state, the entanglement between $R$ and $S$ can be quantified via the logarithmic negativity~\cite{plenio2005logarithmic,eisert2006entanglement}
\eq{{\cal N}_{RS}\equiv \ln \left\Vert \rho_{RS}^{{\rm T}_S}\right\Vert_1\,,\label{eq:log-neg}}
where ${\rm T}_S$ denotes the partial transpose of $\rho_{RS}$ over $S$.
However, since the PPE is an entire ensemble of states in $R$ over measurement outcomes in $S$, it manifestly holds the potential to offer finer probes of the information content than an entanglement measure.
This was already hinted at by the discovery of the measurement-invisible quantum-correlated (MIQC) phase~\cite{sherry2025miqc} wherein each state in the PPE over $R$ was identical despite there being finite entanglement between $R$ and $S$ indicated by ${\cal N}_{RS}\neq 0$; in other words the measurement outcome $o_S$ had no bearing on the conditional state, $\rho_R(o_S)=\rho_R ~\forall~o_S$.
This already suggests that the statistical properties of the PPE indeed contain finer probes of the structure of how quantum information is encoded across the two subsystems $R$ and $S$.

Given that we have the entire ensemble of states, as in Eq.~\ref{eq:ppe-def}, and not just the RDM $\rho_{RS}$, the question that arises therefore is what measure of information can, in fact, exploit it. Since the PPE is effectively a collection of classical bitstrings $o_S$ in $S$ and conditional quantum states $\rho_R(o_S)$ in $R$, an appropriate measure is the mutual information between $R$ and $S$ in the classical-quantum state~\cite{Henderson2001CQCorr} constructed from the PPE
\eq{
\rho_{\rm CQ} = \sum_{o_S}p(o_S) \rho_R(o_S) \otimes \ket{o_S}\bra{o_S}\,.
\label{eq:rho-cq}
}
The mutual information between $R$ and $S$ in the state Eq.~\ref{eq:rho-cq} \as{is given by}
\eq{
\chi({\cal E}_{{\rm PPE}_R})={\cal S}_{\rm{vN}}\left[\rho_R\right]-\sum_{o_S}p(o_S){\cal S}_{\rm{vN}}\left[\rho_R(o_S)\right]\,,
\label{eq:mi-cq-state}
}
where ${\cal S}_{\rm{vN}}[\rho]$ is the von Neumann entropy of $\rho$. 
Most importantly, note that the mutual information in Eq.~\ref{eq:mi-cq-state} is nothing but the Holevo information~\cite{holevo1973accinfo} of the PPE.

This ties in very well with the key insight that the distribution of the state $\rho_R(o_S)$ in the PPE over $o_S$ carries fundamental signatures of the information content of the PPE. 
The Holevo information of an ensemble can also be interpreted as a measure of how distinct is the first moment ($\rho_R$) of the ensemble and any particular state $\rho_R(o_S)$ in ${\cal E}_{{\rm PPE}_R}$. To see this, note that a measure of distinguishability of two quantum states is the quantum relative entropy of one with respect to the other. The distance between $\rho_R(o_S)$ and $\rho_R$ can therefore be quantified by
\eq{
{\cal S}_{\rm{vN}}\left[\rho_R(o_S)\Vert\rho_R\right]\equiv
-{\cal S}_{\rm{vN}}[\rho_R(o_S)] - {\rm Tr}[\rho_R(o_S)\ln \rho_R]\,.\label{eq:rel-ent}
}
Averaging Eq.~\ref{eq:rel-ent} over ${\cal E}_{{\rm PPE}_R}$ gives
\eq{
\overline{{\cal S}_{\rm{vN}}\left[\rho_R(o_S)\Vert\rho_R\right]}&=\sum_{o_S}p(o_S){\cal S}_{\rm{vN}}\left[\rho_R(o_S)\Vert\rho_R\right] \nonumber\\&= {\cal S}_{\rm{vN}}\left[\rho_R\right]-\sum_{o_S}p(o_S){\cal S}_{\rm{vN}}\left[\rho_R(o_S)\right]\,,
}
which is nothing but the Holevo information, defined in Eq.~\ref{eq:mi-cq-state}, of the PPE, ${\cal E}_{{\rm PPE}_R}$.

The above discussion establishes the Holevo information of ${\cal E}_{{\rm PPE}_R}$ as a fundamental framework and quantity of interest from the point of view of probing the information structure of the PPE. Much of the rest of the paper will be about concrete results for $\chi({\cal E}_{{\rm PPE}_R})$ in pertinent settings. However, before delving into that it will be useful to discuss concretely the information-theoretic aspects of our framework. 

The question of quantifying the information in the PPE can be formulated as a task where two parties, Alice and Bob, communicate via a quantum channel. Consider a task where Alice transmits a classical message
$O_S\equiv \{o_{S_1}, o_{S_2}, \cdots, o_{S_n}\}$ to Bob by sequentially sending $n$ states $\{\rho_R(o_{S_1}),\rho_R(o_{S_2}),\cdots,\rho_R(o_{S_n})\}$ sampled from ${\cal E}_{{\rm PPE}_R}$. For now, we assume that Alice and Bob are communicating via a noiseless channel. Bob uses a simple measurement strategy to decode this message: measure each state separately with some \as{positive operator-valued measure (POVM)}, $O_R$, with elements $\{o_R\}$. The information gained by Bob about Alice's message after each measurement is quantified by the mutual information $I(O_S:O_R)$, defined as
\begin{equation}
    \begin{split}
        I(O_S:O_R)=&-\sum_{o_S}p(o_S)\ln[p(o_S)]\\
&+\sum_{o_R}p(o_R)\sum_{o_S}p(o_S|o_R)\ln[p(o_S|o_R)]\,.
    \end{split}
\end{equation}
where $p(o_R)$ is the distribution of the outcomes of the measurement $O_R$. Given Bob's measurement strategy, the maximum classical transmission rate is given by the \textit{accessible information}
\eq{{\rm{Acc}}({\cal E}_{{\rm PPE}_R}) \equiv \max_{O_R} I(O_S:O_R)\,,\label{eq:acc-info}}
which is simply the mutual information maximised over the measurement $O_R$. 
Crucially, the accessible information is bounded above by the Holevo information,
\eq{\rm{Acc}({\cal E}_{{\rm PPE}_R})\leq \chi({\cal E}_{{\rm PPE}_R})\,.}
This bound is saturated for any ensemble consisting of density matrices that mutually commute~\cite{Holevo1998capacity}. For a generic ensemble, however, Bob’s optimal information gain as quantified by the accessible information falls short of $\chi({\cal E}_{{\rm PPE}_R})$. But given that Alice and Bob use the channel multiple times, there is a possibility of leveraging some type of coding and collective measurement strategy to enhance this information gain. Consider the following: instead of measuring each state sent by Alice separately, Bob decides to decode Alice's message by performing an (optimal) collective measurement that would project Alice's states onto a basis of entangled states. Alice then decides to choose inputs $\{\rho_R(o_{S_1}),\rho_R(o_{S_2}),\cdots,\rho_R(o_{S_n})\}$ that would be more distinguishable under Bob's measurement strategy. She achieves this by choosing from a joint ensemble of $n$-fold input states that is different from ${\cal E}_{{\rm PPE}_R}^{\otimes n}$~\footnote{Note that the joint ensemble of $n$-fold input states is chosen such that the marginal 1-fold ensemble remains ${\cal E}$.}. 
Alice's coding strategy in conjunction with Bob's collective measurement strategy can achieve an information gain of $\chi({\cal E})$ bits per channel use asymptotically as $n\rightarrow\infty$~\cite{schumacher1997classical,Holevo1998capacity,preskill2025quantumshannontheory}. Thus, for an appropriately defined operation, the Holevo information is an operationally relevant measure of the information that is extractable from an ensemble of quantum states.

\section{Holevo information for Haar-random states\label{sec:phase-dia}}

Having established the Holevo information as a \textit{bona-fide} measure of the information content of an ensemble of quantum states, we now turn to analysing its behaviour for the PPE (defined as ${\cal E}_{{\rm PPE}_R}$ in Eq.~\ref{eq:ppe-def}) generated from a 
% uniformly random pure 
Haar-random
state $\ket{\Psi}$ over $N$ qubits. 
To recall the setting, we consider a tripartition of the system with subsystem $R$ containing $\gamma N$ qubits, the measured subsystem $S$ containing $p N$ qubits, and the remaining $(1-p-\gamma)N$ qubits in subsystem $E$ being traced out.
Specifically, we will analytically compute the Holevo information (in Eq.~\ref{eq:mi-cq-state}) of 
${\cal E}_{{\rm PPE}_R}$ as a function of $p$ and $\gamma$ in the large-$N$ limit, and show rigorously that depending on the value of $\gamma$ and $p$, $\chi({\cal E}_{{\rm PPE}_R})$ shows  qualitatively different behaviour -- indicating different information phases of the ensemble.

In the following, we discuss the computation of $\chi({\cal E}_{{\rm PPE}_R})$ for a Haar-random state in Sec.~\ref{sec:comp-chi} and in Sec.~\ref{sec:pd-haar}, we discuss the phase diagram in the parameter space of $p$-$\gamma$ and more importantly, contrast it with the information content as encoded in the logarithmic negativity.
Specifically, we will derive the phase diagram analytically for a part of the $p$-$\gamma$ parameter space and provide numerical results for the complementary parameter space.

\subsection{Computation of $\chi({\cal E}_{{\rm PPE}_R})$ \label{sec:comp-chi}}

Since $\chi({\cal E}_{{\rm PPE}_R})$ in Eq.~\ref{eq:mi-cq-state} is constructed explicitly out of the von Neumann entropies of the member states, $\rho_R(o_S)$ of the PPE, it can be naturally expressed in terms of the eigenvalues of $\rho_R(o_S)$ which we will denote as $\{\lambda_{i,o_S}\}$.
To compute $\chi({\cal E}_{{\rm PPE}_R})$, we will therefore first prove a result pertaining to the statistics of $\{\lambda_{i,o_S}\}$, which in turn will use the result that the moments of the PPE converge to those of the gHSe, sufficient conditions for which were laid out in Theorem~\ref{thm:ghse}.

Specifically, we will concern ourselves with the distribution of the non-zero eigenvalues of $\rho_R(o_S)$ over both, the eigenvalues as well as the states in the ensemble. 
Formally, it is defined as 
\eq{
P(\lambda)\equiv\frac{1}{{\cal N}}\sum_{o_S}p(o_S)\sum_{i=1}^{r(o_S)}\delta(\lambda-\lambda_{i,o_S})\,,
\label{eq:eig-density}
}
where $\{\lambda_{i,o_S}\}$ is the set of non-zero eigenvalues of $\rho_R(o_S)$ and $r(o_S)$ is its rank. The normalisation $\mathcal{N}$ is equal to the average rank of the states in ${\cal E}_{{\rm PPE}_R}$. The distribution $P(\lambda)$ determines the Holevo information, since the latter depends only on the spectrum of each state $\rho_R(o_S)$ in the ensemble. Remarkably, we find that the assumption that the second moment of the PPE converge to that of the gHSe (as mandated by Theorem~\ref{thm:ghse}) strongly constrains the form of $P(\lambda)$ in the large-$N$ limit,
which then has implications for $\chi({\cal E}_{{\rm PPE}_R})$.

For any given system size $N$, the approximate emergence of the first two moments of the gHSe in the PPE implies that $P(\lambda)$ satisfies a concentration inequality. This can be stated in the form of the following theorem.
\begin{theorem}
\label{thm:Plambda}
Let $\left\Vert\rho^{(2)}_{{\rm PPE}_R}-\rho^{(2)}_{\rm{gHSe}_R}\right\Vert_1\leq\epsilon$. Assuming that the average rank of the states in ${\cal E}_{{\rm PPE}_R}$ is $\text{min}(D_R,D_E)$, the distribution of non-zero eigenvalues $P(\lambda)$ of the states in the ${\cal E}_{{\rm PPE}_R}$ satisfies the following concentration inequality
\eq{ 
\begin{split}
    \text{Prob}\left[\left\vert \lambda-\mu\right\vert\geq\delta\right]\leq \frac{\mu}{\delta^2}\left[\frac{D_E+D_R}{1+D_RD_E}+\epsilon+\mu\right]\,,
\label{eq:thm-2}
\end{split}
}
where $\mu=\max\{D_E^{-1},D_R^{-1}\}$ is the mean of the distribution. 
\end{theorem}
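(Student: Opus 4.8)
The plan is to derive the concentration inequality from Chebyshev's inequality applied to the density $P(\lambda)$ of Eq.~\ref{eq:eig-density}, the only non-elementary ingredient being an upper bound on the second moment $\overline{\lambda^2}$ that follows from the hypothesis $\Vert\rho^{(2)}_{{\rm PPE}_R}-\rho^{(2)}_{{\rm gHSe}_R}\Vert_1\leq\epsilon$. First I would verify the two structural facts about $P(\lambda)$ that make this work. By construction $\int d\lambda\,P(\lambda)={\cal N}^{-1}\sum_{o_S}p(o_S)r(o_S)=1$, so $P$ is a bona fide probability density; and since each $\rho_R(o_S)$ has unit trace, $\sum_{i=1}^{r(o_S)}\lambda_{i,o_S}=1$, giving mean $\overline{\lambda}=\int d\lambda\,\lambda P(\lambda)={\cal N}^{-1}\sum_{o_S}p(o_S)={\cal N}^{-1}$, which is exactly $\mu=\max\{D_R^{-1},D_E^{-1}\}$ by the assumption ${\cal N}=\min(D_R,D_E)$. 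Chebyshev's inequality then reads $\text{Prob}[|\lambda-\mu|\geq\delta]\leq\delta^{-2}\,\overline{(\lambda-\mu)^2}$, and since $\overline{(\lambda-\mu)^2}=\overline{\lambda^2}-2\mu\,\overline{\lambda}+\mu^2\leq\overline{\lambda^2}+\mu^2$ (using only $\overline{\lambda}\geq0$), the task reduces to bounding $\overline{\lambda^2}$. (Keeping the exact $-\mu^2$ instead of discarding it gives a marginally sharper constant.)

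The second step is to turn $\overline{\lambda^2}$ into an expectation against the second moment of the PPE. Writing $\mathbb{S}$ for the swap operator on ${\cal H}_R^{\otimes2}$ and using $\sum_i\lambda_{i,o_S}^2={\rm Tr}[\rho_R(o_S)^2]={\rm Tr}[(\rho_R(o_S)^{\otimes2})\mathbb{S}]$, one gets $\overline{\lambda^2}={\cal N}^{-1}\sum_{o_S}p(o_S){\rm Tr}[\rho_R(o_S)^2]=\mu\,{\rm Tr}[\rho^{(2)}_{{\rm PPE}_R}\mathbb{S}]$ from the definition Eq.~\ref{eq:mom-def-ppe}. I would then bring in the hypothesis via Hölder's inequality and $\Vert\mathbb{S}\Vert_\infty=1$: $|{\rm Tr}[(\rho^{(2)}_{{\rm PPE}_R}-\rho^{(2)}_{{\rm gHSe}_R})\mathbb{S}]|\leq\Vert\rho^{(2)}_{{\rm PPE}_R}-\rho^{(2)}_{{\rm gHSe}_R}\Vert_1\leq\epsilon$, so ${\rm Tr}[\rho^{(2)}_{{\rm PPE}_R}\mathbb{S}]\leq{\rm Tr}[\rho^{(2)}_{{\rm gHSe}_R}\mathbb{S}]+\epsilon$. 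The remaining quantity ${\rm Tr}[\rho^{(2)}_{{\rm gHSe}_R}\mathbb{S}]$ is just the average purity of a gHSe state, a standard Page/Weingarten computation: from $\mathbb{E}_{\psi\sim{\rm Haar}_{R\cup E}}[(\ket{\psi}\bra{\psi})^{\otimes2}]=(\mathbb{I}+\mathbb{S}_{RE})/[D_RD_E(D_RD_E+1)]$ (with $\mathbb{S}_{RE}=\mathbb{S}_R\otimes\mathbb{S}_E$ the swap of the two copies of $R\cup E$) and tracing $\mathbb{S}_R\otimes\mathbb{I}_E^{\otimes2}$ against the two terms, one obtains ${\rm Tr}[\rho^{(2)}_{{\rm gHSe}_R}\mathbb{S}]=(D_RD_E^2+D_R^2D_E)/[D_RD_E(D_RD_E+1)]=(D_R+D_E)/(1+D_RD_E)$. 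Hence $\overline{\lambda^2}\leq\mu[(D_R+D_E)/(1+D_RD_E)+\epsilon]$.

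Assembling, $\text{Prob}[|\lambda-\mu|\geq\delta]\leq\delta^{-2}(\overline{\lambda^2}+\mu^2)\leq\frac{\mu}{\delta^2}\left[\frac{D_E+D_R}{1+D_RD_E}+\epsilon+\mu\right]$, which is exactly the claim. I do not expect a genuine obstacle here: the only content beyond elementary manipulations is the single line in the second paragraph — recognising $\overline{\lambda^2}$ as a swap trace of $\rho^{(2)}_{{\rm PPE}_R}$, feeding Theorem~\ref{thm:ghse} through Hölder's inequality, and evaluating the gHSe purity by Weingarten calculus — so if anything is ``the hard part'' it is just keeping that bookkeeping straight. I would also stress that the average-rank assumption ${\cal N}=\min(D_R,D_E)$ is used twice and essentially: it fixes the mean of $P(\lambda)$ at $\mu$, and it produces the overall prefactor $\mu=1/{\cal N}$ in the final bound; without it the statement holds verbatim with $\mu$ replaced by $1/{\cal N}$ in the appropriate places.
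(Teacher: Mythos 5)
Your proposal is correct and follows essentially the same route as the paper's proof: mean of $P(\lambda)$ from unit trace and the rank assumption, the swap-operator identity ${\rm Tr}[\rho_R(o_S)^2]={\rm Tr}[\rho_R(o_S)^{\otimes 2}\mathbb{S}]$ combined with H\"older's inequality and $\Vert\mathbb{S}\Vert_\infty=1$ to transfer the $\epsilon$-closeness to the second moment, the gHSe purity $(D_R+D_E)/(1+D_RD_E)$, and Chebyshev's inequality with the same generous handling of the $\pm\mu^2$ term. The only cosmetic difference is that you rederive the gHSe second moment from the Haar average rather than quoting the known formula.
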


\begin{proof}
From the definition of $P(\lambda)$ in Eq.~\ref{eq:eig-density}, we have 
\begin{equation}
\sum_{o_S}p(o_S)~\text{Tr}\left[\rho^k_R(o_S)\right]=\mathcal{N}\int d\lambda~ P(\lambda)~\lambda^k\,,
\label{eq:dens-trace}
\end{equation} 
where $k$ is any positive integer and ${\cal N}=\min\{D_R,D_E\}$. While this valid for any positive integer $k$, we restrict ourselves to the case where $k=1,2$. For $k=1$, we obtain
\eq{\int d\lambda~ P(\lambda)~\lambda=\frac{1}{\cal N}\,,}
which states that the mean $\mu$ is simply ${\cal N}^{-1}=\max\{D_R^{-1},D_E^{-1}\}$.

For $k=2$, we rewrite the LHS of Eq.~\ref{eq:dens-trace} by applying the following identity
\eq{\sum_{o_S}p(o_S)~\text{Tr}\left[\rho^2_R(o_S)\right]=\text{Tr}\left({\mathbb{S}}_{D_R^2}\rho^{(2)}_{{\rm PPE}_R}\right)\,,
\label{eq:perm-id}}
where ${\mathbb{S}}_{D_R^2}$ is the $\mathsf{SWAP}$ operator acting on the doubled Hilbert space of $R$. From Hölder's inequality, we have
\begin{equation} 
\begin{split}
\left|\text{Tr}\left[{\mathbb{S}}_{D_R^2}\left(\rho^{(2)}_{{\rm PPE}_R}-\rho^{(2)}_{\rm{gHSe}_R}\right)\right]\right|&\leq \left\Vert\rho^{(2)}_{{\rm PPE}_R}-\rho^{(2)}_{\rm{gHSe}_R}\right\Vert_1\left\Vert
{\mathbb{S}}_{D_R^2}\right\Vert_\infty\\
&\leq \epsilon\,,
\end{split}
\end{equation}
where we used $\left\Vert
{\mathbb{S}}_{D_R^2}\right\Vert_\infty=1$. This inequality enables us to rewrite Eq.~\eqref{eq:perm-id} as
\begin{equation}
	\mathcal{N}\int d\lambda~ P(\lambda)~\lambda^2=\text{Tr}\left({\mathbb{S}}_{D_R^2}\rho^{(2)}_{\rm{gHSe}_R}\right)+\eta\,,
    \label{eq:sec-mom-id}
\end{equation}
where 
\eq{|\eta|\leq \epsilon \,.\label{eq:eta-bnd}} 
In Eq.~\ref{eq:sec-mom-id}, we use the known form of the second moment for the gHSe~\cite{karol2001inducedmeasures}
\begin{equation}
	\rho^{(2)}_{{\rm gHSe}_R}=\frac{D_E^2 ~{\mathbb{I}}_{D_R^2}+D_E~{\mathbb{S}}_{D_R^2}}{D_RD_E(D_RD_E+1)}\,,
\end{equation}
which gives
\begin{equation}
	\begin{split}
\mathcal{N}\int d\lambda~ P(\lambda)~\lambda^2&=\frac{D_E+D_R}{D_RD_E+1}+\eta\,.
\label{eq:dens-val-2}
	\end{split}
\end{equation}
With the first two moments of $P(\lambda)$ thus obtained, we can compute the variance, which enables us to employ the following bound given by Chebyshev's inequality: 
\begin{equation}
\begin{split}
     \text{Prob}\left[\left\vert \lambda-\mu\right\vert\geq\delta\right]&\leq \frac{1}{{\cal N}\delta^2}\left[\frac{D_E+D_R}{1+D_RD_E}+\eta-\frac{1}{{\cal N}}\right]\\
     &\leq \frac{\mu}{\delta^2}\left[\frac{D_E+D_R}{1+D_RD_E}+\epsilon+\mu\right]
\end{split}
\end{equation}
which is the inequality in Eq.~\ref{eq:thm-2}. The latter bound comes from the absolute values of individual terms of the RHS in the former bound, and Eq.~\ref{eq:eta-bnd}.
\end{proof}

We now apply Theorem~\ref{thm:Plambda} to ${\cal E}_{{\rm PPE}_R}$ generated from Haar-random states, where the assumption that the average rank of the states in ${\cal E}_{{\rm PPE}_R}$ is ${\rm min}(D_R,D_E)$ typically holds true. In particular, we focus on how the bound in Eq.~\ref{eq:thm-2} constrains  the form of $P(\lambda)$ in the large-$N$ limit. For technical simplicity, we make the following assumption, which restricts our analysis to a particular subregion of the $p-\gamma$ parameter space. 
We assume that \eq{\epsilon \ll\frac{D_E+D_R}{1+D_ED_R}\,,\label{eq:assum1}} 
for $N\gg1$, which ensures that the leading order term in the RHS of Eq.~\ref{eq:thm-2} is $\frac{\mu(D_E+D_R)}{\delta^2(1+D_ED_R)}$. In the large-$N$ limit, Eq.~\ref{eq:assum1} can be restated in terms of the following bound:
\eq{\alpha>\max\{\gamma,1-p-\gamma\}\,,\label{eq:assum1A}}
where $\alpha$ is the decay rate of $\epsilon$ in the large-$N$ limit. 
For ${\cal E}_{{\rm PPE}_R}$ generated from a Haar-random state, we have the bound on $\alpha$ from Eq.~\ref{eq:ghse-optimal},  
\eq{\alpha\leq\frac{1-8\gamma}{2}\,,}
for the convergence of the first two moments of ${\cal E}_{{\rm PPE}_R}$ to the gHSe.
This in turn implies that for 
\eq{\max\{\gamma,1-p-\gamma\}<\frac{1-8\gamma}{2}\,,\label{eq:assum-main}}
Eq.~\ref{eq:assum1A} is always valid. 

Thus, given a $(p,\gamma)$ that satisfies the bound in Eq.~\ref{eq:assum-main}, and for any choice of $\delta$ such that $\lim_{N\to\infty}\delta \to 0$, and 
\eq{
\lim_{N\rightarrow\infty}&\frac{\mu(D_E+D_R)}{\delta^2(1+D_ED_R)}\rightarrow0\,,
\nonumber
}
the distribution of non-zero eigenvalues $P(\lambda)$ of the states in ${\cal E}_{{\rm PPE}_R}$, generated from a Haar-random state, is asymptotically degenerate,
\eq{\lim_{N\rightarrow\infty}P(\lambda)\sim \delta(\lambda-\mu)\,,\label{eq:limit-plam}}
where $\mu=\max\{D_E^{-1},D_R^{-1}\}$. 

Given the asymptotic form of $P(\lambda)$ in Eq.~\ref{eq:limit-plam}, we now compute the Holevo information of ${\cal E}_{{\rm PPE}_R}$. We will consider two situations separately, $p<1-2\gamma$ and $p>1-2\gamma$, for reasons that will become clear shortly, starting with the former.  

\subsubsection{$p\leq 1-2\gamma$}
For $p\leq1-2\gamma$, the distribution of the set of eigenvalues of $\rho_R(o_S)$ is highly concentrated about $(1,1...1)/D_R$ for $N\gg 1$, which implies that the distribution of $\rho_R(o_S)$ is highly concentrated about $\mathbb{I}_{D_R}/D_R$. As $N\rightarrow\infty$, $\rho_R\rightarrow\mathbb{I}_{D_R}/D_R$ and its von Neumann entropy ${\cal S}_{\rm{vN}}(\rho_R(o_S))\rightarrow \ln D_R$, which yields 
\eq{
\chi({\cal E}_{{\rm PPE}_R})\rightarrow \ln D_R~(1-\sum_{o_S}p(o_S))=0\,.} 
This is precisely the measurement-invisible quantum-correlated phase discovered in Ref.~\cite{sherry2025miqc}.
The key point is that the ensemble ${\cal E}_{{\rm PPE}_R}$ has no non-trivial spread over the space of states.
This is manifested in the vanishing of the Holevo information.
While $\chi({\cal E}_{{\rm PPE}_R})$ vanishes identically in the limit of $N\to\infty$, it is also important to understand how it does so as the limit is approached. 
In the following we show that for $N\gg 1$, $\chi(\cal {E}_{\rm{PPE}_R})$ decays exponentially in $N$ in this $p$-$\gamma$ parameter space.

To understand this, it is useful to look at the second R\'enyi entropy ${\cal S}_2(\rho)=-\ln \rm{Tr}\rho^2$. 
Since $P(\lambda)$ is approximately degenerate as $N\gg 1$, the ensemble-averaged von Neumann entropy is approximately equal to the ensemble-averaged second R\'enyi entropy in the PPE. The Holevo information is therefore approximately given by 
\eq{\chi({\cal E}_{{\rm PPE}_R})\approx{\cal S}_{\rm{vN}}(\rho_R)-\sum_{o_S}p(o_S){\cal S}_2(\rho_R(o_S))\,,\label{eq:hol-bnd}}
where ${\cal S}_{\rm{vN}}(\rho_R(o_S))$ has been replaced by ${\cal S}_2(\rho_R(o_S))$ in the second term of the RHS. We note that the second term is now the ensemble average of the logarithm of the purity.
Instead of evaluating this `quenched' average, we evaluate the corresponding `annealed' average
\eq{
{\cal S}_2^{\rm ann}\equiv \ln\left[\sum_{o_S}p(o_S)~\text{Tr}\left[\rho^2_R(o_S)\right]\right]\,,
\label{eq:anneal}
}
which is approximately equivalent to the former for large $N$.
This is justified by the fact that, for $N\gg 1$, $\text{Tr}\left[\rho^2_R(o_S)\right]$ is approximately constant since $P(\lambda)$ is approximately degenerate.
To evaluate the annealed average in Eq.~\ref{eq:anneal}, we note that, from Eq.~\ref{eq:dens-trace} and Eq.~\ref{eq:dens-val-2}, we have
\eq{{\sum_{o_S}p(o_S)~\text{Tr}\left[\rho^2_R(o_S)\right]}=\frac{D_E+D_R}{1+D_RD_E}+\eta\,,}
For $N\gg1$, we know that
\eq{\eta\leq\epsilon \ll\frac{D_E+D_R}{1+D_RD_E}\,.}
This, along with the fact that $D_R\ll D_E$ in the MIQC phase, implies that the leading and subleading orders in $N$ of the annealed average is given by

\eq{\ln\left[\sum_{o_S}p(o_S)~\text{Tr}\left[\rho^2_R(o_S)\right]\right]\approx\ln D_R^{-1}+\frac{D_R^2-1}{1+D_RD_E}\,.\label{eq:anneal-approx}}
The leading and subleading orders in $N$ of the first term in the RHS of Eq.~\ref{eq:hol-bnd} can be obtained from the Page formula~\cite{page1993average}. The typical value of ${\cal S}_{\rm{vN}}(\rho_R)$ in the case where $R$, $S$ and $E$ are extensive subsystems with $|R|<|S\cup E|$ for $N\gg1$ is given by
\eq{{\cal S}_{\rm{vN}}(\rho_R)\approx\ln D_R-\frac{D_R}{2D_ED_S}\,.\label{eq:page-bound}}
Using Eq.~\ref{eq:page-bound} and Eq.~\ref{eq:anneal-approx}, Eq.~\ref{eq:hol-bnd} can be rewritten for $N\gg 1$ as
\eq{\chi({\cal E}_{{\rm PPE}_R})\approx \frac{D_R}{D_E}\left(1-\frac{1}{2D_S}\right)\,.}
Thus, $\chi({\cal E}_{{\rm PPE}_R})$ approximately decays as $2^{(2\gamma+p-1)N}$ for $N\gg1$ in the MIQC phase.

\subsubsection{$p>1-2\gamma$}

For $p>1-2\gamma$, the distribution of each \textit{non-zero} eigenvalue becomes highly concentrated about $1/D_E$ in the limit of $N\gg 1$.
Therefore the distribution of the set of eigenvalues of $\rho_R(o_S)$ is highly concentrated about 
\[(\underbrace{1,1....1}_{D_E},\underbrace{0,0....0}_{D_R-D_E})/D_E\]
(up to some ordering). 
Crucially, even though the von Neumann entropy ${\cal S}_{\rm{vN}}(\rho_R(o_S))$ is concentrated about $\ln D_E$, the distribution of $\rho_R(o_S)$ is spread out in the Hilbert space. As $N\rightarrow\infty$, ${\cal S}_{\rm{vN}}(\rho_R(o_S))\rightarrow \ln D_E$, which yields 
\eq{\begin{split}
    \chi({\cal E}_{{\rm PPE}_R})&\rightarrow N\ln 2~(\gamma-(1-p-\gamma)\sum_{o_S}p(o_S))\\
    &=N\ln 2~(p+2\gamma-1)\,.
\end{split}} 
This is precisely the manifestation of the {\it measurement-visible quantum-correlated} phase~\cite{sherry2025miqc} in the Holevo information.

At this juncture, note that the qualitatively different behaviour of Holevo information with $N$ for $p\leq 1-2\gamma$ and $p>1-2\gamma$.
In the former, $\chi({\cal E}_{{\rm PPE}_R})$ decays exponentially in $N$ to zero in the thermodynamic limit, whereas in the latter, it grows with $N$ linearly suggesting a `volume-law' behaviour. 
This is highly suggestive of an `information phase-transition' in the PPE in driven by the sizes of the subsystems, $p$ and $\gamma$. 
We will present concrete evidence for the same in the next subsection (Sec.~\ref{sec:pd-haar}) but before that we make a digression to discuss a physically pertinent limit.
 
The discussion so far has focused on the case where subsystems $R$, $S$ and $E$ are extensive. In the case where $R$ and $E$ are finite but $S$ is extensive, the Holevo information can be computed using the existing body on results on deep thermalisation. In this case, the PE of states supported on $R\cup E$ converges to the Haar ensemble as $|S|\rightarrow\infty$, in which case the von Neumann entropy of $R$ averaged over the PE is given by the Page formula~\cite{page1993average}
\eq{\langle{\cal S}_{\rm{vN}}(\rho_R(o_S)\rangle_{\rho_R(o_S)}=H_{mn}-H_n-\frac{m-1}{2n}\,,\label{eq:page-value}}
where $m=\min\{D_R,D_E\}$, $n=\max\{D_R,D_E\}$ and $H_{mn},H_n$ are harmonic numbers. The Holevo information of ${\cal E}_{\rm PPE}$ is obtained by subtracting Eq.~\ref{eq:page-value} from the von Neumann entropy of the average state,
\eq{\chi({\cal E}_{{\rm PPE}_R})=\ln m-H_{mn}+H_n+\frac{m-1}{2n}\,.}
While this limit is physically relevant, it firmly remains in the measurement-visible phase. The transition between the measurement-visible and measurement-invisible phases is observed as the sizes of the (extensive) subsystems are tuned, which we will now discuss.

\subsection{Information phase diagram \label{sec:pd-haar}}

\begin{figure}

    \includegraphics[width=\linewidth]{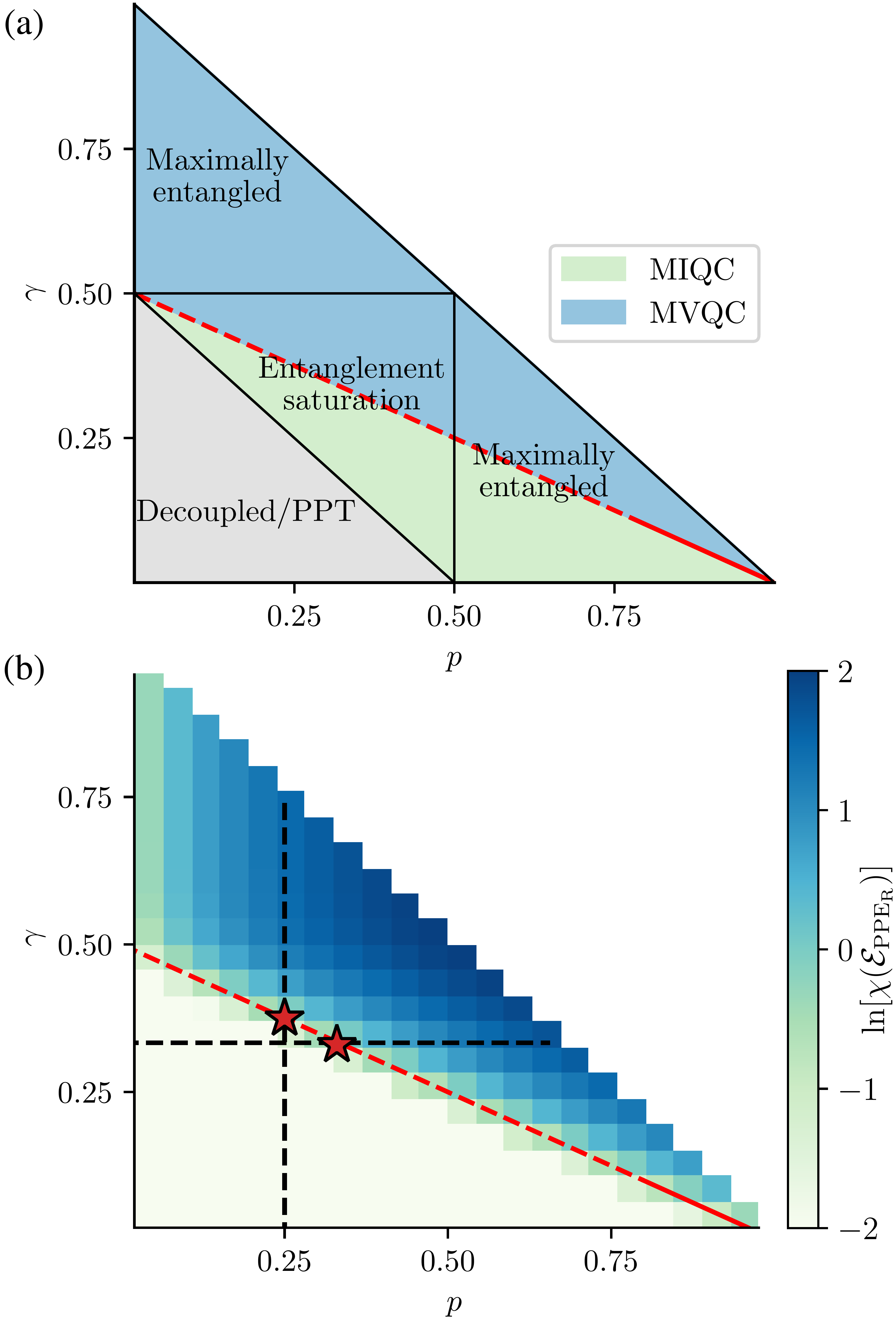}
    \caption{(a) Schematic phase diagram for the Holevo information $\chi({\cal E}_{{\rm PPE}_R})$ contrasted with the phase diagram of the entanglement between $R$ and $S$ (from \cite{shapourian2021negativity}). The entanglement phases are labelled, while the line of separation between the measurement-visible and measurement-invisible is denoted by the \as{red dashed-dotted} line, with the \as{solid} part representing what was obtained from the analytical calculations and the dashed-dotted part denoting the extrapolation using the numerical data. (b) Heatmap plot for $\chi({\cal E}_{{\rm PPE}_R})$, from which the data was used to extrapolate the line that separates the measurement-visible and measurement-invisible phases. The data was generated for tripartite Haar-random states over $24$ qubits at various partitions of $R$, $S$ and $E$ corresponding to various points $(p,\gamma)$. The black horizontal and vertical dashed lines indicate two representative slices for which finite-size scaling data is presented in Fig.~\ref{fig:Holevo_crossing}, with the red stars indicating the corresponding critical points.} 
    \label{fig:phase-dia}
\end{figure}

In the thermodynamic limit, as the sizes of the subsystems $R$ and $S$ (parametrised by $\gamma$ and $p$, respectively) are varied, the Holevo information undergoes a transition between two distinct phases. In one phase, $\chi({\cal E}_{{\rm PPE}_R})$ decays exponentially in $N$, while in the other, it follows a volume law (see Fig.~\ref{fig:phase-dia}(a)). 
In the regime $\max\{2\gamma,2-2p-2\gamma\}<1-8\gamma$, we have analytically demonstrated the existence of these two phases, referred to as the MIQC and MVQC phases, respectively. These phases are separated by the line $p = 1 - 2\gamma$, and numerical results indicate that this line can be extrapolated to intersect the $\gamma$-axis at $\gamma = 0.5$, extending beyond the analytically accessible region (see Fig.~\ref{fig:phase-dia}(b)).

\begin{figure}[!ht]
\includegraphics[width=\linewidth]{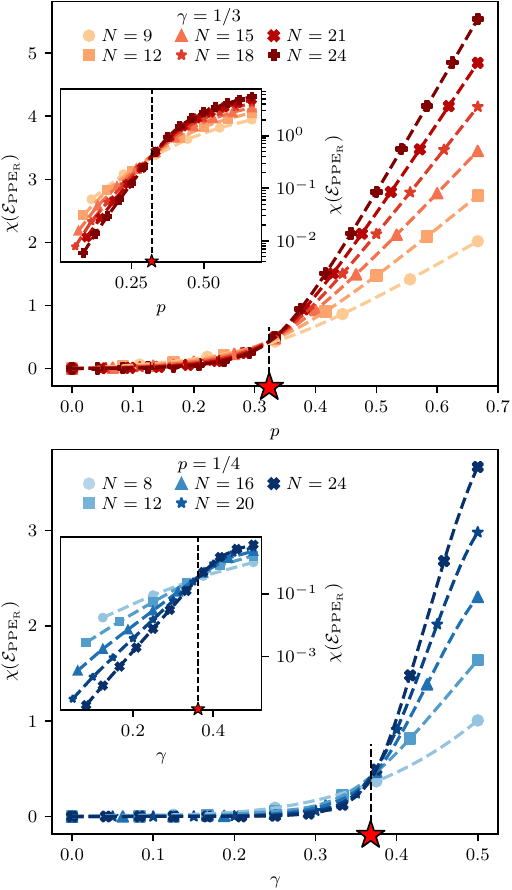}
\caption{Plots of $\chi(\mathcal{E}_{\mathrm{PPE_R}})$ across two different slices of the $p$-$\gamma$ parameter space for different system sizes $N$. The data shows a clear crossing for different $N$ indicating a transition. Top panel shows the data for fixed $\gamma=\frac{1}{3}$ with varying $p$. Bottom panel shows the transition for fixed $p=\frac{1}{4}$ with varying $\gamma$. These slices are marked in Fig.~\ref{fig:phase-dia}(b) by the black dashed horizontal and vertical lines respectively, and the critical points are marked by red stars as also indicated there. Insets show the same data but on logarithmic scales to highlight the exponential decay of $\chi(\mathcal{E}_{\mathrm{PPE_R}})$ with $N$ in the MIQC phase.}
\label{fig:Holevo_crossing}
\end{figure}

That the transition in the Holevo information is a genuine transition and not a crossover is demonstrated concretely by the finite-size scaling data for different slices  of the phase diagram. 
In Fig.~\ref{fig:Holevo_crossing} we show the data for 
 $\chi({\cal E}_{{\rm PPE}_R})$ for two representative slices (the slices are marked in Fig.~\ref{fig:phase-dia} (b) by the dashed lines), varying $p$ at fixed $\gamma$ and varying $\gamma$ at fixed $p$. 
 The data clearly shows a crossing of $\chi({\cal E}_{{\rm PPE}_R})$ for different $N$, which indicates a transition. In the MVQC phase, the data grows linearly with $N$ (as indicated by the equispaced data for different $N$ on a linear scale), whereas the data also shows the exponential decay of $\chi({\cal E}_{{\rm PPE}_R})$ in $N$ in the MIQC phase. This is evinced by the equispaced nature of the data on logarithmic scales (see insets). 

 These numerical results conclusively show that the transition between the MIQC and MVQC phases as diagnosed through the lens of the Holevo information manifests itself throughout the $p$-$\gamma$ parameter space and beyond the regime where we could prove the existence of the transition analytically in Sec.~\ref{sec:comp-chi}.
 The Holevo information transition line $p_c=1-2\gamma_c$ cuts through the negativity phase diagram of Haar-random states~\cite{shapourian2021negativity} and therefore presents a finer characterisation of the information structure of the state.
In fact, results presented in Appendix~\ref{app:ghse} suggest that the PPE continues to be well described by a gHSe, up to the second moment at least, along the Holevo information transition line and in the MIQC phase. 
On the other hand, in the MVQC phase our numerical data suggests that while the gHSe emerges in the regime of large $p$ and small $\gamma$ (lower right corner of the phase diagram) it may not do so at smaller $p$. This is concomitant with the fact that the PPE is not described a gHSe for $\gamma>0.5$ in the thermodynamic limit.

We now present a detailed contrast between the phases of the Holevo information and the entanglement phases of tripartite Haar-random states in \cite{shapourian2021negativity}. We briefly summarise the results of the latter here. The phases are described in the thermodynamic limit by the entanglement between subsystems $R$ and $S$, as quantified by the logarithmic negativity of ${\cal N}_{RS}$ in Eq.~\ref{eq:log-neg}. It has the following behaviour in three regimes of $p$-$\gamma$:
\begin{enumerate}
\item[(1)] If $p+\gamma<0.5$, the entanglement between $R$ and $S$ vanishes and the corresponding phase is described as the \textit{positive partial transpose}(PPT) or \textit{decoupled} phase. This follows from Page's formula~\cite{page1993average}, which implies that up to half the system is in the maximally mixed state. 

\item[(2)] If $\max\{p,\gamma\}>0.5$, the logarithmic negativity of $R$ and $S$ is given by $\min\{p,\gamma\}\times N$, which is the maximum possible entanglement and hence the corresponding phase is described as the \textit{maximally entangled} phase. The monogamy of entanglement implies that the smaller of subsystems $R$ and $S$ is decoupled from $E$. This also follows from Page's formula. 

\item[(3)] If each of the three subsystems $R$, $S$ and $E$ is smaller than half the system in size, the logarithmic negativity of subsystems $R$ and $S$ is given by~\cite{shapourian2021negativity} 
\eq{{\cal N}_{RS}=\frac{N}{2}(2p+2\gamma-1)+\ln\frac{8}{3\pi}\,.} This is the \textit{entanglement saturation} phase where subsystems $R$, $S$ and $E$ are genuinely tripartite entangled, and the entanglement between $R$ and $S$ is sub-maximal due to the monogamy of entanglement.
\end{enumerate}

Having summarised the phases of entanglement between $R$ and $S$, we now discuss the behaviour of the Holevo information $\chi({\cal E}_{{\rm PPE}_R})$ within each entanglement phase. In the PPT/decoupled phase, $\chi({\cal E}_{{\rm PPE}_R})$ obviously vanishes since the joint state of $R\cup S$ is a product state.

The maximally entangled phase with $\gamma > 0.5$, where ${\cal N}_{RS}$ is set by $|S|$, lies entirely in the MVQC phase. This is simply because in this regime, the entire entanglement of $S$ is shared with $R$, and $S$ is completely decoupled from $E$. Thus, the effect of the measurements on $S$ on $R$ cannot be mediated by $E$, leading to finite Holevo information. Indeed, the decoupling of the measured subsystem $S$ and the traced-out subsystem $E$ is a sufficient condition for measurement-visibility. 
On the other hand, the maximally entangled phase with ${\cal N}_{RS}\sim |R|$ as well as the entanglement saturation phase is split into two phases by the MIQC-MVQC transition line. This is because the MIQC phase arises in settings with highly scrambled entanglement, where the effect of the measurement of one subsystem ($S$), on another subsystem ($R$), is mediated by a third subsystem ($E$).

It is of interest to note that the MIQC phase has no counterpart in the context of bipartite pure states, since PEs of states on a subsystem generated by measurements on its entangled complement have non-zero Holevo information~\cite{sherry2025miqc}. 

Given the rigorously established phases of entanglement and the Holevo information (the latter in the region $\max\{2\gamma,2-2p-2\gamma\}<1-8\gamma$ of the phase diagram) in tripartite Haar-random states, the results of this work constitute a rigorous proof of existence of the MIQC and MVQC phases.

\section{Holevo information dynamics in chaotic circuits \label{sec:circuits}}
In this section, we demonstrate that the information phases and transitions between them, as encoded in the Holevo information of the PPE, emerges in scrambling dynamics effected by chaotic quantum circuits. 
One pertinent question in this context is regarding the timescales at which these phases appear and how do they scale with system size.

Our setting is the following. We consider 2-local random unitary circuits (with two different geometries as we discuss shortly) where the unitary time-evolution is built up of unitary gates acting on two qubits. 
Specifically, a gate, say acting between qubits $j$ and $k$ has the form 
\eq{
W_{jk}=
\cbox{
\begin{tikzpicture}
\foreach \x in {0,1}{
    \draw[thick] (\x,-0.35) -- (\x,0.35);
}
\def\r{.15}
\def\xa{0}
\def\ya{0}
\def\xb{1}
\def\yb{0}
\fill[Blue!80](\xa,\ya) circle (\r);
\fill[Blue!80](\xb,\yb) circle (\r);
\fill[Blue!80](\xa+0.866*\r,\ya-0.5*\r) rectangle (\xb-0.866*\r,\yb+0.5*\r);
\draw[thick] (\xa,\ya) ++(30:\r) arc[start angle=30, end angle=330, radius=\r];
\draw[thick] (\xb,\yb) ++(-150:\r) arc[start angle=-150, end angle=150, radius=\r];
\draw[thick](\xa+0.866*\r,\ya+0.5*\r) -- (\xb-0.866*\r,\yb+0.5*\r);
\draw[thick](\xa+0.866*\r,\ya-0.5*\r) -- (\xb-0.866*\r,\yb-0.5*\r);
% \node at (0,-0.6) {$j$};
% \node at (1,-0.6) {$k$};
\end{tikzpicture}}
= [u_j\otimes u_k]\exp(-i \tau H)[v_j\otimes v_k]\,,\label{eq:gate}
}
where $u_j,u_k,v_j,v_k$ are single-qubit Haar-random gates and 
\eq{
\begin{split}
H = 0.3X_jX_k + 0.2(X_j &+ X_k) + \\&0.4Z_jZ_k + 0.5(Z_j + Z_k)\,,
\end{split}
\label{eq:ham}
}
with $X_j (Z_j)$ denoting the Pauli-$X (Z)$ operator on site $j$.
The numerical parameters in Eq.~\ref{eq:ham} are not fine-tuned and the results do not depend qualitatively on these values.
The initial state is taken to be of the form
\eq{
\ket{\Psi(t=0)} = \otimes_{j=1}^N \ket{\varphi_j}\,,
}
where $\ket{\varphi_j}$ is a Haar-random state on the qubit $j$.
This initial state is evolved in time via a unitary circuit represented by the time-evolution operator ${\cal U}_t$.
For every $t$, we construct the PPE on $R$ from $\ket{\Psi(t)} = {\cal U}_t\ket{\Psi(0)}$ in exactly the same way as in Eq.~\ref{eq:ppe-def}, except now ${\cal E}_{{\rm PPE}_R}(t)$ is now explicitly time dependent.
As a function of $t$, we then track the Holevo information of the PPE, $\chi(t)\equiv \chi[{\cal E}_{{\rm PPE}_R}(t)]$.

Note that at $t=0$, since the state is a product state over all qubits, the PPE is trivial as $\rho_R(o_S,t=0) = \otimes_{j\in R}\ket{\varphi_j}\bra{\varphi_j}~\forall o_S$.
This trivially leads to $\chi(t=0)=0$.  
The signatures of the different information phases in the Holevo information emerges only at later times as the information in the global state is scrambled. 
In fact, as we shall shortly show, $\chi_{\rm sat}(N)\equiv  \lim_{t\to\infty}\chi(t,N)$ scales with $N$ as
\eq{
\chi_{\rm sat}(N) \sim
\begin{cases}
e^{-c N}\,; & {\rm MIQC}\\
N^0\,; & {\rm critical}\\
N\,; & {\rm MVQC}
\end{cases}\,,
\label{eq:chi-sat-scaling}
}
mirroring the results for PPEs generated from Haar-random states as discussed in Sec.~\ref{sec:phase-dia}.
The pertinent question then is what are timescales in the dynamics at which the scaling behaviour in Eq.~\ref{eq:chi-sat-scaling} set in in different phases and how do they depend on the geometry of the circuit. 
To address this we attempt to collapse the data of $\chi(t,N)$ for different $N$ onto a common curve of the form
\eq{
\Delta_\chi(t,N)\equiv\left|1-\frac{\chi(t,N)}{\chi_{\rm sat}(N)}\right| = g\left(\frac{t}{t_\ast(N)}\right)\,,
\label{eq:approach}
}
and study the scaling of $t_\ast(N)$ with $N$ as an estimate of the timescales at which the information phases emerge.
Note that the $t_\ast$ so estimated effectively tracks the timescale at which data for $\chi(t,N)$ approaches its $t\to\infty$ saturation value and hence can be understood as an upper bound for the timescales required to see the information phases and the transition. 
We now show results for $\chi(t,N)$ in two circuit geometries, an all-to-all circuit, and a $1+1$D brickwork circuit starting with the former. 

\subsection{All-to-all circuit}
We study the all-to-all circuit as the minimal chaotic model with no spatial structure which nevertheless encodes a notion of locality as all the gates are $2$-local. 
Formally, the time-evolution unitary can be written as 
\eq{
{\cal U}_t = \prod_{\vartheta=1}^t U_{\vartheta}\,,
}
where the time-evolution operator over one time step,
\eq{
U_{\vartheta} = \prod_{n=1}^N W_{i_nj_n}\,,
}
consists of $N$ gates of the form in Eq.~\ref{eq:gate} applied between randomly chosen pairs of qubits $(i_n,j_n)$.
Graphically, the circuit can be represented as,
\begin{tikzpicture}
\foreach \x in {0,...,7}{
    \draw[thick] (\x,-0.5) -- (\x,2.5);
    \draw[thick, dashed] (\x,2.5) -- (\x,3);
}
\def\r{.15}

\def\xa{0}
\def\ya{0}
\def\xb{3}
\def\yb{0}
\fill[Blue!80](\xa,\ya) circle (\r);
\fill[Blue!80](\xb,\yb) circle (\r);
\fill[Blue!80](\xa+0.866*\r,\ya-0.5*\r) rectangle (\xb-0.866*\r,\yb+0.5*\r);
\draw[thick] (\xa,\ya) ++(30:\r) arc[start angle=30, end angle=330, radius=\r];
\draw[thick] (\xb,\yb) ++(-150:\r) arc[start angle=-150, end angle=150, radius=\r];
\draw[thick](\xa+0.866*\r,\ya+0.5*\r) -- (\xb-0.866*\r,\yb+0.5*\r);
\draw[thick](\xa+0.866*\r,\ya-0.5*\r) -- (\xb-0.866*\r,\yb-0.5*\r);

\def\xa{4}
\def\ya{0}
\def\xb{6}
\def\yb{0}
\fill[YellowOrange!80](\xa,\ya) circle (\r);
\fill[YellowOrange!80](\xb,\yb) circle (\r);
\fill[YellowOrange!80](\xa+0.866*\r,\ya-0.5*\r) rectangle (\xb-0.866*\r,\yb+0.5*\r);
\draw[thick] (\xa,\ya) ++(30:\r) arc[start angle=30, end angle=330, radius=\r];
\draw[thick] (\xb,\yb) ++(-150:\r) arc[start angle=-150, end angle=150, radius=\r];
\draw[thick](\xa+0.866*\r,\ya+0.5*\r) -- (\xb-0.866*\r,\yb+0.5*\r);
\draw[thick](\xa+0.866*\r,\ya-0.5*\r) -- (\xb-0.866*\r,\yb-0.5*\r);

\def\xa{1}
\def\ya{0.3}
\def\xb{2}
\def\yb{0.3}
\fill[BrickRed!80](\xa,\ya) circle (\r);
\fill[BrickRed!80](\xb,\yb) circle (\r);
\fill[BrickRed!80](\xa+0.866*\r,\ya-0.5*\r) rectangle (\xb-0.866*\r,\yb+0.5*\r);
\draw[thick] (\xa,\ya) ++(30:\r) arc[start angle=30, end angle=330, radius=\r];
\draw[thick] (\xb,\yb) ++(-150:\r) arc[start angle=-150, end angle=150, radius=\r];
\draw[thick](\xa+0.866*\r,\ya+0.5*\r) -- (\xb-0.866*\r,\yb+0.5*\r);
\draw[thick](\xa+0.866*\r,\ya-0.5*\r) -- (\xb-0.866*\r,\yb-0.5*\r);

\def\xa{3}
\def\ya{0.3}
\def\xb{5}
\def\yb{0.3}
\fill[LimeGreen!80](\xa,\ya) circle (\r);
\fill[LimeGreen!80](\xb,\yb) circle (\r);
\fill[LimeGreen!80](\xa+0.866*\r,\ya-0.5*\r) rectangle (\xb-0.866*\r,\yb+0.5*\r);
\draw[thick] (\xa,\ya) ++(30:\r) arc[start angle=30, end angle=330, radius=\r];
\draw[thick] (\xb,\yb) ++(-150:\r) arc[start angle=-150, end angle=150, radius=\r];
\draw[thick](\xa+0.866*\r,\ya+0.5*\r) -- (\xb-0.866*\r,\yb+0.5*\r);
\draw[thick](\xa+0.866*\r,\ya-0.5*\r) -- (\xb-0.866*\r,\yb-0.5*\r);

\def\xa{6}
\def\ya{0.3}
\def\xb{7}
\def\yb{0.3}
\fill[Plum!80](\xa,\ya) circle (\r);
\fill[Plum!80](\xb,\yb) circle (\r);
\fill[Plum!80](\xa+0.866*\r,\ya-0.5*\r) rectangle (\xb-0.866*\r,\yb+0.5*\r);
\draw[thick] (\xa,\ya) ++(30:\r) arc[start angle=30, end angle=330, radius=\r];
\draw[thick] (\xb,\yb) ++(-150:\r) arc[start angle=-150, end angle=150, radius=\r];
\draw[thick](\xa+0.866*\r,\ya+0.5*\r) -- (\xb-0.866*\r,\yb+0.5*\r);
\draw[thick](\xa+0.866*\r,\ya-0.5*\r) -- (\xb-0.866*\r,\yb-0.5*\r);

\def\xa{0}
\def\ya{.6}
\def\xb{1}
\def\yb{0.6}
\fill[Melon](\xa,\ya) circle (\r);
\fill[Melon](\xb,\yb) circle (\r);
\fill[Melon](\xa+0.866*\r,\ya-0.5*\r) rectangle (\xb-0.866*\r,\yb+0.5*\r);
\draw[thick] (\xa,\ya) ++(30:\r) arc[start angle=30, end angle=330, radius=\r];
\draw[thick] (\xb,\yb) ++(-150:\r) arc[start angle=-150, end angle=150, radius=\r];
\draw[thick](\xa+0.866*\r,\ya+0.5*\r) -- (\xb-0.866*\r,\yb+0.5*\r);
\draw[thick](\xa+0.866*\r,\ya-0.5*\r) -- (\xb-0.866*\r,\yb-0.5*\r);

\def\xa{2}
\def\ya{.6}
\def\xb{4}
\def\yb{0.6}
\fill[NavyBlue!80](\xa,\ya) circle (\r);
\fill[NavyBlue!80](\xb,\yb) circle (\r);
\fill[NavyBlue!80](\xa+0.866*\r,\ya-0.5*\r) rectangle (\xb-0.866*\r,\yb+0.5*\r);
\draw[thick] (\xa,\ya) ++(30:\r) arc[start angle=30, end angle=330, radius=\r];
\draw[thick] (\xb,\yb) ++(-150:\r) arc[start angle=-150, end angle=150, radius=\r];
\draw[thick](\xa+0.866*\r,\ya+0.5*\r) -- (\xb-0.866*\r,\yb+0.5*\r);
\draw[thick](\xa+0.866*\r,\ya-0.5*\r) -- (\xb-0.866*\r,\yb-0.5*\r);

\def\xa{5}
\def\ya{.6}
\def\xb{7}
\def\yb{0.6}
\fill[Black!30](\xa,\ya) circle (\r);
\fill[Black!30](\xb,\yb) circle (\r);
\fill[Black!30](\xa+0.866*\r,\ya-0.5*\r) rectangle (\xb-0.866*\r,\yb+0.5*\r);
\draw[thick] (\xa,\ya) ++(30:\r) arc[start angle=30, end angle=330, radius=\r];
\draw[thick] (\xb,\yb) ++(-150:\r) arc[start angle=-150, end angle=150, radius=\r];
\draw[thick](\xa+0.866*\r,\ya+0.5*\r) -- (\xb-0.866*\r,\yb+0.5*\r);
\draw[thick](\xa+0.866*\r,\ya-0.5*\r) -- (\xb-0.866*\r,\yb-0.5*\r);

\def\xa{0}
\def\ya{1.0}
\def\xb{7}
\def\yb{1.0}
\fill[Orchid](\xa,\ya) circle (\r);
\fill[Orchid](\xb,\yb) circle (\r);
\fill[Orchid](\xa+0.866*\r,\ya-0.5*\r) rectangle (\xb-0.866*\r,\yb+0.5*\r);
\draw[thick] (\xa,\ya) ++(30:\r) arc[start angle=30, end angle=330, radius=\r];
\draw[thick] (\xb,\yb) ++(-150:\r) arc[start angle=-150, end angle=150, radius=\r];
\draw[thick](\xa+0.866*\r,\ya+0.5*\r) -- (\xb-0.866*\r,\yb+0.5*\r);
\draw[thick](\xa+0.866*\r,\ya-0.5*\r) -- (\xb-0.866*\r,\yb-0.5*\r);

\def\xa{1}
\def\ya{1.3}
\def\xb{4}
\def\yb{1.3}
\fill[Black!80](\xa,\ya) circle (\r);
\fill[Black!80](\xb,\yb) circle (\r);
\fill[Black!80](\xa+0.866*\r,\ya-0.5*\r) rectangle (\xb-0.866*\r,\yb+0.5*\r);
\draw[thick] (\xa,\ya) ++(30:\r) arc[start angle=30, end angle=330, radius=\r];
\draw[thick] (\xb,\yb) ++(-150:\r) arc[start angle=-150, end angle=150, radius=\r];
\draw[thick](\xa+0.866*\r,\ya+0.5*\r) -- (\xb-0.866*\r,\yb+0.5*\r);
\draw[thick](\xa+0.866*\r,\ya-0.5*\r) -- (\xb-0.866*\r,\yb-0.5*\r);

\def\xa{5}
\def\ya{1.3}
\def\xb{7}
\def\yb{1.3}
\fill[SpringGreen](\xa,\ya) circle (\r);
\fill[SpringGreen](\xb,\yb) circle (\r);
\fill[SpringGreen](\xa+0.866*\r,\ya-0.5*\r) rectangle (\xb-0.866*\r,\yb+0.5*\r);
\draw[thick] (\xa,\ya) ++(30:\r) arc[start angle=30, end angle=330, radius=\r];
\draw[thick] (\xb,\yb) ++(-150:\r) arc[start angle=-150, end angle=150, radius=\r];
\draw[thick](\xa+0.866*\r,\ya+0.5*\r) -- (\xb-0.866*\r,\yb+0.5*\r);
\draw[thick](\xa+0.866*\r,\ya-0.5*\r) -- (\xb-0.866*\r,\yb-0.5*\r);

\def\xa{0}
\def\ya{1.6}
\def\xb{5}
\def\yb{1.6}
\fill[Periwinkle](\xa,\ya) circle (\r);
\fill[Periwinkle](\xb,\yb) circle (\r);
\fill[Periwinkle](\xa+0.866*\r,\ya-0.5*\r) rectangle (\xb-0.866*\r,\yb+0.5*\r);
\draw[thick] (\xa,\ya) ++(30:\r) arc[start angle=30, end angle=330, radius=\r];
\draw[thick] (\xb,\yb) ++(-150:\r) arc[start angle=-150, end angle=150, radius=\r];
\draw[thick](\xa+0.866*\r,\ya+0.5*\r) -- (\xb-0.866*\r,\yb+0.5*\r);
\draw[thick](\xa+0.866*\r,\ya-0.5*\r) -- (\xb-0.866*\r,\yb-0.5*\r);

\def\xa{2}
\def\ya{1.9}
\def\xb{3}
\def\yb{1.9}
\fill[GreenYellow](\xa,\ya) circle (\r);
\fill[GreenYellow](\xb,\yb) circle (\r);
\fill[GreenYellow](\xa+0.866*\r,\ya-0.5*\r) rectangle (\xb-0.866*\r,\yb+0.5*\r);
\draw[thick] (\xa,\ya) ++(30:\r) arc[start angle=30, end angle=330, radius=\r];
\draw[thick] (\xb,\yb) ++(-150:\r) arc[start angle=-150, end angle=150, radius=\r];
\draw[thick](\xa+0.866*\r,\ya+0.5*\r) -- (\xb-0.866*\r,\yb+0.5*\r);
\draw[thick](\xa+0.866*\r,\ya-0.5*\r) -- (\xb-0.866*\r,\yb-0.5*\r);

\def\xa{4}
\def\ya{1.9}
\def\xb{7}
\def\yb{1.9}
\fill[Salmon](\xa,\ya) circle (\r);
\fill[Salmon](\xb,\yb) circle (\r);
\fill[Salmon](\xa+0.866*\r,\ya-0.5*\r) rectangle (\xb-0.866*\r,\yb+0.5*\r);
\draw[thick] (\xa,\ya) ++(30:\r) arc[start angle=30, end angle=330, radius=\r];
\draw[thick] (\xb,\yb) ++(-150:\r) arc[start angle=-150, end angle=150, radius=\r];
\draw[thick](\xa+0.866*\r,\ya+0.5*\r) -- (\xb-0.866*\r,\yb+0.5*\r);
\draw[thick](\xa+0.866*\r,\ya-0.5*\r) -- (\xb-0.866*\r,\yb-0.5*\r);

\def\xa{0}
\def\ya{2.2}
\def\xb{4}
\def\yb{2.2}
\fill[ForestGreen](\xa,\ya) circle (\r);
\fill[ForestGreen](\xb,\yb) circle (\r);
\fill[ForestGreen](\xa+0.866*\r,\ya-0.5*\r) rectangle (\xb-0.866*\r,\yb+0.5*\r);
\draw[thick] (\xa,\ya) ++(30:\r) arc[start angle=30, end angle=330, radius=\r];
\draw[thick] (\xb,\yb) ++(-150:\r) arc[start angle=-150, end angle=150, radius=\r];
\draw[thick](\xa+0.866*\r,\ya+0.5*\r) -- (\xb-0.866*\r,\yb+0.5*\r);
\draw[thick](\xa+0.866*\r,\ya-0.5*\r) -- (\xb-0.866*\r,\yb-0.5*\r);

\def\xa{5}
\def\ya{2.2}
\def\xb{6}
\def\yb{2.2}
\fill[Peach](\xa,\ya) circle (\r);
\fill[Peach](\xb,\yb) circle (\r);
\fill[Peach](\xa+0.866*\r,\ya-0.5*\r) rectangle (\xb-0.866*\r,\yb+0.5*\r);
\draw[thick] (\xa,\ya) ++(30:\r) arc[start angle=30, end angle=330, radius=\r];
\draw[thick] (\xb,\yb) ++(-150:\r) arc[start angle=-150, end angle=150, radius=\r];
\draw[thick](\xa+0.866*\r,\ya+0.5*\r) -- (\xb-0.866*\r,\yb+0.5*\r);
\draw[thick](\xa+0.866*\r,\ya-0.5*\r) -- (\xb-0.866*\r,\yb-0.5*\r);

\draw[thick,dotted](-0.2,0.8) -- (7.5,0.8);
\node at (8, 0.8) {$t=1$};
\draw[thick,dotted](-0.2,2.4) -- (7.5,2.4);
\node at (8, 2.4) {$t=2$};
\end{tikzpicture},
which also makes manifest the fact that the choice of geometry and subsystems is immaterial. 

\begin{figure}[!b]
\includegraphics[width=\linewidth]{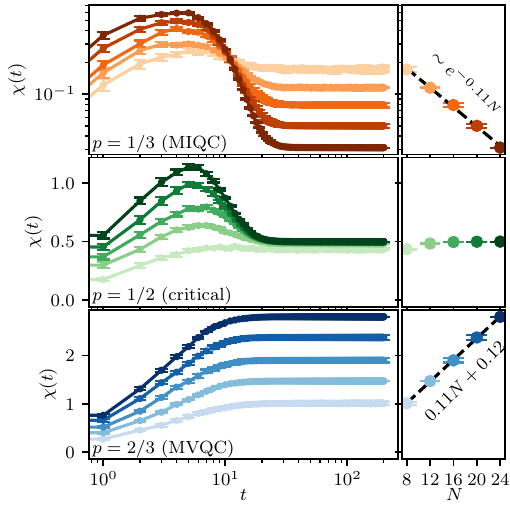}
\caption{Dynamics of the Holevo information, $\chi(t)$, for the 2-local all-to-all circuit. The data is for $\gamma=1/4$ and $\tau=0.5$. The three rows correspond to three different values of $p$, representative of the MIQC and MVQC phases and the critical point, as mentioned in the panels. The left column shows the dynamics of $\chi(t)$ as a function of $t$ for different $N=8,12,16,20,24$ (lighter to darker colours). The right columns show the infinite-time saturation value as a function of $N$. Note that the data for $p=1/3$ (MIQC phase) in the top row is on logarithmic scales. }
\label{fig:HI-alltoall}
\end{figure}

Representative results for the Holevo information in the two phases and the critical point is shown in Fig.~\ref{fig:HI-alltoall}; the left panels show $\chi(t)$ as a function of $t$ for different $N$, whereas the right panels show the saturation $\chi_{\rm sat}$ as a function of $N$.
In all three cases, $\chi(t)$ initially grows at early time.
This is due to the fact that in this regime, the quantum information is indeed shared between $R$ and $S$, but it is not sufficiently scrambled from the information phases to emerge.
In fact, we see that already at the earliest $O(1)$ times $\chi(t)$ grows with $N$.
We attribute this to absence of any spatial structure in the circuit.
As such within $O(1)$ times, with a finite probability, an extensive number of sites in $R$ are entangled with an extensive number of sites in $S$.
However, at later times, the information is significantly more scrambled such that tracing out $E$ has a qualitatively different effect on $\chi(t)$ depending on the phase.
This leads to the emergence of the MIQC and MVQC phases which is manifested clearly in the data saturating to values whose scaling with $N$ is in accordance with Eq.~\ref{eq:chi-sat-scaling}.

To understand the timescale $t_\ast(N)$ at which these different phases emerge, we study the approach of $\chi(t,N)$ to its saturation value $\chi_{\rm sat}(N)$, as in Eq.~\ref{eq:approach}. 
Within the limits of our numerical calculations we find that in the MIQC phase, $\Delta_\chi(t,N)$ for different $N$ collapses onto a common curve when plotted as a function of $t/\ln N$. On the other hand, at the critical point and in the MVQC phase, we find that the collapse is best when plotted as a function of just $t$. These collapses are shown in Fig.~\ref{fig:HI-timscale-alltoall}.
This suggests the scaling of $t_\ast$ with $N$ as 
\eq{
t_\ast(N)\sim \begin{cases}
\ln N\,; & {\rm MIQC}\\
N^0\,;& {\rm critical,MVQC}
\end{cases}\,.
\label{eq:tast-alltoall}
}
However, taking cognisance of the limited system sizes in our numerical calculations we cannot rule out that at much larger $N$, $t_\ast$ may become independent of $N$ in the MIQC phase as well.

\begin{figure}
\includegraphics[width=\linewidth]{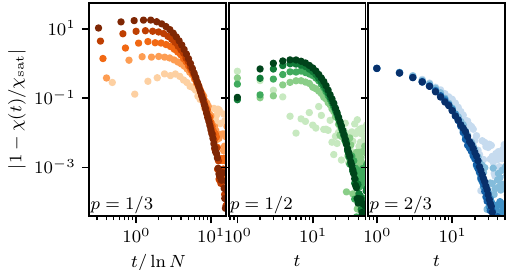}
\caption{Approach of $\chi(t)$ to its infinite-time value $\chi_{\rm sat}$ (see Eq.~\ref{eq:approach}) for the 2-local all-to-all circuit. The collapse of the data for different $N$ (different colour intensities, same as in Fig.~\ref{fig:HI-alltoall}) when plotted against $t/\ln N$ in the MIQC phase and against $t$ in the MVQC phase and the critical point suggests the scaling of $t_\ast$, the timescale associated to emergence of the information phases, to be of the form in Eq.~\ref{eq:tast-alltoall}.}
\label{fig:HI-timscale-alltoall}
\end{figure}

\subsection{1+1D brickwork circuit}

We next consider a 1+1D random circuit with brickwork geometry as a model which includes spatial locality as well in the simplest way. 
Since the gates act on nearest-neighbour qubits, there is a natural lightcone velocity which bounds how fast can information travel in this setting which in turn quatlitatively affects the scaling of $t_\ast$ with $N$ as we shall see shortly.
The time-evolution operator over one time step in this case,
\eq{
U_{\vartheta} = \prod_{n=1}^{N/2-1} W_{2n,2n+1}\prod_{n=1}^{N/2} W_{2n-1,2n}\,,
}
consists of a layer of gates acting only on the odd bonds followed by a layer of gates acting only on the even bonds.
This can be graphically depicted as 
\begin{tikzpicture}
\foreach \x in {0,...,7}{
    \draw[thick] (\x,-0.5) -- (\x,1.3);
    \draw[thick, dashed] (\x,1.3) -- (\x,1.8);
}
\draw [stealth-stealth, thick](-0.2,-0.7) -- (1.5,-0.7);
\draw [stealth-stealth, thick](1.5,-0.7) -- (5.5,-0.7);
\draw [stealth-stealth, thick](5.5,-0.7) -- (7.2,-0.7);
\node at (0.75, -1) {$R$};
\node at (3.5, -1) {$S$};
\node at (6.25, -1) {$E$};

\def\r{.15}

\def\xa{0}
\def\ya{0}
\def\xb{1}
\def\yb{0}
\fill[Blue!80](\xa,\ya) circle (\r);
\fill[Blue!80](\xb,\yb) circle (\r);
\fill[Blue!80](\xa+0.866*\r,\ya-0.5*\r) rectangle (\xb-0.866*\r,\yb+0.5*\r);
\draw[thick] (\xa,\ya) ++(30:\r) arc[start angle=30, end angle=330, radius=\r];
\draw[thick] (\xb,\yb) ++(-150:\r) arc[start angle=-150, end angle=150, radius=\r];
\draw[thick](\xa+0.866*\r,\ya+0.5*\r) -- (\xb-0.866*\r,\yb+0.5*\r);
\draw[thick](\xa+0.866*\r,\ya-0.5*\r) -- (\xb-0.866*\r,\yb-0.5*\r);

\def\xa{2}
\def\ya{0}
\def\xb{3}
\def\yb{0}
\fill[YellowOrange!80](\xa,\ya) circle (\r);
\fill[YellowOrange!80](\xb,\yb) circle (\r);
\fill[YellowOrange!80](\xa+0.866*\r,\ya-0.5*\r) rectangle (\xb-0.866*\r,\yb+0.5*\r);
\draw[thick] (\xa,\ya) ++(30:\r) arc[start angle=30, end angle=330, radius=\r];
\draw[thick] (\xb,\yb) ++(-150:\r) arc[start angle=-150, end angle=150, radius=\r];
\draw[thick](\xa+0.866*\r,\ya+0.5*\r) -- (\xb-0.866*\r,\yb+0.5*\r);
\draw[thick](\xa+0.866*\r,\ya-0.5*\r) -- (\xb-0.866*\r,\yb-0.5*\r);

\def\xa{4}
\def\ya{0}
\def\xb{5}
\def\yb{0}
\fill[BrickRed!80](\xa,\ya) circle (\r);
\fill[BrickRed!80](\xb,\yb) circle (\r);
\fill[BrickRed!80](\xa+0.866*\r,\ya-0.5*\r) rectangle (\xb-0.866*\r,\yb+0.5*\r);
\draw[thick] (\xa,\ya) ++(30:\r) arc[start angle=30, end angle=330, radius=\r];
\draw[thick] (\xb,\yb) ++(-150:\r) arc[start angle=-150, end angle=150, radius=\r];
\draw[thick](\xa+0.866*\r,\ya+0.5*\r) -- (\xb-0.866*\r,\yb+0.5*\r);
\draw[thick](\xa+0.866*\r,\ya-0.5*\r) -- (\xb-0.866*\r,\yb-0.5*\r);

\def\xa{6}
\def\ya{0}
\def\xb{7}
\def\yb{0}
\fill[LimeGreen!80](\xa,\ya) circle (\r);
\fill[LimeGreen!80](\xb,\yb) circle (\r);
\fill[LimeGreen!80](\xa+0.866*\r,\ya-0.5*\r) rectangle (\xb-0.866*\r,\yb+0.5*\r);
\draw[thick] (\xa,\ya) ++(30:\r) arc[start angle=30, end angle=330, radius=\r];
\draw[thick] (\xb,\yb) ++(-150:\r) arc[start angle=-150, end angle=150, radius=\r];
\draw[thick](\xa+0.866*\r,\ya+0.5*\r) -- (\xb-0.866*\r,\yb+0.5*\r);
\draw[thick](\xa+0.866*\r,\ya-0.5*\r) -- (\xb-0.866*\r,\yb-0.5*\r);

\def\xa{1}
\def\ya{0.3}
\def\xb{2}
\def\yb{0.3}
\fill[Plum!80](\xa,\ya) circle (\r);
\fill[Plum!80](\xb,\yb) circle (\r);
\fill[Plum!80](\xa+0.866*\r,\ya-0.5*\r) rectangle (\xb-0.866*\r,\yb+0.5*\r);
\draw[thick] (\xa,\ya) ++(30:\r) arc[start angle=30, end angle=330, radius=\r];
\draw[thick] (\xb,\yb) ++(-150:\r) arc[start angle=-150, end angle=150, radius=\r];
\draw[thick](\xa+0.866*\r,\ya+0.5*\r) -- (\xb-0.866*\r,\yb+0.5*\r);
\draw[thick](\xa+0.866*\r,\ya-0.5*\r) -- (\xb-0.866*\r,\yb-0.5*\r);

\def\xa{3}
\def\ya{.3}
\def\xb{4}
\def\yb{0.3}
\fill[Melon](\xa,\ya) circle (\r);
\fill[Melon](\xb,\yb) circle (\r);
\fill[Melon](\xa+0.866*\r,\ya-0.5*\r) rectangle (\xb-0.866*\r,\yb+0.5*\r);
\draw[thick] (\xa,\ya) ++(30:\r) arc[start angle=30, end angle=330, radius=\r];
\draw[thick] (\xb,\yb) ++(-150:\r) arc[start angle=-150, end angle=150, radius=\r];
\draw[thick](\xa+0.866*\r,\ya+0.5*\r) -- (\xb-0.866*\r,\yb+0.5*\r);
\draw[thick](\xa+0.866*\r,\ya-0.5*\r) -- (\xb-0.866*\r,\yb-0.5*\r);

\def\xa{5}
\def\ya{.3}
\def\xb{6}
\def\yb{0.3}
\fill[NavyBlue!80](\xa,\ya) circle (\r);
\fill[NavyBlue!80](\xb,\yb) circle (\r);
\fill[NavyBlue!80](\xa+0.866*\r,\ya-0.5*\r) rectangle (\xb-0.866*\r,\yb+0.5*\r);
\draw[thick] (\xa,\ya) ++(30:\r) arc[start angle=30, end angle=330, radius=\r];
\draw[thick] (\xb,\yb) ++(-150:\r) arc[start angle=-150, end angle=150, radius=\r];
\draw[thick](\xa+0.866*\r,\ya+0.5*\r) -- (\xb-0.866*\r,\yb+0.5*\r);
\draw[thick](\xa+0.866*\r,\ya-0.5*\r) -- (\xb-0.866*\r,\yb-0.5*\r);

\def\xa{0}
\def\ya{0.7}
\def\xb{1}
\def\yb{0.7}
\fill[Orchid](\xa,\ya) circle (\r);
\fill[Orchid](\xb,\yb) circle (\r);
\fill[Orchid](\xa+0.866*\r,\ya-0.5*\r) rectangle (\xb-0.866*\r,\yb+0.5*\r);
\draw[thick] (\xa,\ya) ++(30:\r) arc[start angle=30, end angle=330, radius=\r];
\draw[thick] (\xb,\yb) ++(-150:\r) arc[start angle=-150, end angle=150, radius=\r];
\draw[thick](\xa+0.866*\r,\ya+0.5*\r) -- (\xb-0.866*\r,\yb+0.5*\r);
\draw[thick](\xa+0.866*\r,\ya-0.5*\r) -- (\xb-0.866*\r,\yb-0.5*\r);

\def\xa{2}
\def\ya{0.7}
\def\xb{3}
\def\yb{0.7}
\fill[Black!80](\xa,\ya) circle (\r);
\fill[Black!80](\xb,\yb) circle (\r);
\fill[Black!80](\xa+0.866*\r,\ya-0.5*\r) rectangle (\xb-0.866*\r,\yb+0.5*\r);
\draw[thick] (\xa,\ya) ++(30:\r) arc[start angle=30, end angle=330, radius=\r];
\draw[thick] (\xb,\yb) ++(-150:\r) arc[start angle=-150, end angle=150, radius=\r];
\draw[thick](\xa+0.866*\r,\ya+0.5*\r) -- (\xb-0.866*\r,\yb+0.5*\r);
\draw[thick](\xa+0.866*\r,\ya-0.5*\r) -- (\xb-0.866*\r,\yb-0.5*\r);

\def\xa{4}
\def\ya{.7}
\def\xb{5}
\def\yb{.7}
\fill[SpringGreen](\xa,\ya) circle (\r);
\fill[SpringGreen](\xb,\yb) circle (\r);
\fill[SpringGreen](\xa+0.866*\r,\ya-0.5*\r) rectangle (\xb-0.866*\r,\yb+0.5*\r);
\draw[thick] (\xa,\ya) ++(30:\r) arc[start angle=30, end angle=330, radius=\r];
\draw[thick] (\xb,\yb) ++(-150:\r) arc[start angle=-150, end angle=150, radius=\r];
\draw[thick](\xa+0.866*\r,\ya+0.5*\r) -- (\xb-0.866*\r,\yb+0.5*\r);
\draw[thick](\xa+0.866*\r,\ya-0.5*\r) -- (\xb-0.866*\r,\yb-0.5*\r);

\def\xa{6}
\def\ya{.7}
\def\xb{7}
\def\yb{.7}
\fill[Periwinkle](\xa,\ya) circle (\r);
\fill[Periwinkle](\xb,\yb) circle (\r);
\fill[Periwinkle](\xa+0.866*\r,\ya-0.5*\r) rectangle (\xb-0.866*\r,\yb+0.5*\r);
\draw[thick] (\xa,\ya) ++(30:\r) arc[start angle=30, end angle=330, radius=\r];
\draw[thick] (\xb,\yb) ++(-150:\r) arc[start angle=-150, end angle=150, radius=\r];
\draw[thick](\xa+0.866*\r,\ya+0.5*\r) -- (\xb-0.866*\r,\yb+0.5*\r);
\draw[thick](\xa+0.866*\r,\ya-0.5*\r) -- (\xb-0.866*\r,\yb-0.5*\r);

\def\xa{1}
\def\ya{1}
\def\xb{2}
\def\yb{1}
\fill[Black!30](\xa,\ya) circle (\r);
\fill[Black!30](\xb,\yb) circle (\r);
\fill[Black!30](\xa+0.866*\r,\ya-0.5*\r) rectangle (\xb-0.866*\r,\yb+0.5*\r);
\draw[thick] (\xa,\ya) ++(30:\r) arc[start angle=30, end angle=330, radius=\r];
\draw[thick] (\xb,\yb) ++(-150:\r) arc[start angle=-150, end angle=150, radius=\r];
\draw[thick](\xa+0.866*\r,\ya+0.5*\r) -- (\xb-0.866*\r,\yb+0.5*\r);
\draw[thick](\xa+0.866*\r,\ya-0.5*\r) -- (\xb-0.866*\r,\yb-0.5*\r);

\def\xa{3}
\def\ya{1}
\def\xb{4}
\def\yb{1}
\fill[GreenYellow](\xa,\ya) circle (\r);
\fill[GreenYellow](\xb,\yb) circle (\r);
\fill[GreenYellow](\xa+0.866*\r,\ya-0.5*\r) rectangle (\xb-0.866*\r,\yb+0.5*\r);
\draw[thick] (\xa,\ya) ++(30:\r) arc[start angle=30, end angle=330, radius=\r];
\draw[thick] (\xb,\yb) ++(-150:\r) arc[start angle=-150, end angle=150, radius=\r];
\draw[thick](\xa+0.866*\r,\ya+0.5*\r) -- (\xb-0.866*\r,\yb+0.5*\r);
\draw[thick](\xa+0.866*\r,\ya-0.5*\r) -- (\xb-0.866*\r,\yb-0.5*\r);

\def\xa{5}
\def\ya{1}
\def\xb{6}
\def\yb{1}
\fill[Salmon](\xa,\ya) circle (\r);
\fill[Salmon](\xb,\yb) circle (\r);
\fill[Salmon](\xa+0.866*\r,\ya-0.5*\r) rectangle (\xb-0.866*\r,\yb+0.5*\r);
\draw[thick] (\xa,\ya) ++(30:\r) arc[start angle=30, end angle=330, radius=\r];
\draw[thick] (\xb,\yb) ++(-150:\r) arc[start angle=-150, end angle=150, radius=\r];
\draw[thick](\xa+0.866*\r,\ya+0.5*\r) -- (\xb-0.866*\r,\yb+0.5*\r);
\draw[thick](\xa+0.866*\r,\ya-0.5*\r) -- (\xb-0.866*\r,\yb-0.5*\r);

\draw[thick,dotted](-0.2,0.5) -- (7.5,0.5);
\node at (8, 0.5) {$t=1$};
\draw[thick,dotted](-0.2,1.2) -- (7.5,1.2);
\node at (8, 1.2) {$t=2$};
\end{tikzpicture}\,.
Given the spatial structure of the circuit, it is natural that the non-universal early-time behaviour is sensitive to the geometry and the spatial locations of the subsystems $R$, $S$, and $E$.
In accordance with the motivation of studying the effect of spatial locality we consider the subsystems to be contiguous as shown above.

\begin{figure}[!b]
\includegraphics[width=\linewidth]{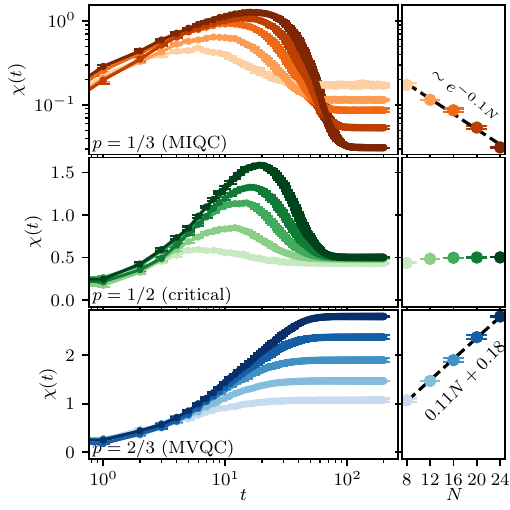}
\caption{Dynamics of the Holevo information, $\chi(t)$, for the 1+1D brickwork circuit. The data is for $\gamma=1/4$ and $\tau=0.8$. The three rows correspond to three different values of $p$ (same as in Fig.~\ref{fig:HI-alltoall}), representative of the MIQC and MVQC phases and the critical point. The left column shows the dynamics of $\chi(t)$ as a function of $t$ for different $N=8,12,16,20,24$ (lighter to darker colours), whereas the the right column shows the infinite-time saturation value as a function of $N$. The exponential decay and linear growth with $N$ of $\chi_{\rm sat}(N)$ is evident from the fits denoted by the black dashed lines.}
\label{fig:HI-bw}
\end{figure}

The results for the dynamics of $\chi(t)$ are shown in Fig.~\ref{fig:HI-bw}.
The universal features of the dynamics are qualitatively very similar to those of the all-to-all circuit as in Fig.~\ref{fig:HI-alltoall}.
At late times, in the MIQC and MVQC phases, $\chi_{\rm sat}$ decays exponentially and grows linearly, respectively, with $N$, whereas at the critical point, it is independent of $N$. 
However, at early times, the behaviour of $\chi(t)$ is distinct from that of the all-to-all circuit, in that the data is converged with $N$.
This is a reflection of the fact that the spatial locality of the circuit bounds the speed at which information is scrambled and spreads in space. 

The effect of this on the timescales for the information phases to emerge is that $t_\ast(N)\sim N$ in both the phases as well as the critical point. 
This is demonstrated in Fig.~\ref{fig:HI-timscale-bw} where $\Delta_{\chi}(t,N)$ (see Eq.~\ref{eq:approach}) when plotted as a function of $t/N$ for different $N$ collapse onto each other. 
Note that this is entirely consistent with the MIQC-MVQC transition happening at finite {\it depth densities} in 1+1D circuits, albeit in a slightly different setting~\cite{sherry2025miqc}.

\begin{figure}
\includegraphics[width=\linewidth]{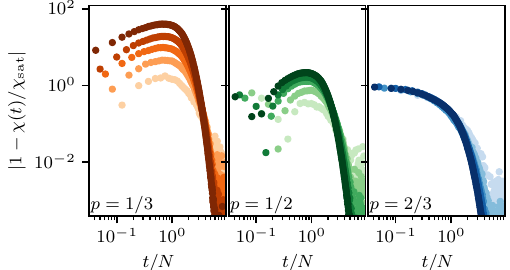}
\caption{For the 1+1D brickwork, the data for $\chi(t,N)$ approaches its infinite-time value $\chi_{\rm sat}(N)$ following a function of $t/N$, implying $t_\ast\sim N$, in both the phases as well as the critical point. This is indicated by $\Delta_\chi(t,N)$,  defined in Eq.~\ref{eq:approach}, for different $N$ (different colour intensities) collapsing onto each other when plotted as a function of $t/N$.}
\label{fig:HI-timscale-bw}
\end{figure}

\section{Summary and Outlook \label{sec:summary}}
We now briefly summarise the main results of this work. We have analysed the information content of partial projected ensembles (PPEs) generated from Haar-random states in the thermodynamic limit, quantified via the Holevo information. The PPE is defined as an ensemble of states over a subsystem $R$, conditioned upon measurements on another subsystem $S$, both of which are coupled to a bath/traced-out subsystem $E$. By leveraging the emergence of the second moment of the generalised Hilbert–Schmidt ensemble (gHSe) in the PPE, we analytically show that the spectral density of the states in the PPE are degenerate in the large-$N$ limit. Since the Holevo information depends only on these spectra, this allows us to compute it exactly in the large-$N$ limit. Together with supporting numerical results, this analysis reveals a sharp transition in the scaling of the Holevo information as the subsystem sizes are varied: when $R$ is smaller than the bath $E$, the Holevo information decays exponentially with system size, defining a \emph{measurement-invisible} phase. Once the size of $R$ exceeds that of $E$, the scaling of the Holevo information transitions into a volume-law regime, which is the \emph{measurement-visible} phase.

We contrasted the behaviour of the Holevo information with the entanglement phases of tripartite Haar-random states and found that the information structure of the PPE differs markedly from that of the underlying state. Notably, the Holevo phase boundary cuts across entanglement phase boundaries, revealing a \emph{measurement-invisible quantum-correlated} (MIQC) phase, in which extensive entanglement coexists with vanishing measurement influence between subsystems. This phenomenon, absent in bipartite pure states, is emblematic of many-body entanglement scrambling. Using our analytical approach, we rigorously established the existence of the MIQC phase for tripartite Haar-random states. The phase diagram further indicates that a necessary condition for the emergence of the MIQC phase is that the measured subsystem $S$ must be entangled with the bath $E$, highlighting the role of a mediating third subsystem in screening measurement back-action.
\add{A heuristic but intuitive picture that therefore emerges is that the decoupling regime arises due to the fact that all of the entanglement between $R$ and $S$ is mediated by $E$ such that tracing out $E$ decouples $R$ and $S$ (or equivalently,
$R$ and $S$ are decoupled due to monogamy of entanglement since they are individually maximally entangled with $E$). However, a picture that relies solely on monogamy is insufficient to explain the emergence of the MIQC phase. This is because in this phase $E$ mediates all of the effects of measurement of $S$ on $R$, but not all of the entanglement, as evidenced by the non-vanishing negativity in the MIQC
phase.}

Having established the phases of the Holevo information and the transitions between them for Haar-random states, we numerically demonstrated that precisely the same phase diagram emerges dynamically in circuits built from 2-local gates, with no spatial structure (all-to-all circuit) and with a 1+1D brickwork spatial structure.
In the former the information phases emerged at timescales which had either no or very weak (logarithmic) dependence on system size, whereas in the latter, the timescales scaled linearly with system size.

Our results on the emergence of the MIQC phase have implications for deep thermalisation, in particular for projected ensembles over extensive subsystems in settings with qubit loss errors or lossy measurements.  While one anticipates that the (partial) projected ensemble on $R$ approaches the trivial ensemble ($\rho_R(o_S)=\rho_R~\forall~o_S$) only once the entanglement between $S$ and $R$ becomes small, our results show that the trivial ensemble can emerge even when this entanglement persists at extensive scales.

While it is evident from our results that the MIQC phase arises due to the scrambled nature of quantum correlations in highly entangled many-body states, it is additionally insightful to view the emergence of the MIQC phase from the perspective of typicality phenomena. The central idea is that reduced density matrices of states drawn from broadly distributed ensembles are typical, as exemplified by canonical typicality~\cite{goldstein2006canonical}. This phenomenon arises because an overwhelming majority of states in the Hilbert space (or within fixed-energy subspaces) are highly entangled volume-law states with nearly maximal-entropy reduced density matrices. In our setting, deep thermalisation at low-order moments suggests that projected ensembles sample a broad distribution, implying that PPEs become trivial in certain regimes determined by subsystem sizes. One may therefore expect the phenomenology of trivial ensembles to persist even in scenarios where PPEs do not converge to the gHSe, even at low-order moments. For instance, in systems with conservation laws, PPEs may converge to partial traces of Scrooge ensembles~\cite{mark2024maximum}, and a recent generalisation of canonical typicality~\cite{Teufel2024gencanonical} demonstrates that reduced density matrices of Scrooge-ensemble states are themselves typical. This suggests that the MIQC phase might persist in the presence of conservation laws, which would be an interesting direction for future work.

\begin{acknowledgements}
AS would like to thank W. W. Ho for insightful discussions. SM and SR would like to thank P. W. Claeys for useful discussions and collaboration on related work. The authors acknowledge support of the Department of Atomic Energy, Government of India, under project no. RTI4001.
SR acknowledges support from SERB-DST, Government of India, under Grant No. SRG/2023/000858 and from a Max Planck Partner Group grant between ICTS-TIFR, Bengaluru and MPIPKS, Dresden.
\end{acknowledgements}

\onecolumngrid
\appendix
\section{Proof of Theorem~\ref{thm:ghse} \label{app:ghse-thm}}
We stated in Theorem~\ref{thm:ghse} that, for a tripartite Haar-random state $\ket{\Psi}$, the moments of ${\cal E}_{{\rm PPE}_R}$ satisfy $\Vert \rho^{(k)}_{{\rm{PPE}}_R}-\rho^{(k)}_{{\rm{gHSe}}_R}\Vert_1\leq \epsilon$ with probability at least $1-\delta$ if the bound in Eq.~\ref{eq:ghse-bound} is satisfied. In this appendix, we provide a proof for this bound.
\begin{proof}
We first lay out a few definitions and state a couple of lemmas. The structure of the proof relies on similar theorems proved in \cite{cotler2023emergent}. We define:
	\begin{equation*}
\ket{\widetilde{\psi}_z}=\left(\mathbb{I}_R\otimes\ket{z}\bra{z}_S\otimes\mathbb{I}_E\right)\ket{\Psi}\,,
	\end{equation*}
	such that the $k^{th}$ moment of projected ensemble of mixed states on $R$ is given by 
	\begin{equation*}
		\rho^{(k)}_{{\rm{PPE}}_R}	= \sum_z\frac{\left(\text{Tr}_E\left(\ket{\widetilde{\psi}_z}\bra{\widetilde{\psi}_z}\right)\right)^{\otimes k}}{\braket{\widetilde{\psi}_z|\widetilde{\psi}_z}^{k-1}}\,.
		\end{equation*}
	\textbf{Lemma 1 (Levy's lemma).} Let $f:\mathbb{S}^{2d-1}\rightarrow \mathbb{R}$ satisfying $|f(u)-f(v)|\leq \eta~\Vert u-v\Vert_2$. Then, for any $\epsilon\geq 0$, we have
	\begin{equation}
		\text{Prob}_{\Phi\sim\text{Haar}(d)}\left[\vert f(\Phi)-\mathbb{E}_{\Psi\sim\text{Haar}(d)}\left[f(\Psi)\right]\vert\geq \epsilon\right]\leq 2\exp\left(-\frac{2d\epsilon^2}{9\pi^3\eta^2}\right)\,.
	\end{equation}
	We now define the function which enables us to apply Levy's lemma.  Let $\{\ket{i}\}$ be a basis of $\mathcal{H}_R^{\otimes k}$ formed by the tensor product of basis states of each copy of $\mathcal{H}_R$ such that $\ket{i}=\otimes_{l=1}^k\ket{i^{(l)}}$. Similarly, we write $\ket{j}=\otimes_{l=1}^k\ket{j^{(l)}}$. Analogously, let $\{\ket{r}\}$ be a basis of $\mathcal{H}_E^{\otimes k}$ formed by the tensor product of basis states of each copy of $\mathcal{H}_R$ such that $\ket{r}=\otimes_{l=1}^k\ket{r^{(l)}}$. This will help us take the partial trace. Now, consider the function $f_{ij}:\mathbb{S}^{2d-1}\rightarrow\mathbb{R}$ defined by
	\begin{equation}
		\begin{split}
		f_{ij}(\ket{\Psi})&=\bra{i}\left(\sum_z\frac{\left(\text{Tr}_E\left(\ket{\widetilde{\psi}_z}\bra{\widetilde{\psi}_z}\right)\right)^{\otimes k}}{\braket{\widetilde{\psi}_z|\widetilde{\psi}_z}^{k-1}}\right)\ket{j}\\
		&=\sum_z\frac{\prod_{l=1}^k\sum_{r^{(l)}}\braket{i^{(l)},r^{(l)}|\widetilde{\psi}_z}\braket{\widetilde{\psi}_z|j^{(l)},r^{(l)}}}{\braket{\widetilde{\psi}_z|\widetilde{\psi}_z}^{k-1}}\,.
		\end{split}
		\label{eq:fij}
	\end{equation}
	\\\\
	\textbf{Lemma 2.} A Lipschitz constant for $f_{ij}$ is $\eta=2(2k-1)$.
	\\\\
 Borrowing the notational framework from \cite{cotler2023emergent}, let $\ket{\Psi}=\begin{bmatrix}
     \mathbb{I} & i\mathbb{I}
 \end{bmatrix}.~\Vec{u}$, where $\Vec{u}\in\mathbb{S}^{2d-1}$ and $\mathbb{I}$ represents the $d\times d$ identity matrix.   
Any $\eta$ such that
\begin{equation}
	\eta\geq \left\Vert\frac{d}{d\Vec{u}}f_{ij}(\ket{\Psi})\right\Vert_2\,,
\end{equation}
can be a Lipschitz constant since $f_{ij}$ is differentiable. 
\as{The vector derivative of $f_{ij}$ is defined as 
\eq{\frac{d}{d\Vec{u}}f_{ij}(\ket{\Psi})=\left(\frac{\partial f_{ij}}{\partial u^1}(\ket{\Psi}),~...,~\frac{\partial f_{ij}}{\partial u^{2d}}(\ket{\Psi})\right)}
and the Euclidean norm of an operator $O$ is defined as $\Vert O\Vert_2={\rm{Tr}}\left[O^\dagger O\right]$.} We now have
\begin{equation}
\begin{split}
\left\Vert\frac{d}{d\Vec{u}}f_{ij}(\ket{\Psi})\right\Vert_2 =&~\left\Vert\sum_{l'=1}^k\sum_z\frac{\prod_{l\neq l'}^k\sum_{r^{(l)}}\braket{i^{(l)},r^{(l)}|\widetilde{\psi}_z}\braket{\widetilde{\psi}_z|j^{(l)},r^{(l)}}}{\braket{\widetilde{\psi}_z|\widetilde{\psi}_z}^{k-1}}\sum_{r^{(l')}}\frac{d}{d\Vec{u}}\braket{i^{(l)},r^{(l)}|\widetilde{\psi}_z}\braket{\widetilde{\psi}_z|j^{(l)},r^{(l)}}\right.\\
&\left.-~(k-1)\sum_z\frac{\prod_{l=1}^k\sum_{r^{(l)}}\braket{i^{(l)},r^{(l)}|\widetilde{\psi}_z}\braket{\widetilde{\psi}_z|j^{(l)},r^{(l)}}}{\braket{\widetilde{\psi}_z|\widetilde{\psi}_z}^{k}}\frac{d}{d\Vec{u}}\braket{\widetilde{\psi}_z|\widetilde{\psi}_z}\right\Vert_2\\
\leq &~\left\Vert\sum_{l'=1}^k\sum_z\frac{\prod_{l\neq l'}^k\sum_{r^{(l)}}\braket{i^{(l)},r^{(l)}|\widetilde{\psi}_z}\braket{\widetilde{\psi}_z|j^{(l)},r^{(l)}}}{\braket{\widetilde{\psi}_z|\widetilde{\psi}_z}^{k-1}}\left(\begin{bmatrix}
    M_{l'}^+ & -iM_{l'}^-\\
    iM_{l'}^- & M^+_{l'}
\end{bmatrix}\otimes\ket{z}\bra{z}_S\otimes\mathbb{I}_E\right).~\Vec{u}~\right\Vert_2\\
&+2(k-1)\left\Vert~\sum_z\frac{\prod_{l=1}^k\sum_{r^{(l)}}\braket{i^{(l)},r^{(l)}|\widetilde{\psi}_z}\braket{\widetilde{\psi}_z|j^{(l)},r^{(l)}}}{\braket{\widetilde{\psi}_z|\widetilde{\psi}_z}^{k}}\begin{bmatrix}
    P_z & 0 \\
    0 & P_z
\end{bmatrix}.~\Vec{u}~
\right\Vert_2\,,
\label{eq:ineq-lemma2}
\end{split}
\end{equation}
where we have defined $M^{\pm}_{l'}=\ket{i^{(l')}}\bra{j^{(l')}}\pm\ket{j^{(l')}}\bra{i^{(l')}}$. The latter line arises from the triangle inequality and explicit evaluation of the derivatives. Defining 
\begin{equation}
\begin{split}
b_{l',z}&=\frac{\prod_{l\neq l'}^k\sum_{r^{(l)}}\braket{i^{(l)},r^{(l)}|\widetilde{\psi}_z}\braket{\widetilde{\psi}_z|j^{(l)},r^{(l)}}}{\braket{\widetilde{\psi}_z|\widetilde{\psi}_z}^{k-1}}\\
c_z&=\frac{\prod_{l=1}^k\sum_{r^{(l)}}\braket{i^{(l)},r^{(l)}|\widetilde{\psi}_z}\braket{\widetilde{\psi}_z|j^{(l)},r^{(l)}}}{\braket{\widetilde{\psi}_z|\widetilde{\psi}_z}^{k}}\\
M_{l'}&=\begin{bmatrix}
    M_{l'}^+ & -iM_{l'}^-\\
    iM_{l'}^- & M^+_{l'}\\
    \end{bmatrix}\,,   
\end{split}
\end{equation}
Eq.~\eqref{eq:ineq-lemma2} can be rewritten as
\begin{equation}
    \begin{split}
        \left\Vert\frac{d}{d\Vec{u}}f_{ij}(\ket{\Psi})\right\Vert_2 \leq &~  \left(\sum_{l'=1,p'=1}^k\sum_zb^*_{p',z}b_{l',z}~\Vec{u}^T.\left(M^\dagger_{p'}M_{l'}\otimes\ket{z}\bra{z}_S\otimes\mathbb{I}_E\right).~\Vec{u}\right)^{1/2}\\
        &+2(k-1)\left(\sum_z|c_z|^2~\Vec{u}^T.\begin{bmatrix}
            P_z & 0\\
            0 & P_z        \end{bmatrix}.~\Vec{u}\right)^{1/2}\\
        \leq &~  \left(\sum_{l'=1,p'=1}^k\sum_z|b^*_{p',z}||b_{l',z}|~\Vec{u}^T.\left(|M^\dagger_{p'}M_{l'}|\otimes\ket{z}\bra{z}_S\otimes\mathbb{I}_E\right).~\Vec{u}\right)^{1/2}\\
        &+2(k-1)\left(\sum_z|c_z|^2~\Vec{u}^T.\begin{bmatrix}
            P_z & 0\\
            0 & P_z        \end{bmatrix}.~\Vec{u}\right)^{1/2}\,, 
    \label{eq:lemma2-ineq2}
    \end{split}
\end{equation}
where $|A|=\sqrt{A^\dagger A}$. Now since
\begin{equation}
\begin{split}
|b_{l',z}|&=\left\vert\text{Tr}\left\{\left(\otimes_{l\neq l'}\ket{j^{(l)}}\bra{i^{(l)}}\right).\left(\frac{\left(\text{Tr}_E\ket{\widetilde{\psi}_z}\bra{\widetilde{\psi}_z}\right)^{\otimes (k-1)}}{\braket{\widetilde{\psi}_z|\widetilde{\psi}_z}^{k-1}}\right)\right\}\right\vert\\&\leq \Bigg\Vert\otimes_{l\neq l'}\ket{j^{(l)}}\bra{i^{(l)}}\Bigg\Vert_2\left\Vert \frac{\left(\text{Tr}_E\ket{\widetilde{\psi}_z}\bra{\widetilde{\psi}_z}\right)^{\otimes (k-1)}}{\braket{\widetilde{\psi}_z|\widetilde{\psi}_z}^{k-1}}\right\Vert_2=1\,,
\end{split}
\end{equation}
and 
\begin{equation}
    \begin{split}
        |c_z|&=\left\vert\text{Tr}\left\{\left(\otimes_{l\neq l'}\ket{j^{(l)}}\bra{i^{(l)}}\right).\left(\frac{\left(\text{Tr}_E\ket{\widetilde{\psi}_z}\bra{\widetilde{\psi}_z}\right)^{\otimes k}}{\braket{\widetilde{\psi}_z|\widetilde{\psi}_z}^{k}}\right)\right\}\right\vert\\&\leq \Bigg\Vert\otimes_{l\neq l'}\ket{j^{(l)}}\bra{i^{(l)}}\Bigg\Vert_2\left\Vert \frac{\left(\text{Tr}_E\ket{\widetilde{\psi}_z}\bra{\widetilde{\psi}_z}\right)^{\otimes k}}{\braket{\widetilde{\psi}_z|\widetilde{\psi}_z}^{k}}\right\Vert_2=1\,,
    \end{split}
\end{equation}
Eq.~\eqref{eq:lemma2-ineq2} is bounded by
\begin{equation}
   \begin{split}
        \left\Vert\frac{d}{d\Vec{u}}f_{ij}(\ket{\Psi})\right\Vert_2 \leq &~ \left(\sum_{l'=1,p'=1}^k\Vec{u}^T.\left(|M^\dagger_{p'}M_{l'}|\otimes\mathbb{I}_{S}\otimes\mathbb{I}_E\right).~\Vec{u}\right)^{1/2}+~2(k-1)\left(\Vec{u}^T.\Vec{u}\right)\\
        \leq &~\left(\sum_{l'=1,p'=1}^k\left\Vert M^\dagger_{p'}M_{l'}\right\Vert_\infty\right)^{1/2}+~2(k-1)\\
        \leq &~ 2k+2(k-1)=2(2k-1)\,,
   \end{split}
\end{equation}
where the bound in the last line arises from $\left\Vert M^\dagger_{p'}M_{l'}\right\Vert_\infty\leq\left\Vert M^\dagger_{p'}\right\Vert_\infty\Big\Vert M_{l'}\Big\Vert_\infty\leq 4$. This concludes the proof.
\\\\
Armed with the two lemmas, we now prove Theorem 1.  We apply Levy's lemma to the function $f_{ij}$ defined in \eqref{eq:fij}:
	\begin{equation}
		\begin{split}			\text{Prob}_{\psi\sim\text{Haar}(d)}\left[\vert f_{ij}(\psi)-\mathbb{E}_{\phi\sim\text{Haar}(d)}\left[f_{ij}(\phi)\right]\vert\geq \epsilon\right]&\leq 2\exp\left(-\frac{d\epsilon^2}{18\pi^3(2k-1)^2}\right)\,.
		\end{split}
	\end{equation}
    Rescaling $\epsilon\rightarrow\epsilon/d_A^{2k}$, along with performing a union bound gives us 
    \begin{equation}
		\begin{split}		\text{Prob}_{\psi\sim\text{Haar}(d)}\left[\vert f_{ij}(\psi)-\mathbb{E}_{\phi\sim\text{Haar}(d)}\left[f_{ij}(\phi)\right]\vert\geq \frac{\epsilon}{d_A^{2k}}, \text{for any $i,j$}\right]&\leq 2\exp\left(-\frac{d\epsilon^2}{18\pi^3(2k-1)^2D_R^{4k}}\right)\,.
		\end{split}
	\end{equation}
    By summing over $f_{ij}$, the above bound can be recast as
\begin{equation}
		\begin{split}
\text{Prob}_{\psi\sim\text{Haar}(d)}\left[\left\Vert\sum_z\frac{\left(\text{Tr}_E\left(\ket{\widetilde{\psi}_z}\bra{\widetilde{\psi}_z}\right)\right)^{\otimes k}}{\braket{\widetilde{\psi}_z|\widetilde{\psi}_z}^{k-1}}-\rho^{(k)}_{{\rm{gHSe}}_R}\right\Vert_{\text{entrywise},1}\geq \epsilon\right]\leq 2D_R^{2k}\exp\left(-\frac{D_SD_E\epsilon^2}{18\pi^3(2k-1)^2D_R^{4k-1}}\right)\,,
			\end{split}
	\end{equation}
    where $\rho^{(k)}_{{\rm{gHSe}}_R}=\mathbb{E}_{\phi\sim\text{Haar}(d)}\left[\left(\text{Tr}_E\ket{\phi}\bra{\phi}\right)^{\otimes k}\right]$. Using $\Vert.\Vert_{\text{entrywise},1}\geq \Vert.\Vert_1$, we get
	\begin{equation}
		\begin{split}
\text{Prob}_{\psi\sim\text{Haar}(d)}\left[\left\Vert\sum_z\frac{\left(\text{Tr}_E\left(\ket{\widetilde{\psi}_z}\bra{\widetilde{\psi}_z}\right)\right)^{\otimes k}}{\braket{\widetilde{\psi}_z|\widetilde{\psi}_z}^{k-1}}-\rho^{(k)}_{{\rm{gHSe}}_R}\right\Vert_1\geq \epsilon\right]\leq 2D_R^{2k}\exp\left(-\frac{D_SD_E\epsilon^2}{18\pi^3(2k-1)^2D_R^{4k-1}}\right)\,.
			\end{split}
	\end{equation}
	 The above bound can be rewritten as
	
	\begin{equation}
		\begin{split}
			\text{Prob}_{\psi\sim\text{Haar}(d)}\left[\left\Vert\sum_z\frac{\left(\text{Tr}_E\left(\ket{\widetilde{\psi}_z}\bra{\widetilde{\psi}_z}\right)\right)^{\otimes k}}{\braket{\widetilde{\psi}_z|\widetilde{\psi}_z}^{k-1}}-\rho^{(k)}_{{\rm{gHSe}}_R}\right\Vert_1\leq \epsilon\right]\geq 1-2D_R^{2k}\exp\left(-\frac{D_SD_E\epsilon^2}{18\pi^3(2k-1)^2D_R^{4k-1}}\right)\,.
		\end{split}
	\end{equation}
Thus, the projected ensemble forms a $\epsilon$-approximate $k$-gHSe with probability at least $1-\delta$ if
	\begin{equation}
	D_SD_E\geq\frac{18\pi^3(2k-1)^2D_R^{4k-1}}{\epsilon^2}\left(2k\log(D_R)+\log(2/\delta)\right)\,,
	\end{equation}
    which is the desired result.
\end{proof}

\twocolumngrid

\section{Emergence of the gHSe \label{app:ghse}}

\begin{figure}
\includegraphics[width=\linewidth]{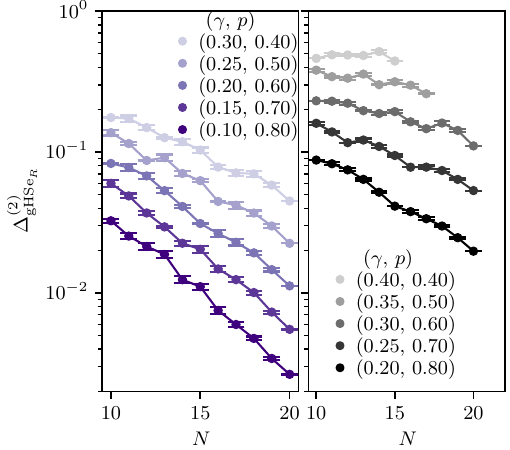}
\caption{Behaviour of $\Delta_{{\rm gHSe}_R}^{(2)}$ with system size $N$. Left panel shows the decay of $\Delta^{(2)}_{ {\rm gHSe}_ R}$ with $N$ for various points on the Holevo information transition line. The plot clearly indicates that the PPE forms a gHSe (up to the second moment) in the thermodynamic limit. Right panel shows the behaviour of $\Delta^{(2)}_{{\rm gHSe }_{R}}$ in the MVQC phase. The decay is slower and almost flat for some points in the MVQC phase, hinting at a lack of emergence of gHSe (up to the second moment) in parts of the MVQC phase in thermodynamic limit.}
\label{fig:gHSE_phase}
\end{figure}

In this Appendix we provide numerical 
evidence for the emergence -- or lack thereof -- of
the gHSe in the PPE generated by Haar-random states at the level of the second moment of the ensembles. 
To this end, we compute the trace distance between the second moments of the gHSe and the PPE
\eq{
\Delta_{{\rm gHSe}_R}^{(2)} = \frac{1}{2}\left|\left|\rho_{{\rm PPE}_R}^{(2)} - \rho_{{\rm gHSe}_R}^{(2)}\right|\right|_1\,,
\label{eq:Delta-gHSe-1}
}
at various points $(\gamma,p)$ in the phase diagram. The dependence of the distance in Eq.~\ref{eq:Delta-gHSe} with system size $N$ for different $(\gamma,p)$ is presented in Fig.~\ref{fig:gHSE_phase}. The data indicates that the PPE is well described by a gHSe, up to the second moment at least, along the Holevo information transition line (left panel of Fig.~\ref{fig:gHSE_phase}) in the thermodynamic limit.

This conclusion continues to hold true in the MIQC phase and can be argued for in a straightforward manner. Let $\Delta(\gamma,p)$ denote the distance in Eq.~\ref{eq:Delta-gHSe-1} for some $(\gamma,p)$. For any pair of points $(\gamma_1,p)$ and $(\gamma_2,p)$ such that $\gamma_2\leq\gamma_1$, we know that
\eq{\Delta(\gamma_2,p)\leq\Delta(\gamma_1,p)\,.\label{eq:contractive}}
This follows from the contractivity of the trace norm under partial traces, since for a fixed $p$, any moment of the PPE at $\gamma_2$ can be obtained by a partial trace of the corresponding moment of the PPE at $\gamma_1$. The phase diagram in Fig.~\ref{fig:phase-dia}  shows that if a point $(\gamma_1,p)$ lies on the Holevo information transition line, then the point $(\gamma_2,p)$ with $\gamma_2<\gamma_1$ necessarily lies in the MIQC phase.
Therefore, given Eq.~\ref{eq:contractive}, the emergence of the gHSe in the PPE (at the level of second moments) for any point $(\gamma_1,p)$ on the Holevo information transition line implies the same for any point $(\gamma_2,p)$ in the MIQC phase.

On the other hand, in the MVQC phase our numerical data suggests that while the gHSe emerges in the regime of large $p$ and small $\gamma$ (right panel of Fig.~\ref{fig:gHSE_phase}) it may not do so at smaller $p$. This is concomitant with the fact that the PPE is not described by a gHSe for $\gamma>0.5$ in the thermodynamic limit, which can be argued for in a straightforward manner. From the contractivity of the trace norm under partial traces, we know that if the first moment of the PPE does not asymptotically converge to that of any target ensemble for large $N$, then the higher moments do not converge to the respective moments of the target ensemble. The first moment of any gHSe is the maximally mixed state and the first moment of the PPE is simply the reduced density matrix over $R$, which is known to not be asymptotically equal to a maximally mixed state for $\gamma>0.5$. Thus, the PPE cannot be described by any gHSe for $\gamma>0.5$.

\bibliography{refs}

%apsrev4-2.bst 2019-01-14 (MD) hand-edited version of apsrev4-1.bst
%Control: key (0)
%Control: author (8) initials jnrlst
%Control: editor formatted (1) identically to author
%Control: production of article title (0) allowed
%Control: page (0) single
%Control: year (1) truncated
%Control: production of eprint (0) enabled
\begin{thebibliography}{57}%
\makeatletter
\providecommand \@ifxundefined [1]{%
 \@ifx{#1\undefined}
}%
\providecommand \@ifnum [1]{%
 \ifnum #1\expandafter \@firstoftwo
 \else \expandafter \@secondoftwo
 \fi
}%
\providecommand \@ifx [1]{%
 \ifx #1\expandafter \@firstoftwo
 \else \expandafter \@secondoftwo
 \fi
}%
\providecommand \natexlab [1]{#1}%
\providecommand \enquote  [1]{``#1''}%
\providecommand \bibnamefont  [1]{#1}%
\providecommand \bibfnamefont [1]{#1}%
\providecommand \citenamefont [1]{#1}%
\providecommand \href@noop [0]{\@secondoftwo}%
\providecommand \href [0]{\begingroup \@sanitize@url \@href}%
\providecommand \@href[1]{\@@startlink{#1}\@@href}%
\providecommand \@@href[1]{\endgroup#1\@@endlink}%
\providecommand \@sanitize@url [0]{\catcode `\\12\catcode `\$12\catcode
  `\&12\catcode `\#12\catcode `\^12\catcode `\_12\catcode `\%12\relax}%
\providecommand \@@startlink[1]{}%
\providecommand \@@endlink[0]{}%
\providecommand \url  [0]{\begingroup\@sanitize@url \@url }%
\providecommand \@url [1]{\endgroup\@href {#1}{\urlprefix }}%
\providecommand \urlprefix  [0]{URL }%
\providecommand \Eprint [0]{\href }%
\providecommand \doibase [0]{https://doi.org/}%
\providecommand \selectlanguage [0]{\@gobble}%
\providecommand \bibinfo  [0]{\@secondoftwo}%
\providecommand \bibfield  [0]{\@secondoftwo}%
\providecommand \translation [1]{[#1]}%
\providecommand \BibitemOpen [0]{}%
\providecommand \bibitemStop [0]{}%
\providecommand \bibitemNoStop [0]{.\EOS\space}%
\providecommand \EOS [0]{\spacefactor3000\relax}%
\providecommand \BibitemShut  [1]{\csname bibitem#1\endcsname}%
\let\auto@bib@innerbib\@empty
%</preamble>
\bibitem [{\citenamefont {Nielsen}\ and\ \citenamefont
  {Chuang}(2010)}]{nielsen-chaung-book}%
  \BibitemOpen
  \bibfield  {author} {\bibinfo {author} {\bibfnamefont {M.~A.}\ \bibnamefont
  {Nielsen}}\ and\ \bibinfo {author} {\bibfnamefont {I.~L.}\ \bibnamefont
  {Chuang}},\ }\href@noop {} {\emph {\bibinfo {title} {Quantum Computation and
  Quantum Information: 10th Anniversary Edition}}}\ (\bibinfo  {publisher}
  {Cambridge University Press},\ \bibinfo {year} {2010})\ Chap.~\bibinfo
  {chapter} {9}\BibitemShut {NoStop}%
\bibitem [{\citenamefont {Nandkishore}\ and\ \citenamefont
  {Huse}(2015)}]{nandkishore2015many}%
  \BibitemOpen
  \bibfield  {author} {\bibinfo {author} {\bibfnamefont {R.}~\bibnamefont
  {Nandkishore}}\ and\ \bibinfo {author} {\bibfnamefont {D.~A.}\ \bibnamefont
  {Huse}},\ }\bibfield  {title} {\bibinfo {title} {Many-body localization and
  thermalization in quantum statistical mechanics},\ }\href
  {https://doi.org/10.1146/annurev-conmatphys-031214-014726} {\bibfield
  {journal} {\bibinfo  {journal} {Annu. Rev. Condens. Matter Phys.}\ }\textbf
  {\bibinfo {volume} {6}},\ \bibinfo {pages} {15} (\bibinfo {year}
  {2015})}\BibitemShut {NoStop}%
\bibitem [{\citenamefont {D’Alessio}\ \emph {et~al.}(2016)\citenamefont
  {D’Alessio}, \citenamefont {Kafri}, \citenamefont {Polkovnikov},\ and\
  \citenamefont {Rigol}}]{dalessio2016from}%
  \BibitemOpen
  \bibfield  {author} {\bibinfo {author} {\bibfnamefont {L.}~\bibnamefont
  {D’Alessio}}, \bibinfo {author} {\bibfnamefont {Y.}~\bibnamefont {Kafri}},
  \bibinfo {author} {\bibfnamefont {A.}~\bibnamefont {Polkovnikov}},\ and\
  \bibinfo {author} {\bibfnamefont {M.}~\bibnamefont {Rigol}},\ }\bibfield
  {title} {\bibinfo {title} {From quantum chaos and eigenstate thermalization
  to statistical mechanics and thermodynamics},\ }\href
  {https://doi.org/10.1080/00018732.2016.1198134} {\bibfield  {journal}
  {\bibinfo  {journal} {Advances in Physics}\ }\textbf {\bibinfo {volume}
  {65}},\ \bibinfo {pages} {239–362} (\bibinfo {year} {2016})}\BibitemShut
  {NoStop}%
\bibitem [{\citenamefont {Nahum}\ \emph {et~al.}(2017)\citenamefont {Nahum},
  \citenamefont {Ruhman}, \citenamefont {Vijay},\ and\ \citenamefont
  {Haah}}]{nahum2017quantum}%
  \BibitemOpen
  \bibfield  {author} {\bibinfo {author} {\bibfnamefont {A.}~\bibnamefont
  {Nahum}}, \bibinfo {author} {\bibfnamefont {J.}~\bibnamefont {Ruhman}},
  \bibinfo {author} {\bibfnamefont {S.}~\bibnamefont {Vijay}},\ and\ \bibinfo
  {author} {\bibfnamefont {J.}~\bibnamefont {Haah}},\ }\bibfield  {title}
  {\bibinfo {title} {Quantum entanglement growth under random unitary
  dynamics},\ }\href {https://doi.org/10.1103/PhysRevX.7.031016} {\bibfield
  {journal} {\bibinfo  {journal} {Phys. Rev. X}\ }\textbf {\bibinfo {volume}
  {7}},\ \bibinfo {pages} {031016} (\bibinfo {year} {2017})}\BibitemShut
  {NoStop}%
\bibitem [{\citenamefont {Garrison}\ and\ \citenamefont
  {Grover}(2018)}]{garrison2018does}%
  \BibitemOpen
  \bibfield  {author} {\bibinfo {author} {\bibfnamefont {J.~R.}\ \bibnamefont
  {Garrison}}\ and\ \bibinfo {author} {\bibfnamefont {T.}~\bibnamefont
  {Grover}},\ }\bibfield  {title} {\bibinfo {title} {Does a single eigenstate
  encode the full hamiltonian?},\ }\href
  {https://doi.org/10.1103/PhysRevX.8.021026} {\bibfield  {journal} {\bibinfo
  {journal} {Phys. Rev. X}\ }\textbf {\bibinfo {volume} {8}},\ \bibinfo {pages}
  {021026} (\bibinfo {year} {2018})}\BibitemShut {NoStop}%
\bibitem [{\citenamefont {Hayden}\ and\ \citenamefont
  {Preskill}(2007)}]{hayden2007black}%
  \BibitemOpen
  \bibfield  {author} {\bibinfo {author} {\bibfnamefont {P.}~\bibnamefont
  {Hayden}}\ and\ \bibinfo {author} {\bibfnamefont {J.}~\bibnamefont
  {Preskill}},\ }\bibfield  {title} {\bibinfo {title} {Black holes as mirrors:
  quantum information in random subsystems},\ }\href
  {https://doi.org/10.1088/1126-6708/2007/09/120} {\bibfield  {journal}
  {\bibinfo  {journal} {Journal of High Energy Physics}\ }\textbf {\bibinfo
  {volume} {2007}},\ \bibinfo {pages} {120–120} (\bibinfo {year}
  {2007})}\BibitemShut {NoStop}%
\bibitem [{\citenamefont {Sekino}\ and\ \citenamefont
  {Susskind}(2008)}]{sekino2008fast}%
  \BibitemOpen
  \bibfield  {author} {\bibinfo {author} {\bibfnamefont {Y.}~\bibnamefont
  {Sekino}}\ and\ \bibinfo {author} {\bibfnamefont {L.}~\bibnamefont
  {Susskind}},\ }\bibfield  {title} {\bibinfo {title} {Fast scramblers},\
  }\href {https://doi.org/10.1088/1126-6708/2008/10/065} {\bibfield  {journal}
  {\bibinfo  {journal} {Journal of High Energy Physics}\ }\textbf {\bibinfo
  {volume} {2008}},\ \bibinfo {pages} {065} (\bibinfo {year}
  {2008})}\BibitemShut {NoStop}%
\bibitem [{\citenamefont {Liu}\ and\ \citenamefont
  {Suh}(2014{\natexlab{a}})}]{liu2014entanglement}%
  \BibitemOpen
  \bibfield  {author} {\bibinfo {author} {\bibfnamefont {H.}~\bibnamefont
  {Liu}}\ and\ \bibinfo {author} {\bibfnamefont {S.~J.}\ \bibnamefont {Suh}},\
  }\bibfield  {title} {\bibinfo {title} {Entanglement tsunami: Universal
  scaling in holographic thermalization},\ }\href
  {https://doi.org/10.1103/PhysRevLett.112.011601} {\bibfield  {journal}
  {\bibinfo  {journal} {Phys. Rev. Lett.}\ }\textbf {\bibinfo {volume} {112}},\
  \bibinfo {pages} {011601} (\bibinfo {year} {2014}{\natexlab{a}})}\BibitemShut
  {NoStop}%
\bibitem [{\citenamefont {Liu}\ and\ \citenamefont
  {Suh}(2014{\natexlab{b}})}]{liu2014entanglement1}%
  \BibitemOpen
  \bibfield  {author} {\bibinfo {author} {\bibfnamefont {H.}~\bibnamefont
  {Liu}}\ and\ \bibinfo {author} {\bibfnamefont {S.~J.}\ \bibnamefont {Suh}},\
  }\bibfield  {title} {\bibinfo {title} {Entanglement growth during
  thermalization in holographic systems},\ }\href
  {https://doi.org/10.1103/PhysRevD.89.066012} {\bibfield  {journal} {\bibinfo
  {journal} {Phys. Rev. D}\ }\textbf {\bibinfo {volume} {89}},\ \bibinfo
  {pages} {066012} (\bibinfo {year} {2014}{\natexlab{b}})}\BibitemShut
  {NoStop}%
\bibitem [{\citenamefont {Hosur}\ \emph {et~al.}(2016)\citenamefont {Hosur},
  \citenamefont {Qi}, \citenamefont {Roberts},\ and\ \citenamefont
  {Yoshida}}]{hosur2016chaos}%
  \BibitemOpen
  \bibfield  {author} {\bibinfo {author} {\bibfnamefont {P.}~\bibnamefont
  {Hosur}}, \bibinfo {author} {\bibfnamefont {X.-L.}\ \bibnamefont {Qi}},
  \bibinfo {author} {\bibfnamefont {D.~A.}\ \bibnamefont {Roberts}},\ and\
  \bibinfo {author} {\bibfnamefont {B.}~\bibnamefont {Yoshida}},\ }\bibfield
  {title} {\bibinfo {title} {Chaos in quantum channels},\ }\href
  {https://doi.org/https://doi.org/10.1007/JHEP02(2016)004} {\bibfield
  {journal} {\bibinfo  {journal} {Journal of High Energy Physics}\ }\textbf
  {\bibinfo {volume} {2016}},\ \bibinfo {pages} {1} (\bibinfo {year}
  {2016})}\BibitemShut {NoStop}%
\bibitem [{\citenamefont {Xu}\ and\ \citenamefont
  {Swingle}(2019)}]{xu2019locality}%
  \BibitemOpen
  \bibfield  {author} {\bibinfo {author} {\bibfnamefont {S.}~\bibnamefont
  {Xu}}\ and\ \bibinfo {author} {\bibfnamefont {B.}~\bibnamefont {Swingle}},\
  }\bibfield  {title} {\bibinfo {title} {Locality, {Q}uantum {F}luctuations,
  and {S}crambling},\ }\href {https://doi.org/10.1103/PhysRevX.9.031048}
  {\bibfield  {journal} {\bibinfo  {journal} {Phys. Rev. X}\ }\textbf {\bibinfo
  {volume} {9}},\ \bibinfo {pages} {031048} (\bibinfo {year}
  {2019})}\BibitemShut {NoStop}%
\bibitem [{\citenamefont {Schuch}\ \emph {et~al.}(2008)\citenamefont {Schuch},
  \citenamefont {Wolf}, \citenamefont {Verstraete},\ and\ \citenamefont
  {Cirac}}]{schuch2008entropy}%
  \BibitemOpen
  \bibfield  {author} {\bibinfo {author} {\bibfnamefont {N.}~\bibnamefont
  {Schuch}}, \bibinfo {author} {\bibfnamefont {M.~M.}\ \bibnamefont {Wolf}},
  \bibinfo {author} {\bibfnamefont {F.}~\bibnamefont {Verstraete}},\ and\
  \bibinfo {author} {\bibfnamefont {J.~I.}\ \bibnamefont {Cirac}},\ }\bibfield
  {title} {\bibinfo {title} {Entropy scaling and simulability by matrix product
  states},\ }\href {https://doi.org/10.1103/PhysRevLett.100.030504} {\bibfield
  {journal} {\bibinfo  {journal} {Phys. Rev. Lett.}\ }\textbf {\bibinfo
  {volume} {100}},\ \bibinfo {pages} {030504} (\bibinfo {year}
  {2008})}\BibitemShut {NoStop}%
\bibitem [{\citenamefont {Harrow}\ and\ \citenamefont
  {Montanaro}(2017)}]{harrow2017nature}%
  \BibitemOpen
  \bibfield  {author} {\bibinfo {author} {\bibfnamefont {A.~W.}\ \bibnamefont
  {Harrow}}\ and\ \bibinfo {author} {\bibfnamefont {A.}~\bibnamefont
  {Montanaro}},\ }\bibfield  {title} {\bibinfo {title} {Quantum computational
  supremacy},\ }\href {https://doi.org/10.1038/nature23458} {\bibfield
  {journal} {\bibinfo  {journal} {Nature}\ }\textbf {\bibinfo {volume} {549}},\
  \bibinfo {pages} {203–209} (\bibinfo {year} {2017})}\BibitemShut {NoStop}%
\bibitem [{\citenamefont {Boixo}\ \emph {et~al.}(2018)\citenamefont {Boixo},
  \citenamefont {Isakov}, \citenamefont {Smelyanskiy}, \citenamefont {Babbush},
  \citenamefont {Ding}, \citenamefont {Jiang}, \citenamefont {Bremner},
  \citenamefont {Martinis},\ and\ \citenamefont
  {Neven}}]{boixo2018characterising}%
  \BibitemOpen
  \bibfield  {author} {\bibinfo {author} {\bibfnamefont {S.}~\bibnamefont
  {Boixo}}, \bibinfo {author} {\bibfnamefont {S.~V.}\ \bibnamefont {Isakov}},
  \bibinfo {author} {\bibfnamefont {V.~N.}\ \bibnamefont {Smelyanskiy}},
  \bibinfo {author} {\bibfnamefont {R.}~\bibnamefont {Babbush}}, \bibinfo
  {author} {\bibfnamefont {N.}~\bibnamefont {Ding}}, \bibinfo {author}
  {\bibfnamefont {Z.}~\bibnamefont {Jiang}}, \bibinfo {author} {\bibfnamefont
  {M.~J.}\ \bibnamefont {Bremner}}, \bibinfo {author} {\bibfnamefont {J.~M.}\
  \bibnamefont {Martinis}},\ and\ \bibinfo {author} {\bibfnamefont
  {H.}~\bibnamefont {Neven}},\ }\bibfield  {title} {\bibinfo {title}
  {Characterizing quantum supremacy in near-term devices},\ }\href
  {https://doi.org/10.1038/s41567-018-0124-x} {\bibfield  {journal} {\bibinfo
  {journal} {Nature Physics}\ }\textbf {\bibinfo {volume} {14}},\ \bibinfo
  {pages} {595–600} (\bibinfo {year} {2018})}\BibitemShut {NoStop}%
\bibitem [{\citenamefont {Arute}\ \emph {et~al.}(2019)\citenamefont {Arute},
  \citenamefont {Arya},\ and\ \citenamefont {Babbush~{\it et
  al.}}}]{arute2019quantum}%
  \BibitemOpen
  \bibfield  {author} {\bibinfo {author} {\bibfnamefont {F.}~\bibnamefont
  {Arute}}, \bibinfo {author} {\bibfnamefont {K.}~\bibnamefont {Arya}},\ and\
  \bibinfo {author} {\bibfnamefont {R.}~\bibnamefont {Babbush~{\it et al.}}},\
  }\bibfield  {title} {\bibinfo {title} {Quantum supremacy using a programmable
  superconducting processor},\ }\href
  {https://doi.org/10.1038/s41586-019-1666-5} {\bibfield  {journal} {\bibinfo
  {journal} {Nature}\ }\textbf {\bibinfo {volume} {574}},\ \bibinfo {pages}
  {505–510} (\bibinfo {year} {2019})}\BibitemShut {NoStop}%
\bibitem [{\citenamefont {Vidal}\ and\ \citenamefont
  {Werner}(2002)}]{vidal2002computable}%
  \BibitemOpen
  \bibfield  {author} {\bibinfo {author} {\bibfnamefont {G.}~\bibnamefont
  {Vidal}}\ and\ \bibinfo {author} {\bibfnamefont {R.~F.}\ \bibnamefont
  {Werner}},\ }\bibfield  {title} {\bibinfo {title} {Computable measure of
  entanglement},\ }\href {https://doi.org/10.1103/PhysRevA.65.032314}
  {\bibfield  {journal} {\bibinfo  {journal} {Phys. Rev. A}\ }\textbf {\bibinfo
  {volume} {65}},\ \bibinfo {pages} {032314} (\bibinfo {year}
  {2002})}\BibitemShut {NoStop}%
\bibitem [{\citenamefont {Plenio}(2005)}]{plenio2005logarithmic}%
  \BibitemOpen
  \bibfield  {author} {\bibinfo {author} {\bibfnamefont {M.~B.}\ \bibnamefont
  {Plenio}},\ }\bibfield  {title} {\bibinfo {title} {Logarithmic negativity: A
  full entanglement monotone that is not convex},\ }\href
  {https://doi.org/10.1103/PhysRevLett.95.090503} {\bibfield  {journal}
  {\bibinfo  {journal} {Phys. Rev. Lett.}\ }\textbf {\bibinfo {volume} {95}},\
  \bibinfo {pages} {090503} (\bibinfo {year} {2005})}\BibitemShut {NoStop}%
\bibitem [{\citenamefont {Eisert}(2006)}]{eisert2006entanglement}%
  \BibitemOpen
  \bibfield  {author} {\bibinfo {author} {\bibfnamefont {J.}~\bibnamefont
  {Eisert}},\ }\href {https://arxiv.org/abs/quant-ph/0610253} {\bibinfo {title}
  {Entanglement in quantum information theory}} (\bibinfo {year} {2006}),\
  \Eprint {https://arxiv.org/abs/quant-ph/0610253} {arXiv:quant-ph/0610253
  [quant-ph]} \BibitemShut {NoStop}%
\bibitem [{\citenamefont {Choi}\ \emph {et~al.}(2023)\citenamefont {Choi},
  \citenamefont {Shaw}, \citenamefont {Madjarov}, \citenamefont {Xie},
  \citenamefont {Finkelstein}, \citenamefont {Covey}, \citenamefont {Cotler},
  \citenamefont {Mark}, \citenamefont {Huang}, \citenamefont {Kale},
  \citenamefont {Pichler}, \citenamefont {Brand{\~a}o}, \citenamefont {Choi},\
  and\ \citenamefont {Endres}}]{choi2023preparing}%
  \BibitemOpen
  \bibfield  {author} {\bibinfo {author} {\bibfnamefont {J.}~\bibnamefont
  {Choi}}, \bibinfo {author} {\bibfnamefont {A.~L.}\ \bibnamefont {Shaw}},
  \bibinfo {author} {\bibfnamefont {I.~S.}\ \bibnamefont {Madjarov}}, \bibinfo
  {author} {\bibfnamefont {X.}~\bibnamefont {Xie}}, \bibinfo {author}
  {\bibfnamefont {R.}~\bibnamefont {Finkelstein}}, \bibinfo {author}
  {\bibfnamefont {J.~P.}\ \bibnamefont {Covey}}, \bibinfo {author}
  {\bibfnamefont {J.~S.}\ \bibnamefont {Cotler}}, \bibinfo {author}
  {\bibfnamefont {D.~K.}\ \bibnamefont {Mark}}, \bibinfo {author}
  {\bibfnamefont {H.-Y.}\ \bibnamefont {Huang}}, \bibinfo {author}
  {\bibfnamefont {A.}~\bibnamefont {Kale}}, \bibinfo {author} {\bibfnamefont
  {H.}~\bibnamefont {Pichler}}, \bibinfo {author} {\bibfnamefont {F.~G. S.~L.}\
  \bibnamefont {Brand{\~a}o}}, \bibinfo {author} {\bibfnamefont
  {S.}~\bibnamefont {Choi}},\ and\ \bibinfo {author} {\bibfnamefont
  {M.}~\bibnamefont {Endres}},\ }\bibfield  {title} {\bibinfo {title}
  {Preparing random states and benchmarking with many-body quantum chaos},\
  }\href {https://doi.org/10.1038/s41586-022-05442-1} {\bibfield  {journal}
  {\bibinfo  {journal} {Nature}\ }\textbf {\bibinfo {volume} {613}},\ \bibinfo
  {pages} {468} (\bibinfo {year} {2023})}\BibitemShut {NoStop}%
\bibitem [{\citenamefont {Cotler}\ \emph {et~al.}(2023)\citenamefont {Cotler},
  \citenamefont {Mark}, \citenamefont {Huang}, \citenamefont {Hern\'andez},
  \citenamefont {Choi}, \citenamefont {Shaw}, \citenamefont {Endres},\ and\
  \citenamefont {Choi}}]{cotler2023emergent}%
  \BibitemOpen
  \bibfield  {author} {\bibinfo {author} {\bibfnamefont {J.~S.}\ \bibnamefont
  {Cotler}}, \bibinfo {author} {\bibfnamefont {D.~K.}\ \bibnamefont {Mark}},
  \bibinfo {author} {\bibfnamefont {H.-Y.}\ \bibnamefont {Huang}}, \bibinfo
  {author} {\bibfnamefont {F.}~\bibnamefont {Hern\'andez}}, \bibinfo {author}
  {\bibfnamefont {J.}~\bibnamefont {Choi}}, \bibinfo {author} {\bibfnamefont
  {A.~L.}\ \bibnamefont {Shaw}}, \bibinfo {author} {\bibfnamefont
  {M.}~\bibnamefont {Endres}},\ and\ \bibinfo {author} {\bibfnamefont
  {S.}~\bibnamefont {Choi}},\ }\bibfield  {title} {\bibinfo {title} {{Emergent
  Quantum State Designs from Individual Many-Body Wave Functions}},\ }\href
  {https://doi.org/10.1103/PRXQuantum.4.010311} {\bibfield  {journal} {\bibinfo
   {journal} {PRX Quantum}\ }\textbf {\bibinfo {volume} {4}},\ \bibinfo {pages}
  {010311} (\bibinfo {year} {2023})}\BibitemShut {NoStop}%
\bibitem [{\citenamefont {Ho}\ and\ \citenamefont {Choi}(2022)}]{ho2022exact}%
  \BibitemOpen
  \bibfield  {author} {\bibinfo {author} {\bibfnamefont {W.~W.}\ \bibnamefont
  {Ho}}\ and\ \bibinfo {author} {\bibfnamefont {S.}~\bibnamefont {Choi}},\
  }\bibfield  {title} {\bibinfo {title} {Exact emergent quantum state designs
  from quantum chaotic dynamics},\ }\href
  {https://doi.org/10.1103/PhysRevLett.128.060601} {\bibfield  {journal}
  {\bibinfo  {journal} {Phys. Rev. Lett.}\ }\textbf {\bibinfo {volume} {128}},\
  \bibinfo {pages} {060601} (\bibinfo {year} {2022})}\BibitemShut {NoStop}%
\bibitem [{\citenamefont {Ippoliti}\ and\ \citenamefont
  {Ho}(2022)}]{ippoliti2022solvablemodelofdeep}%
  \BibitemOpen
  \bibfield  {author} {\bibinfo {author} {\bibfnamefont {M.}~\bibnamefont
  {Ippoliti}}\ and\ \bibinfo {author} {\bibfnamefont {W.~W.}\ \bibnamefont
  {Ho}},\ }\bibfield  {title} {\bibinfo {title} {Solvable model of deep
  thermalization with distinct design times},\ }\href
  {https://doi.org/10.22331/q-2022-12-29-886} {\bibfield  {journal} {\bibinfo
  {journal} {{Quantum}}\ }\textbf {\bibinfo {volume} {6}},\ \bibinfo {pages}
  {886} (\bibinfo {year} {2022})}\BibitemShut {NoStop}%
\bibitem [{\citenamefont {Lucas}\ \emph {et~al.}(2023)\citenamefont {Lucas},
  \citenamefont {Piroli}, \citenamefont {De~Nardis},\ and\ \citenamefont
  {De~Luca}}]{lucas2023freefermiondeepthm}%
  \BibitemOpen
  \bibfield  {author} {\bibinfo {author} {\bibfnamefont {M.}~\bibnamefont
  {Lucas}}, \bibinfo {author} {\bibfnamefont {L.}~\bibnamefont {Piroli}},
  \bibinfo {author} {\bibfnamefont {J.}~\bibnamefont {De~Nardis}},\ and\
  \bibinfo {author} {\bibfnamefont {A.}~\bibnamefont {De~Luca}},\ }\bibfield
  {title} {\bibinfo {title} {Generalized deep thermalization for free
  fermions},\ }\href {https://doi.org/10.1103/PhysRevA.107.032215} {\bibfield
  {journal} {\bibinfo  {journal} {Phys. Rev. A}\ }\textbf {\bibinfo {volume}
  {107}},\ \bibinfo {pages} {032215} (\bibinfo {year} {2023})}\BibitemShut
  {NoStop}%
\bibitem [{\citenamefont {Ippoliti}\ and\ \citenamefont
  {Ho}(2023)}]{ippolitu2023dynamical}%
  \BibitemOpen
  \bibfield  {author} {\bibinfo {author} {\bibfnamefont {M.}~\bibnamefont
  {Ippoliti}}\ and\ \bibinfo {author} {\bibfnamefont {W.~W.}\ \bibnamefont
  {Ho}},\ }\bibfield  {title} {\bibinfo {title} {Dynamical purification and the
  emergence of quantum state designs from the projected ensemble},\ }\href
  {https://doi.org/10.1103/PRXQuantum.4.030322} {\bibfield  {journal} {\bibinfo
   {journal} {PRX Quantum}\ }\textbf {\bibinfo {volume} {4}},\ \bibinfo {pages}
  {030322} (\bibinfo {year} {2023})}\BibitemShut {NoStop}%
\bibitem [{\citenamefont {Bhore}\ \emph {et~al.}(2023)\citenamefont {Bhore},
  \citenamefont {Desaules},\ and\ \citenamefont {Papi\ifmmode~\acute{c}\else
  \'{c}\fi{}}}]{bhore2023deepthmconstrained}%
  \BibitemOpen
  \bibfield  {author} {\bibinfo {author} {\bibfnamefont {T.}~\bibnamefont
  {Bhore}}, \bibinfo {author} {\bibfnamefont {J.-Y.}\ \bibnamefont
  {Desaules}},\ and\ \bibinfo {author} {\bibfnamefont {Z.}~\bibnamefont
  {Papi\ifmmode~\acute{c}\else \'{c}\fi{}}},\ }\bibfield  {title} {\bibinfo
  {title} {Deep thermalization in constrained quantum systems},\ }\href
  {https://doi.org/10.1103/PhysRevB.108.104317} {\bibfield  {journal} {\bibinfo
   {journal} {Phys. Rev. B}\ }\textbf {\bibinfo {volume} {108}},\ \bibinfo
  {pages} {104317} (\bibinfo {year} {2023})}\BibitemShut {NoStop}%
\bibitem [{\citenamefont {Varikuti}\ and\ \citenamefont
  {Bandyopadhyay}(2024)}]{varikuti2024unravelingemergence}%
  \BibitemOpen
  \bibfield  {author} {\bibinfo {author} {\bibfnamefont {N.~D.}\ \bibnamefont
  {Varikuti}}\ and\ \bibinfo {author} {\bibfnamefont {S.}~\bibnamefont
  {Bandyopadhyay}},\ }\bibfield  {title} {\bibinfo {title} {Unraveling the
  emergence of quantum state designs in systems with symmetry},\ }\href
  {https://doi.org/10.22331/q-2024-08-29-1456} {\bibfield  {journal} {\bibinfo
  {journal} {{Quantum}}\ }\textbf {\bibinfo {volume} {8}},\ \bibinfo {pages}
  {1456} (\bibinfo {year} {2024})}\BibitemShut {NoStop}%
\bibitem [{\citenamefont {Chan}\ and\ \citenamefont
  {De~Luca}(2024)}]{chan2024pe}%
  \BibitemOpen
  \bibfield  {author} {\bibinfo {author} {\bibfnamefont {A.}~\bibnamefont
  {Chan}}\ and\ \bibinfo {author} {\bibfnamefont {A.}~\bibnamefont {De~Luca}},\
  }\bibfield  {title} {\bibinfo {title} {Projected state ensemble of a generic
  model of many-body quantum chaos},\ }\href
  {https://doi.org/10.1088/1751-8121/ad7211} {\bibfield  {journal} {\bibinfo
  {journal} {Journal of Physics A: Mathematical and Theoretical}\ }\textbf
  {\bibinfo {volume} {57}},\ \bibinfo {pages} {405001} (\bibinfo {year}
  {2024})}\BibitemShut {NoStop}%
\bibitem [{\citenamefont {Mark}\ \emph {et~al.}(2024)\citenamefont {Mark},
  \citenamefont {Surace}, \citenamefont {Elben}, \citenamefont {Shaw},
  \citenamefont {Choi}, \citenamefont {Refael}, \citenamefont {Endres},\ and\
  \citenamefont {Choi}}]{mark2024maximum}%
  \BibitemOpen
  \bibfield  {author} {\bibinfo {author} {\bibfnamefont {D.~K.}\ \bibnamefont
  {Mark}}, \bibinfo {author} {\bibfnamefont {F.}~\bibnamefont {Surace}},
  \bibinfo {author} {\bibfnamefont {A.}~\bibnamefont {Elben}}, \bibinfo
  {author} {\bibfnamefont {A.~L.}\ \bibnamefont {Shaw}}, \bibinfo {author}
  {\bibfnamefont {J.}~\bibnamefont {Choi}}, \bibinfo {author} {\bibfnamefont
  {G.}~\bibnamefont {Refael}}, \bibinfo {author} {\bibfnamefont
  {M.}~\bibnamefont {Endres}},\ and\ \bibinfo {author} {\bibfnamefont
  {S.}~\bibnamefont {Choi}},\ }\bibfield  {title} {\bibinfo {title} {{Maximum
  Entropy Principle in Deep Thermalization and in Hilbert-Space Ergodicity}},\
  }\href {https://doi.org/10.1103/PhysRevX.14.041051} {\bibfield  {journal}
  {\bibinfo  {journal} {Phys. Rev. X}\ }\textbf {\bibinfo {volume} {14}},\
  \bibinfo {pages} {041051} (\bibinfo {year} {2024})}\BibitemShut {NoStop}%
\bibitem [{\citenamefont {Manna}\ \emph {et~al.}(2025)\citenamefont {Manna},
  \citenamefont {Roy},\ and\ \citenamefont {Sreejith}}]{manna2025peconserved}%
  \BibitemOpen
  \bibfield  {author} {\bibinfo {author} {\bibfnamefont {S.}~\bibnamefont
  {Manna}}, \bibinfo {author} {\bibfnamefont {S.}~\bibnamefont {Roy}},\ and\
  \bibinfo {author} {\bibfnamefont {G.~J.}\ \bibnamefont {Sreejith}},\
  }\bibfield  {title} {\bibinfo {title} {Projected ensemble in a system with
  locally supported conserved charges},\ }\href
  {https://doi.org/10.1103/PhysRevB.111.144302} {\bibfield  {journal} {\bibinfo
   {journal} {Phys. Rev. B}\ }\textbf {\bibinfo {volume} {111}},\ \bibinfo
  {pages} {144302} (\bibinfo {year} {2025})}\BibitemShut {NoStop}%
\bibitem [{\citenamefont {Yu}\ \emph {et~al.}(2025)\citenamefont {Yu},
  \citenamefont {Ho},\ and\ \citenamefont
  {Kos}}]{yu2025mixedstatedeepthermalization}%
  \BibitemOpen
  \bibfield  {author} {\bibinfo {author} {\bibfnamefont {X.-H.}\ \bibnamefont
  {Yu}}, \bibinfo {author} {\bibfnamefont {W.~W.}\ \bibnamefont {Ho}},\ and\
  \bibinfo {author} {\bibfnamefont {P.}~\bibnamefont {Kos}},\ }\href
  {https://arxiv.org/abs/2505.07795} {\bibinfo {title} {Mixed state deep
  thermalization}} (\bibinfo {year} {2025}),\ \Eprint
  {https://arxiv.org/abs/2505.07795} {arXiv:2505.07795 [quant-ph]} \BibitemShut
  {NoStop}%
\bibitem [{\citenamefont {Sherry}\ and\ \citenamefont
  {Roy}(2025{\natexlab{a}})}]{sherry2025mixedstatesexhibitdeep}%
  \BibitemOpen
  \bibfield  {author} {\bibinfo {author} {\bibfnamefont {A.}~\bibnamefont
  {Sherry}}\ and\ \bibinfo {author} {\bibfnamefont {S.}~\bibnamefont {Roy}},\
  }\href {https://arxiv.org/abs/2507.14135} {\bibinfo {title} {Do mixed states
  exhibit deep thermalisation?}} (\bibinfo {year} {2025}{\natexlab{a}}),\
  \Eprint {https://arxiv.org/abs/2507.14135} {arXiv:2507.14135 [quant-ph]}
  \BibitemShut {NoStop}%
\bibitem [{\citenamefont {Mandal}\ \emph {et~al.}(2025)\citenamefont {Mandal},
  \citenamefont {Claeys},\ and\ \citenamefont
  {Roy}}]{mandal2025partialprojectedensemblesspatiotemporal}%
  \BibitemOpen
  \bibfield  {author} {\bibinfo {author} {\bibfnamefont {S.}~\bibnamefont
  {Mandal}}, \bibinfo {author} {\bibfnamefont {P.~W.}\ \bibnamefont {Claeys}},\
  and\ \bibinfo {author} {\bibfnamefont {S.}~\bibnamefont {Roy}},\ }\href
  {https://arxiv.org/abs/2508.05632} {\bibinfo {title} {Partial projected
  ensembles and spatiotemporal structure of information scrambling}} (\bibinfo
  {year} {2025}),\ \Eprint {https://arxiv.org/abs/2508.05632} {arXiv:2508.05632
  [quant-ph]} \BibitemShut {NoStop}%
\bibitem [{\citenamefont {Sherry}\ and\ \citenamefont
  {Roy}(2025{\natexlab{b}})}]{sherry2025miqc}%
  \BibitemOpen
  \bibfield  {author} {\bibinfo {author} {\bibfnamefont {A.}~\bibnamefont
  {Sherry}}\ and\ \bibinfo {author} {\bibfnamefont {S.}~\bibnamefont {Roy}},\
  }\bibfield  {title} {\bibinfo {title} {Measurement-invisible quantum
  correlations in scrambling dynamics},\ }\href
  {https://doi.org/10.1103/PhysRevB.111.L180301} {\bibfield  {journal}
  {\bibinfo  {journal} {Phys. Rev. B}\ }\textbf {\bibinfo {volume} {111}},\
  \bibinfo {pages} {L180301} (\bibinfo {year}
  {2025}{\natexlab{b}})}\BibitemShut {NoStop}%
\bibitem [{\citenamefont {Bejan}\ \emph {et~al.}(2025)\citenamefont {Bejan},
  \citenamefont {B\'eri},\ and\ \citenamefont {McGinley}}]{bejan2025matchgate}%
  \BibitemOpen
  \bibfield  {author} {\bibinfo {author} {\bibfnamefont {M.}~\bibnamefont
  {Bejan}}, \bibinfo {author} {\bibfnamefont {B.}~\bibnamefont {B\'eri}},\ and\
  \bibinfo {author} {\bibfnamefont {M.}~\bibnamefont {McGinley}},\ }\bibfield
  {title} {\bibinfo {title} {{Matchgate Circuits Deeply Thermalize}},\ }\href
  {https://doi.org/10.1103/v8kp-39ry} {\bibfield  {journal} {\bibinfo
  {journal} {Phys. Rev. Lett.}\ }\textbf {\bibinfo {volume} {135}},\ \bibinfo
  {pages} {020401} (\bibinfo {year} {2025})}\BibitemShut {NoStop}%
\bibitem [{\citenamefont {Deutsch}(1991)}]{deutsch1991quantum}%
  \BibitemOpen
  \bibfield  {author} {\bibinfo {author} {\bibfnamefont {J.~M.}\ \bibnamefont
  {Deutsch}},\ }\bibfield  {title} {\bibinfo {title} {Quantum statistical
  mechanics in a closed system},\ }\href
  {https://doi.org/10.1103/PhysRevA.43.2046} {\bibfield  {journal} {\bibinfo
  {journal} {Phys. Rev. A}\ }\textbf {\bibinfo {volume} {43}},\ \bibinfo
  {pages} {2046} (\bibinfo {year} {1991})}\BibitemShut {NoStop}%
\bibitem [{\citenamefont {Srednicki}(1994)}]{srednicki1994chaos}%
  \BibitemOpen
  \bibfield  {author} {\bibinfo {author} {\bibfnamefont {M.}~\bibnamefont
  {Srednicki}},\ }\bibfield  {title} {\bibinfo {title} {Chaos and quantum
  thermalization},\ }\href {https://doi.org/10.1103/PhysRevE.50.888} {\bibfield
   {journal} {\bibinfo  {journal} {Phys. Rev. E}\ }\textbf {\bibinfo {volume}
  {50}},\ \bibinfo {pages} {888} (\bibinfo {year} {1994})}\BibitemShut
  {NoStop}%
\bibitem [{\citenamefont {Rigol}\ \emph {et~al.}(2008)\citenamefont {Rigol},
  \citenamefont {Dunjko},\ and\ \citenamefont
  {Olshanii}}]{rigol2008thermalisation}%
  \BibitemOpen
  \bibfield  {author} {\bibinfo {author} {\bibfnamefont {M.}~\bibnamefont
  {Rigol}}, \bibinfo {author} {\bibfnamefont {V.}~\bibnamefont {Dunjko}},\ and\
  \bibinfo {author} {\bibfnamefont {M.}~\bibnamefont {Olshanii}},\ }\bibfield
  {title} {\bibinfo {title} {Thermalization and its mechanism for generic
  isolated quantum systems},\ }\href {https://doi.org/10.1038/nature06838}
  {\bibfield  {journal} {\bibinfo  {journal} {Nature}\ }\textbf {\bibinfo
  {volume} {452}},\ \bibinfo {pages} {854–858} (\bibinfo {year}
  {2008})}\BibitemShut {NoStop}%
\bibitem [{\citenamefont {Deutsch}(2018)}]{deutsch2018eigenstate}%
  \BibitemOpen
  \bibfield  {author} {\bibinfo {author} {\bibfnamefont {J.~M.}\ \bibnamefont
  {Deutsch}},\ }\bibfield  {title} {\bibinfo {title} {Eigenstate thermalization
  hypothesis},\ }\href
  {http://iopscience.iop.org/article/10.1088/1361-6633/aac9f1/meta} {\bibfield
  {journal} {\bibinfo  {journal} {Rep. Prog. Phys.}\ }\textbf {\bibinfo
  {volume} {81}},\ \bibinfo {pages} {082001} (\bibinfo {year}
  {2018})}\BibitemShut {NoStop}%
\bibitem [{\citenamefont {Lazarides}\ \emph {et~al.}(2014)\citenamefont
  {Lazarides}, \citenamefont {Das},\ and\ \citenamefont
  {Moessner}}]{lazarides2014equilibrium}%
  \BibitemOpen
  \bibfield  {author} {\bibinfo {author} {\bibfnamefont {A.}~\bibnamefont
  {Lazarides}}, \bibinfo {author} {\bibfnamefont {A.}~\bibnamefont {Das}},\
  and\ \bibinfo {author} {\bibfnamefont {R.}~\bibnamefont {Moessner}},\
  }\bibfield  {title} {\bibinfo {title} {Equilibrium states of generic quantum
  systems subject to periodic driving},\ }\href
  {https://doi.org/10.1103/PhysRevE.90.012110} {\bibfield  {journal} {\bibinfo
  {journal} {Phys. Rev. E}\ }\textbf {\bibinfo {volume} {90}},\ \bibinfo
  {pages} {012110} (\bibinfo {year} {2014})}\BibitemShut {NoStop}%
\bibitem [{\citenamefont {D'Alessio}\ and\ \citenamefont
  {Rigol}(2014)}]{dalessio2014long}%
  \BibitemOpen
  \bibfield  {author} {\bibinfo {author} {\bibfnamefont {L.}~\bibnamefont
  {D'Alessio}}\ and\ \bibinfo {author} {\bibfnamefont {M.}~\bibnamefont
  {Rigol}},\ }\bibfield  {title} {\bibinfo {title} {Long-time behavior of
  isolated periodically driven interacting lattice systems},\ }\href
  {https://doi.org/10.1103/PhysRevX.4.041048} {\bibfield  {journal} {\bibinfo
  {journal} {Phys. Rev. X}\ }\textbf {\bibinfo {volume} {4}},\ \bibinfo {pages}
  {041048} (\bibinfo {year} {2014})}\BibitemShut {NoStop}%
\bibitem [{\citenamefont {Jozsa}\ \emph {et~al.}(1994)\citenamefont {Jozsa},
  \citenamefont {Robb},\ and\ \citenamefont {Wootters}}]{jozsa1994accessinfo}%
  \BibitemOpen
  \bibfield  {author} {\bibinfo {author} {\bibfnamefont {R.}~\bibnamefont
  {Jozsa}}, \bibinfo {author} {\bibfnamefont {D.}~\bibnamefont {Robb}},\ and\
  \bibinfo {author} {\bibfnamefont {W.~K.}\ \bibnamefont {Wootters}},\
  }\bibfield  {title} {\bibinfo {title} {Lower bound for accessible information
  in quantum mechanics},\ }\href {https://doi.org/10.1103/PhysRevA.49.668}
  {\bibfield  {journal} {\bibinfo  {journal} {Phys. Rev. A}\ }\textbf {\bibinfo
  {volume} {49}},\ \bibinfo {pages} {668} (\bibinfo {year} {1994})}\BibitemShut
  {NoStop}%
\bibitem [{\citenamefont {Goldstein}\ \emph
  {et~al.}(2006{\natexlab{a}})\citenamefont {Goldstein}, \citenamefont
  {Lebowitz}, \citenamefont {Tumulka},\ and\ \citenamefont
  {Zangh{\`i}}}]{goldstein2006distribution}%
  \BibitemOpen
  \bibfield  {author} {\bibinfo {author} {\bibfnamefont {S.}~\bibnamefont
  {Goldstein}}, \bibinfo {author} {\bibfnamefont {J.~L.}\ \bibnamefont
  {Lebowitz}}, \bibinfo {author} {\bibfnamefont {R.}~\bibnamefont {Tumulka}},\
  and\ \bibinfo {author} {\bibfnamefont {N.}~\bibnamefont {Zangh{\`i}}},\
  }\bibfield  {title} {\bibinfo {title} {On the distribution of the wave
  function for systems in thermal equilibrium},\ }\href
  {https://doi.org/10.1007/s10955-006-9210-z} {\bibfield  {journal} {\bibinfo
  {journal} {Journal of Statistical Physics}\ }\textbf {\bibinfo {volume}
  {125}},\ \bibinfo {pages} {1193} (\bibinfo {year}
  {2006}{\natexlab{a}})}\BibitemShut {NoStop}%
\bibitem [{\citenamefont {Liu}\ \emph {et~al.}(2024)\citenamefont {Liu},
  \citenamefont {Huang},\ and\ \citenamefont {Ho}}]{chang2024gaussian}%
  \BibitemOpen
  \bibfield  {author} {\bibinfo {author} {\bibfnamefont {C.}~\bibnamefont
  {Liu}}, \bibinfo {author} {\bibfnamefont {Q.~C.}\ \bibnamefont {Huang}},\
  and\ \bibinfo {author} {\bibfnamefont {W.~W.}\ \bibnamefont {Ho}},\
  }\bibfield  {title} {\bibinfo {title} {Deep thermalization in gaussian
  continuous-variable quantum systems},\ }\href
  {https://doi.org/10.1103/PhysRevLett.133.260401} {\bibfield  {journal}
  {\bibinfo  {journal} {Phys. Rev. Lett.}\ }\textbf {\bibinfo {volume} {133}},\
  \bibinfo {pages} {260401} (\bibinfo {year} {2024})}\BibitemShut {NoStop}%
\bibitem [{\citenamefont {Chang}\ \emph {et~al.}(2025)\citenamefont {Chang},
  \citenamefont {Shrotriya}, \citenamefont {Ho},\ and\ \citenamefont
  {Ippoliti}}]{chang2025charge}%
  \BibitemOpen
  \bibfield  {author} {\bibinfo {author} {\bibfnamefont {R.-A.}\ \bibnamefont
  {Chang}}, \bibinfo {author} {\bibfnamefont {H.}~\bibnamefont {Shrotriya}},
  \bibinfo {author} {\bibfnamefont {W.~W.}\ \bibnamefont {Ho}},\ and\ \bibinfo
  {author} {\bibfnamefont {M.}~\bibnamefont {Ippoliti}},\ }\bibfield  {title}
  {\bibinfo {title} {Deep thermalization under charge-conserving quantum
  dynamics},\ }\href {https://doi.org/10.1103/PRXQuantum.6.020343} {\bibfield
  {journal} {\bibinfo  {journal} {PRX Quantum}\ }\textbf {\bibinfo {volume}
  {6}},\ \bibinfo {pages} {020343} (\bibinfo {year} {2025})}\BibitemShut
  {NoStop}%
\bibitem [{\citenamefont {Holevo}(1973)}]{holevo1973accinfo}%
  \BibitemOpen
  \bibfield  {author} {\bibinfo {author} {\bibfnamefont {A.~S.}\ \bibnamefont
  {Holevo}},\ }\bibfield  {title} {\bibinfo {title} {Bounds for the quantity of
  information transmitted by a quantum communication channel},\ }\href@noop {}
  {\bibfield  {journal} {\bibinfo  {journal} {Problems Inform. Transmission}\
  }\textbf {\bibinfo {volume} {9}},\ \bibinfo {pages} {177} (\bibinfo {year}
  {1973})}\BibitemShut {NoStop}%
\bibitem [{\citenamefont {Preskill}(2025)}]{preskill2025quantumshannontheory}%
  \BibitemOpen
  \bibfield  {author} {\bibinfo {author} {\bibfnamefont {J.}~\bibnamefont
  {Preskill}},\ }\href {https://arxiv.org/abs/1604.07450} {\bibinfo {title}
  {Quantum shannon theory}} (\bibinfo {year} {2025}),\ \Eprint
  {https://arxiv.org/abs/1604.07450} {arXiv:1604.07450 [quant-ph]} \BibitemShut
  {NoStop}%
\bibitem [{\citenamefont {Hall}(1998)}]{hall1998randomquantumcorrelations}%
  \BibitemOpen
  \bibfield  {author} {\bibinfo {author} {\bibfnamefont {M.~J.}\ \bibnamefont
  {Hall}},\ }\bibfield  {title} {\bibinfo {title} {Random quantum correlations
  and density operator distributions},\ }\href
  {https://doi.org/https://doi.org/10.1016/S0375-9601(98)00190-X} {\bibfield
  {journal} {\bibinfo  {journal} {Physics Letters A}\ }\textbf {\bibinfo
  {volume} {242}},\ \bibinfo {pages} {123} (\bibinfo {year}
  {1998})}\BibitemShut {NoStop}%
\bibitem [{\citenamefont {Zyczkowski}\ and\ \citenamefont
  {Sommers}(2001)}]{karol2001inducedmeasures}%
  \BibitemOpen
  \bibfield  {author} {\bibinfo {author} {\bibfnamefont {K.}~\bibnamefont
  {Zyczkowski}}\ and\ \bibinfo {author} {\bibfnamefont {H.-J.}\ \bibnamefont
  {Sommers}},\ }\bibfield  {title} {\bibinfo {title} {Induced measures in the
  space of mixed quantum states},\ }\href
  {https://doi.org/10.1088/0305-4470/34/35/335} {\bibfield  {journal} {\bibinfo
   {journal} {Journal of Physics A: Mathematical and General}\ }\textbf
  {\bibinfo {volume} {34}},\ \bibinfo {pages} {7111} (\bibinfo {year}
  {2001})}\BibitemShut {NoStop}%
\bibitem [{\citenamefont {Bansal}\ \emph {et~al.}(2025)\citenamefont {Bansal},
  \citenamefont {Mok}, \citenamefont {Bharti}, \citenamefont {Koh},\ and\
  \citenamefont {Haug}}]{bansal2025pseudordm}%
  \BibitemOpen
  \bibfield  {author} {\bibinfo {author} {\bibfnamefont {N.}~\bibnamefont
  {Bansal}}, \bibinfo {author} {\bibfnamefont {W.-K.}\ \bibnamefont {Mok}},
  \bibinfo {author} {\bibfnamefont {K.}~\bibnamefont {Bharti}}, \bibinfo
  {author} {\bibfnamefont {D.~E.}\ \bibnamefont {Koh}},\ and\ \bibinfo {author}
  {\bibfnamefont {T.}~\bibnamefont {Haug}},\ }\bibfield  {title} {\bibinfo
  {title} {Pseudorandom density matrices},\ }\href
  {https://doi.org/10.1103/PRXQuantum.6.020322} {\bibfield  {journal} {\bibinfo
   {journal} {PRX Quantum}\ }\textbf {\bibinfo {volume} {6}},\ \bibinfo {pages}
  {020322} (\bibinfo {year} {2025})}\BibitemShut {NoStop}%
\bibitem [{\citenamefont {Page}(1993)}]{page1993average}%
  \BibitemOpen
  \bibfield  {author} {\bibinfo {author} {\bibfnamefont {D.~N.}\ \bibnamefont
  {Page}},\ }\bibfield  {title} {\bibinfo {title} {Average entropy of a
  subsystem},\ }\href {https://doi.org/10.1103/PhysRevLett.71.1291} {\bibfield
  {journal} {\bibinfo  {journal} {Phys. Rev. Lett.}\ }\textbf {\bibinfo
  {volume} {71}},\ \bibinfo {pages} {1291} (\bibinfo {year}
  {1993})}\BibitemShut {NoStop}%
\bibitem [{\citenamefont {Forrester}(2010)}]{forrestor2010rmt}%
  \BibitemOpen
  \bibfield  {author} {\bibinfo {author} {\bibfnamefont {P.}~\bibnamefont
  {Forrester}},\ }\href {http://www.jstor.org/stable/j.ctt7t5vq} {\emph
  {\bibinfo {title} {Log-Gases and Random Matrices (LMS-34)}}}\ (\bibinfo
  {publisher} {Princeton University Press},\ \bibinfo {year}
  {2010})\BibitemShut {NoStop}%
\bibitem [{\citenamefont {Henderson}\ and\ \citenamefont
  {Vedral}(2001)}]{Henderson2001CQCorr}%
  \BibitemOpen
  \bibfield  {author} {\bibinfo {author} {\bibfnamefont {L.}~\bibnamefont
  {Henderson}}\ and\ \bibinfo {author} {\bibfnamefont {V.}~\bibnamefont
  {Vedral}},\ }\bibfield  {title} {\bibinfo {title} {Classical, quantum and
  total correlations},\ }\href {https://doi.org/10.1088/0305-4470/34/35/315}
  {\bibfield  {journal} {\bibinfo  {journal} {Journal of Physics A:
  Mathematical and General}\ }\textbf {\bibinfo {volume} {34}},\ \bibinfo
  {pages} {6899} (\bibinfo {year} {2001})}\BibitemShut {NoStop}%
\bibitem [{\citenamefont {Holevo}(1998)}]{Holevo1998capacity}%
  \BibitemOpen
  \bibfield  {author} {\bibinfo {author} {\bibfnamefont {A.~S.}\ \bibnamefont
  {Holevo}},\ }\bibfield  {title} {\bibinfo {title} {The capacity of the
  quantum channel with general signal states},\ }\href
  {https://doi.org/10.1109/18.651037} {\bibfield  {journal} {\bibinfo
  {journal} {IEEE Transactions on Information Theory}\ }\textbf {\bibinfo
  {volume} {44}},\ \bibinfo {pages} {269} (\bibinfo {year} {1998})}\BibitemShut
  {NoStop}%
\bibitem [{\citenamefont {Schumacher}\ and\ \citenamefont
  {Westmoreland}(1997)}]{schumacher1997classical}%
  \BibitemOpen
  \bibfield  {author} {\bibinfo {author} {\bibfnamefont {B.}~\bibnamefont
  {Schumacher}}\ and\ \bibinfo {author} {\bibfnamefont {M.~D.}\ \bibnamefont
  {Westmoreland}},\ }\bibfield  {title} {\bibinfo {title} {Sending classical
  information via noisy quantum channels},\ }\href
  {https://doi.org/10.1103/PhysRevA.56.131} {\bibfield  {journal} {\bibinfo
  {journal} {Phys. Rev. A}\ }\textbf {\bibinfo {volume} {56}},\ \bibinfo
  {pages} {131} (\bibinfo {year} {1997})}\BibitemShut {NoStop}%
\bibitem [{\citenamefont {Shapourian}\ \emph {et~al.}(2021)\citenamefont
  {Shapourian}, \citenamefont {Liu}, \citenamefont {Kudler-Flam},\ and\
  \citenamefont {Vishwanath}}]{shapourian2021negativity}%
  \BibitemOpen
  \bibfield  {author} {\bibinfo {author} {\bibfnamefont {H.}~\bibnamefont
  {Shapourian}}, \bibinfo {author} {\bibfnamefont {S.}~\bibnamefont {Liu}},
  \bibinfo {author} {\bibfnamefont {J.}~\bibnamefont {Kudler-Flam}},\ and\
  \bibinfo {author} {\bibfnamefont {A.}~\bibnamefont {Vishwanath}},\ }\bibfield
   {title} {\bibinfo {title} {Entanglement negativity spectrum of random mixed
  states: A diagrammatic approach},\ }\href
  {https://doi.org/10.1103/PRXQuantum.2.030347} {\bibfield  {journal} {\bibinfo
   {journal} {PRX Quantum}\ }\textbf {\bibinfo {volume} {2}},\ \bibinfo {pages}
  {030347} (\bibinfo {year} {2021})}\BibitemShut {NoStop}%
\bibitem [{\citenamefont {Goldstein}\ \emph
  {et~al.}(2006{\natexlab{b}})\citenamefont {Goldstein}, \citenamefont
  {Lebowitz}, \citenamefont {Tumulka},\ and\ \citenamefont
  {Zangh\`{\i}}}]{goldstein2006canonical}%
  \BibitemOpen
  \bibfield  {author} {\bibinfo {author} {\bibfnamefont {S.}~\bibnamefont
  {Goldstein}}, \bibinfo {author} {\bibfnamefont {J.~L.}\ \bibnamefont
  {Lebowitz}}, \bibinfo {author} {\bibfnamefont {R.}~\bibnamefont {Tumulka}},\
  and\ \bibinfo {author} {\bibfnamefont {N.}~\bibnamefont {Zangh\`{\i}}},\
  }\bibfield  {title} {\bibinfo {title} {Canonical typicality},\ }\href
  {https://doi.org/10.1103/PhysRevLett.96.050403} {\bibfield  {journal}
  {\bibinfo  {journal} {Phys. Rev. Lett.}\ }\textbf {\bibinfo {volume} {96}},\
  \bibinfo {pages} {050403} (\bibinfo {year} {2006}{\natexlab{b}})}\BibitemShut
  {NoStop}%
\bibitem [{\citenamefont {Teufel}\ \emph {et~al.}(2024)\citenamefont {Teufel},
  \citenamefont {Tumulka},\ and\ \citenamefont
  {Vogel}}]{Teufel2024gencanonical}%
  \BibitemOpen
  \bibfield  {author} {\bibinfo {author} {\bibfnamefont {S.}~\bibnamefont
  {Teufel}}, \bibinfo {author} {\bibfnamefont {R.}~\bibnamefont {Tumulka}},\
  and\ \bibinfo {author} {\bibfnamefont {C.}~\bibnamefont {Vogel}},\ }\bibfield
   {title} {\bibinfo {title} {Canonical typicality for other ensembles than
  micro-canonical},\ }\href {https://doi.org/10.1007/s00023-024-01466-7}
  {\bibfield  {journal} {\bibinfo  {journal} {Annales Henri Poincar\'e}\
  }\textbf {\bibinfo {volume} {26}},\ \bibinfo {pages} {1477} (\bibinfo {year}
  {2024})}\BibitemShut {NoStop}%
\end{thebibliography}%

\end{document}